\documentclass[a4paper,fleqn]{cas-sc}

\usepackage[sort&compress,square,numbers]{natbib}
\usepackage[capitalize]{cleveref}
\usepackage{aliascnt}
\usepackage{braket}
\usepackage{amsthm}
\usepackage{algorithm, algorithmic}
\def\tsc#1{\csdef{#1}{\textsc{\lowercase{#1}}\xspace}}
\tsc{WGM}
\tsc{QE}
\tsc{EP}
\tsc{PMS}
\tsc{BEC}
\tsc{DE}
\newcommand{\Or}{\mathcal{O}}

\DeclareMathOperator{\polylog}{polylog}

\newtheorem{theorem}{Theorem}
\crefname{theorem}{Theorem}{Theorems}
\Crefname{theorem}{Theorem}{Theorems}

\newaliascnt{lemma}{theorem}
\newtheorem{lemma}[lemma]{Lemma}
\aliascntresetthe{lemma}
\crefname{lemma}{Lemma}{Lemmas}
\Crefname{lemma}{Lemma}{Lemmas}

\newaliascnt{proposition}{theorem}
\newtheorem{proposition}[proposition]{Proposition}
\aliascntresetthe{proposition}
\crefname{proposition}{Proposition}{Propositions}
\Crefname{proposition}{Proposition}{Propositions}

\newaliascnt{corollary}{theorem}
\newtheorem{corollary}[corollary]{Corollary}
\aliascntresetthe{corollary}
\crefname{corollary}{Corollary}{Corollaries}
\Crefname{corollary}{Corollary}{Corollaries}

\newdefinition{rmk}{Remark}

\ExplSyntaxOn
\cs_set:Npn \__first_footerline:
{
  \group_begin:
  \small
  \sffamily
  \ifnum\theblind>0\relax
  \else
    \__short_authors: :~
  \fi
  { \rmfamily \itshape Preprint }
  \group_end:
}
\ExplSyntaxOff

\begin{document}
\let\WriteBookmarks\relax
\def\floatpagepagefraction{1}
\def\textpagefraction{.001}
\let\printorcid\relax

\shorttitle{Circuit Depth Reduction of One-Ancilla QDE Solver}
\shortauthors{D. Fang and J. Park}

\title[mode = title]{Circuit Depth Reduction of One-Ancilla Quantum Differential Equation Solver via Extrapolation}

\author[1]{Di Fang}

\affiliation[1]{organization={Department of Mathematics and Duke Quantum Center, Duke University},
                city={Durham},
                state={NC},
                country={USA}}

\author[2]{Justin Park}

\affiliation[2]{organization={Department of Mathematics and Department of Computer Science, Duke University},
                city={Durham},
                state={NC},
                country={USA}}

\begin{abstract}
Solving linear differential equations is a fundamental task in scientific computing and an important primitive for quantum computing.
A recent one-ancilla quantum differential equation solver provides a hardware-friendly and locality-preserving approach with provable performance guarantees, making it highly suitable for the early fault-tolerant and near-term regimes. Its simple circuit structure comes with a natural trade-off: the maximum single-run circuit depth scales as $\Or
(1/\epsilon)$ in the target accuracy $\epsilon$. In this work, we reduce this depth by combining the solver with classical 
step-size postprocessing.
By running the one-ancilla solver at a logarithmic number of finite time step sizes and using classical post-processing to cancel leading discretization errors, we reduce the maximum single-run circuit depth to $\Or(\polylog(1/\epsilon))$ without adding quantum ancillae or sacrificing locality. Technically, extending extrapolation ideas beyond Hamiltonian and Lindbladian dynamics requires regularity estimates for observable maps under nonunitary evolution, which we obtain through a holomorphic extension of the adjoint evolution. 
Numerical experiments on the Hatano-Nelson model (ODE) and the convection-diffusion equation (PDE) demonstrate the effectiveness of the approach.
\end{abstract}

\begin{keywords}
quantum differential equations \sep quantum algorithms \sep  one-ancilla \sep qubit-efficient \sep extrapolation \sep classical postprocessing \sep circuit depth
\end{keywords}

\maketitle

\section{Introduction}

Differential equations are among the most important mathematical models in science and engineering, with wide-ranging applications across many disciplines. Designing quantum algorithms for differential equations has been an important direction in quantum computing. In particular, time-dependent linear differential equations have received significant attention, especially in dissipative and non-unitary regimes where the dynamics cannot be directly treated as standard Hamiltonian simulation. Significant developments in quantum algorithms have been made for this task. While we do not attempt to give an exhaustive survey here, representative state-of-the-art approaches include methods based on quantum linear systems algorithms with history-state encodings~\cite{Berry2014,BerryChildsOstranderWang2017,ChildsLiu2020,Krovi2022,BerryCosta2024,AnOnwuntaYang2024}, time-marching methods~\cite{FangLinTong2023,YangOnwuntaAn2025}, linear combinations of Hamiltonian simulation~\cite{AnLiuLin2023linear,AnChildsLinYing2024laplace,AnChildsLin2023quantum,NovikauJoseph2025,PocrnicJohnsonKatabarwaWiebe2025,KharaziAlkadriMandadapuWhaley2026}, unitary-dilation~\cite{Li2025,WuLi2026,JinLiuYu2024,JinLiuMaPengYu2025}, Lindbladian embeddings~\cite{ShangGuoAnZhao2024}, quantum eigenvalue processing~\cite{LowSu2024}, and other fault-tolerant primitives. We remark that this dissipative regime is a standard setting shared across state-of-the-art quantum algorithmic frameworks for general linear differential equations, rather than an additional assumption specific to the present work or to a particular line of work.

As in many areas of quantum algorithms, existing quantum differential equation solvers can be broadly viewed from two complementary perspectives. On the one hand, near-term algorithms are designed to be more compatible with current or early fault-tolerant hardware, often avoiding expensive coherent control, large ancilla registers, or deep circuit constructions. Variational and other NISQ approaches fall into this category, but rigorous performance guarantees are often difficult to establish in general. On the other hand, fully fault-tolerant quantum algorithms are developed with provable guarantees and can achieve excellent, sometimes optimal and near-optimal asymptotic scaling. However, these algorithms typically rely on advanced fully fault-tolerant primitives such as block-encodings, quantum linear systems algorithms (QLSA), quantum singular value transformation (QSVT), linear-combination-of-unitaries (LCU) techniques, compression gadget, or coherent control over many time steps. These ingredients can require many ancilla qubits and complicated circuit structures, making them less friendly to realize on near-term or early fault-tolerant devices.

A recent algorithm in~\cite{FangGeorgeTong2025} provides a different point in this landscape. It gives a one-ancilla quantum algorithm for linear differential equations of the dissipative form considered in this work. The algorithm only uses local Hamiltonian simulation and mid-circuit measurements, preserves locality, and avoids advanced fault-tolerant subroutines such as LCU, QSVT, QLSA, or compression gadget. In this sense, it is both near-term friendly and mathematically rigorous: it is simple enough to be implemented with an optimal one-ancilla qubit, yet still comes with a provable error analysis. This makes it an attractive candidate for early demonstrations of quantum algorithms for non-unitary dynamics and differential equations.
This favorable hardware structure, however, comes with a tradeoff. The algorithm does not aim to achieve the optimal asymptotic scaling of fully fault-tolerant quantum differential equation solvers. In particular, its circuit depth has an $\Or(1/\epsilon)$ dependence on the target accuracy $\epsilon$, reflecting the first-order nature of the underlying quantum algorithm.

Quantum resources remain expensive, even in the early fault-tolerant regime. It is therefore important to use classical computation and processing whenever possible to reduce the amount of coherent quantum evolution required. A particularly successful strategy in this direction is to use classical post-processing, and in particular extrapolation-based algorithmic error mitigation, to reduce quantum resources without changing the underlying quantum simulation primitive. Pioneering works in this direction include~\cite{RendonWatkinsWiebe2024,WatsonWatkins2024,Watson2024}, which explored this idea for Hamiltonian simulation, including circuit-depth reduction for Trotter product formulas and qDRIFT, as well as recent works on open quantum systems governed by Lindbladian dynamics~\cite{MohammadipourLi2025,MohammadipourLi2026}. The step-size extrapolation viewpoint is also closely related to zero-noise extrapolation in quantum error mitigation, where data obtained at several effective noise strengths are classically extrapolated to the zero-noise limit~\cite{TemmeBravyiGambetta2017,LiBenjamin2017,EndoBenjaminLi2018,CaiEtAl2023}.

In this work, we bring this perspective to the one-ancilla quantum algorithm for linear differential equations. The dynamics considered here are not necessarily unitary or completely positive and trace-preserving, as in Hamiltonian or Lindbladian cases. We ask whether extrapolation-based classical post-processing can provably reduce the circuit depth of this near-term-friendly differential equation solver, without changing the underlying quantum circuit or introducing additional ancilla qubits. We answer this question affirmatively for unnormalized observable estimation. For each time step size, we run the one-ancilla circuit to the final time and estimate an unnormalized observable from raw measurement data. We then combine the estimates from multiple step sizes using either Richardson extrapolation or Chebyshev interpolation, thereby extrapolating the observable to the zero-step-size limit. In this way, tools from classical numerical analysis and approximation theory are used to extract a more accurate observable estimate from a family of coarse, hardware-friendly quantum circuits.

Technically, there are two important differences between the present setting and the previously studied Hamiltonian and Lindblad settings. The first difference concerns the regularity estimates for the step-size-dependent observable map, the crucial step for this kind of postprocessing analysis. In Hamiltonian simulation, observable dynamics can be described naturally in the Heisenberg picture by unitary conjugation. In a trace-preserving Lindblad simulation, there is likewise a natural dual semigroup acting on observables. These structures allow one to analyze the step-size dependence of observables directly. In the present postselected ODE setting, however, the successful evolution is described by a non-unitary map acting on state vectors, and the target observable has the form of a quadratic expression involving both the forward map and its Hilbert-space adjoint. When the one-step map is analytically continued to a complex step size, taking the Hilbert-space adjoint becomes anti-holomorphic rather than holomorphic. To obtain derivative bounds for extrapolation, we therefore introduce a holomorphic adjoint lift, which agrees with the usual adjoint on real step sizes but remains holomorphic in the complexified step-size variable. This is the key analytic ingredient that allows us to prove the regularity estimates needed for Richardson extrapolation and Chebyshev interpolation.

The second difference concerns the sampling and observable-estimation model. In Hamiltonian and trace-preserving Lindblad simulation, the simulated object is itself a normalized quantum state or density matrix, and each circuit execution contributes a bounded observable sample from that object. For general non-unitary differential equations, by contrast, a successful quantum output is a normalized state that encodes the solution rather than the unnormalized solution itself, while the squared norm of the unnormalized solution is encoded in the postselection success probability. In the present postselected ODE algorithm, each raw circuit execution includes mid-circuit postselection measurements as part of its built-in design; if one of these measurements returns an undesired outcome, the trajectory is rejected by the postselection procedure. In the original state-preparation view, such rejected trajectories are discarded by design. For unnormalized observable estimation, however, we retain the statistical contribution of these rejection events by recording them as zero-valued samples, while postselection-accepted trajectories contribute the measured observable outcome. This zero-on-rejection estimator is unbiased for the unnormalized observable associated with the time-discretized solution. Consequently, the success probability is automatically incorporated into the raw estimator, and no separate estimation or extrapolation of the success probability is required.

Our main contribution is a complete extrapolation framework for this setting, which reduces the maximum quantum circuit depth from an $\Or(1/\epsilon)$ dependence on the target accuracy $\epsilon$ to an $\Or(\polylog(1/\epsilon))$ dependence (\cref{thm:richardsonResources}, \cref{thm:chebyshevResources}). We prove derivative estimates for the unnormalized observable as a function of the time step size (\cref{thm:derivBounds}), establish deterministic bias bounds for general linear post-processing (\cref{thm:biasBound}), and then specialize the framework to Richardson extrapolation and Chebyshev interpolation, deriving for each the sample complexity, circuit depth, and gate count (\cref{thm:richardsonResources}, \cref{thm:chebyshevResources}). 
We also give an upper bound on the sample complexity of the zero-on-failure estimator via concentration inequalities (\cref{thm:samplingCost}), together with a matching lower bound for the independent-shot sampling model (\cref{thm:samplingLowerBound}), showing the resource count to be optimal. Under a geometrically local assumption on the generator, these bounds specialize to an explicit maximum circuit depth in terms of the system size and evolution time (\cref{cor:maxDepth}), and to fully explicit gate count scalings (\cref{cor:richFinal}, \cref{cor:chebFinal}).
In particular, for Chebyshev interpolation, the maximum circuit depth has only logarithmic-squared dependence on the target accuracy for observable estimation, replacing the inverse-accuracy dependence of the direct first-order method; Richardson extrapolation achieves an even smaller logarithmic dependence (\cref{cor:maxDepth}). However, Richardson extrapolation's weights grow polynomially in $1/\epsilon$, rather than Chebyshev's polylogarithmic growth (\cref{cor:weightBounds}). In geometrically local settings, this gives a depth reduction while preserving the one-ancilla and locality-preserving structure of the original algorithm.
We also discuss numerical tests that illustrate the theory. One class of examples comes from convection-diffusion equations, where the differential operator naturally decomposes into anti-self-adjoint and dissipative parts and where the jump operators are simple in either the grid basis or the Fourier basis. Another class comes from the interacting Hatano-Nelson model, a non-Hermitian many-body system exhibiting the non-Hermitian skin effect. These examples demonstrate how the extrapolation framework can be applied both to PDE-inspired problems and to many-body non-Hermitian dynamics.

The rest of the paper is organized as follows. In~\cref{sec:algorithm}, we review the one-ancilla quantum differential equation solver and present the hybrid quantum-classical extrapolation algorithm. \cref{sec:analysis} proves the local error bounds, the backward-error derivative estimates, and the deterministic extrapolation error bounds. In~\cref{sec:resources}, we analyze the sampling concentration and quantum resource requirements. \cref{sec:numerics} presents numerical experiments for convection-diffusion equations and the interacting Hatano-Nelson model. We conclude in~\cref{sec:conclusion}.

\section{Algorithm Overview and Classical Postprocessing}\label{sec:algorithm}
This section sets up the algorithmic framework used throughout the paper. We first recall the one-ancilla quantum differential equation solver, describe how to classically post-process the measurement outcomes, and then provide a pseudocode for the algorithm.

\subsection{Algorithm}\label{subsec:algorithm}
Throughout, we work in a finite dimension and consider the autonomous dissipative linear ODE 
\begin{equation}
    \frac{d}{dt}\ket{\psi(t)}=A\ket{\psi(t)},\quad A=-iH-\sum_{j=1}^{J}D_j, \quad D_j=L_j^\dagger L_j\succeq 0,
\end{equation}
with Hermitian $H=H^\dagger$ and $J$ dissipative terms. The exact solution is $\ket{\psi(t)}=e^{tA}\ket{\psi(0)}$. We assume $\ket{\psi_0}=\ket{\psi(0)}$ is a normalized initial state, $\|\ket{\psi_0}\|=1$.

We recall the one-ancilla algorithm construction of ~\cite{FangGeorgeTong2025}, establishing notation for the rest of the paper. For each dissipative term $D_j$, define the Hermitian dilation
\begin{equation}
    G_j=\begin{pmatrix}
        0 & L_j^\dagger \\
        L_j & 0
    \end{pmatrix}.
\end{equation}
Post-selecting the ancilla on outcome $\ket{0}$ after evolving under $G_j$ for time $\sqrt{2s}$ gives the system-only operator
\begin{equation}
    C_j(s) := (\bra{0} \otimes I) e^{i\sqrt{2s}G_j}(\ket{0}\otimes I)=I-sD_j+O(s^2),
\end{equation}
and the leading order term reproduces the dissipative contraction $e^{-sD_j}$. The closed form of this postselected block is $C_j(s)=\cos(\sqrt{2s}D_j^{1/2})$, which is proved in \cref{lem:cosineBlock}. Composing the $J$ postselected blocks with one Hamiltonian simulation step approximates the evolution $e^{sA}$ over a single time step of size $s$:
\begin{equation}
    K_s:=e^{-isH}C_J(s)\cdots C_1(s).
\end{equation}
For a realizable step size $s=T/R$, $R\in\mathbb{N}$, the successful state after $R$ steps is 
\begin{equation}
    \ket{u_s(T)}:=K_s^R\ket{\psi_0},
\end{equation}
with a success probability $p(s):=\|\ket{u_s(T)}\|^2$. As $s\rightarrow 0$, $\ket{u_s(T)}$ approximates $\ket{\psi(T)}$.

For an observable $O=O^\dagger$, we aim to estimate the observable signal 
\begin{equation}
    g_O(0):= \bra{\psi(T)}O\ket{\psi(T)}=\bra{\psi_0}(e^{TA})^{\dagger}O(e^{TA})\ket{\psi_0}.
\end{equation}
The data produced by a single circuit run with a step size $s$ is 
\begin{equation}
    g_O(s):= \bra{u_s(T)}O\ket{u_s(T)},
\end{equation}
and the extrapolation algorithm uses $g_O(s_\ell)$ for several distinct step sizes $s_\ell$ to estimate $g_O(0)$. If the desired target is instead the observable's value conditional on successful postselection, one could either extrapolate $f_O(s)=\frac{g_O(s)}{p(s)}$ directly from successful postselections, or extrapolate $g_O(s)$ and $p(s)$ separately and then take a ratio.

\subsection{The Extrapolation Method}\label{subsec:extrapolation}
\cref{subsec:localEB} shows that achieving accuracy $\epsilon$ directly, by shrinking the step size $s$ until the algorithm's own discretization error is small enough, forces a circuit depth scaling as $\mathcal{O}(1/\epsilon)$. The rest of this paper shows that this cost can be reduced to $\mathcal{O}(\mathrm{polylog}(1/\epsilon))$ by running the circuit at several step sizes that remain coarse, and combining the results through classical post-processing.

For a fixed step size $s$, the circuit is run repeatedly. Each run returns the random variable 
\begin{equation}
    Y_s:=\begin{cases}
        \text{the measured value of $O$ on the successful branch}, &\text{if every postselection succeeds}\\
        0, &\text{otherwise}
    \end{cases},
\end{equation}
recording zero rather than discarding the run when a postselection fails, so that $\mathbb{E}[Y_s]=g_O(s)$ (\cref{lem:unbiasedness}). Averaging $Y_s$ over many repetitions gives an estimate of $g_O(s)$. \cref{subsec:samplingModel} bounds the number of repetitions needed for a target accuracy.

Running this procedure at several distinct step sizes $s_1,\ldots, s_m$, each one realizable and each coarser than direct estimation at the target accuracy would require, gives a sequence of estimates of $g_O(s_1),\ldots, g_O(s_m)$. These are combined into a single estimate of $g_O(0)$ using a weighted average $\sum_\ell a_\ell g_O(s_\ell)$, with weights chosen so that the combination agrees with $g_O(0)$ up to an error controlled by how smoothly $g_O$ depends on $s$.

Richardson extrapolation and Chebyshev interpolation, treated in \cref{subsec:richardson} and \cref{subsec:chebyshev}, are two choices of step sizes and weights for this same procedure. Richardson extrapolation places step sizes at harmonically shrinking values, while Chebyshev interpolation clusters them according to the roots of Chebyshev polynomials. The two trade off differently: Richardson achieves a smaller maximum circuit depth for a given accuracy target, while Chebyshev's weights are better conditioned, which affects the number of circuit repetitions needed rather than the depth itself.

\subsection{Pseudocode}\label{subsec:pseudocode}
The quantum sampling routine is the same for Richardson and Chebyshev post-processing. The only input that changes is the list of nodes and weights $\{(s_\ell,a_\ell)\}_{\ell=1}^{m}$. The pseudocode for the algorithm is shown in \cref{alg:observableEstimation}. 

\begin{algorithm}[h]
    \caption{Extrapolation-Based Observable Estimation}\label{alg:observableEstimation}
    \begin{algorithmic}[1]
        \STATE Input: final time $T$, observable $O$, target accuracy $\epsilon$, failure probability $\delta$, realizable step sizes $s_\ell=T/R_\ell$, and extrapolation weights $a_\ell$, $\ell=1,\ldots, m$.
        \STATE Compute $\Lambda_m=\sum_{\ell=1}^{m}|a_\ell|$ and set $\eta=\epsilon/(2\Lambda_m)$.
        \FOR{$\ell=1,\ldots, m$}
            \FOR{$w=1,\ldots, N$}
                \STATE Run the circuit with step size $s_\ell$ and $R_\ell=T/s_\ell$ steps to final time $T$.
                \IF{All postselections succeed}
                    \STATE Measure $O$ and record the measurement outcome.
                \ELSE
                    \STATE Record the value zero.
                \ENDIF
            \ENDFOR
            \STATE Average recorded $N$ values to obtain $\hat{g}_O(s_\ell)$, which estimates $g_O(s_\ell)$ to accuracy $\eta$ with failure probability at most $\delta/m$. 
        \ENDFOR
        \STATE Output $\hat{g}_O(0)=\sum_{\ell=1}^{m}a_\ell\hat{g}_O(s_\ell)$.
    \end{algorithmic}
\end{algorithm}

\section{Numerical Analysis}\label{sec:analysis}
This section proves the analytic estimates underlying the postprocessing procedure. We begin with local error bounds for the one-ancilla solver, and then use holomorphic adjoint arguments to obtain derivative estimates for the step-size-dependent observable map. These derivative estimates are the key ingredient in the extrapolation analysis and lead to the deterministic extrapolation bias bounds.

\subsection{Local Error Bounds}\label{subsec:localEB}
We bound the one-step defect $K_s-e^{sA}$ and its accumulation over $R$ steps to the final time $T$. Write 
\begin{equation}
    B_0:=-iH,\quad B_j=-D_j\;\text{for}\; 1\le j\le J, \quad A=\sum_{\nu=0}^{J}B_{\nu},
\end{equation}
and define
\begin{equation}
    S:=\sum_{\nu=0}^{J}\|B_\nu\|=\|H\|+\sum_{j=1}^{J}\|D_j\|,
\end{equation}
\begin{equation}
    \Gamma_{\mathrm{PF}}:=\sum_{0\le \mu<\nu\le J}\|[B_\nu,B_\mu]\|, \quad \Gamma_{\mathrm{cos}}:=\frac{2}{3}\sum_{j=1}^{J}\|D_j^2\|, \quad \Gamma_{\mathrm{loc}}:=\frac{1}{2}\Gamma_{\mathrm{PF}}+\Gamma_{\mathrm{cos}}.
\end{equation}
$S$ measures the overall strength of the generator. The two error scales have different origins: $\Gamma_{\mathrm{PF}}$ comes from the noncommutativity of $J+1$ pieces $B_\nu$, while $\Gamma_{\mathrm{cos}}$ is the truncation error of each postselected cosine block. $\Gamma_{\mathrm{loc}}$ combines them into the single scale used in the bounds below.

Before using these constants, we record a closed-form expression for $C_j(s)$ that the remainder of this subsection relies on.
\begin{lemma}[Cosine Block]\label{lem:cosineBlock}
    For every real $s\ge 0$ and each $j=1,\ldots, J$,
    \begin{equation}
        C_j(s)=\cos(\sqrt{2s}D_j^{1/2}).
    \end{equation}
\end{lemma}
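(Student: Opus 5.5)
The plan is to compute $e^{i\theta G_j}$ with $\theta=\sqrt{2s}$ from its power series, using the block structure of $G_j$. A direct multiplication gives
\begin{equation}
    G_j^2=\begin{pmatrix} L_j^\dagger L_j & 0\\ 0 & L_jL_j^\dagger\end{pmatrix}=\begin{pmatrix} D_j & 0\\ 0 & L_jL_j^\dagger\end{pmatrix}.
\end{equation}
By induction, $G_j^{2k}=\mathrm{diag}(D_j^k,(L_jL_j^\dagger)^k)$, and $G_j^{2k+1}=G_j\,G_j^{2k}$ is block off-diagonal. Thus the even powers are block diagonal and the odd powers are block off-diagonal.

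Next I split the exponential series into even and odd parts:
\begin{equation}
    e^{i\theta G_j}=\sum_{k\ge 0}\frac{(-1)^k\theta^{2k}}{(2k)!}G_j^{2k}+i\sum_{k\ge0}\frac{(-1)^k\theta^{2k+1}}{(2k+1)!}G_j^{2k+1}.
\end{equation}
The odd part has zero top-left block, so it does not contribute to $(\bra{0}\otimes I)(\cdot)(\ket{0}\otimes I)$. The top-left block of the even part is
\begin{equation}
    \sum_{k\ge0}\frac{(-1)^k\theta^{2k}}{(2k)!}D_j^k.
\end{equation}
This series converges absolutely in operator norm because we are in finite dimension.

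It remains to identify this series with $\cos(\theta D_j^{1/2})$. Since $D_j=L_j^\dagger L_j\succeq 0$, it has a unique positive semidefinite square root $D_j^{1/2}$, defined through the spectral decomposition $D_j=\sum_\lambda \lambda P_\lambda$ with $\lambda\ge0$. On each eigenspace the series equals $\sum_k (-1)^k(\theta\sqrt\lambda)^{2k}/(2k)!=\cos(\theta\sqrt\lambda)$. Hence the series equals $\cos(\theta D_j^{1/2})$ by the functional calculus, and setting $\theta=\sqrt{2s}$ gives $C_j(s)=\cos(\sqrt{2s}D_j^{1/2})$. As a consistency check, expanding gives $I-sD_j+O(s^2)$, as claimed earlier.

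There is no real obstacle here. The only point needing care is the block identity for $G_j^{2k}$, which must be checked against the ordering convention: the ancilla is the first tensor factor and $\ket{0}$ selects the top-left block containing $L_j^\dagger L_j$ rather than $L_jL_j^\dagger$. Also, the result depends only on $D_j$ and not on the choice of $L_j$, since $\cos(\theta\sqrt{x})$ is an entire function of $x$.
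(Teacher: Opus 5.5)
Your proposal is correct and follows essentially the same route as the paper: block-diagonal even powers of $G_j$ with top-left block $D_j^k$, odd powers killed by the ancilla projection, and the surviving series identified as $\cos(\sqrt{2s}D_j^{1/2})$. Your spectral-calculus justification of that final identification is a detail the paper leaves implicit.
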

\begin{proof}
    Since $G_j$ is purely off-diagonal in the ancilla block, $G_j^2$ is a block-diagonal matrix:
    \begin{equation}
        G_j^2
        =\begin{pmatrix}
            L_j^\dagger L_j & 0\\
            0 & L_jL_j^\dagger
        \end{pmatrix}
        =\begin{pmatrix}
            D_j & 0\\
            0 & L_jL_j^\dagger
        \end{pmatrix}.
    \end{equation}
    So, $(\bra{0}\otimes I)G_j^{2k}(\ket{0}\otimes I)=D_j^k$ while $(\bra{0}\otimes I)G_j^{2k+1}(\ket{0}\otimes I)=0$. 
    Thus projecting the power series for $e^{i\sqrt{2s}G_j}$ onto the ancilla state $\ket{0}$ kills every odd term and leaves
    \begin{equation}
        C_j(s)=(\bra{0}\otimes I)e^{i\sqrt{2s}G_j}(\ket{0}\otimes I)=\sum_{k=0}^{\infty}\frac{(-1)^k(2s)^k}{(2k)!}D_j^k=\cos(\sqrt{2s}D_j^{1/2}).
    \end{equation}
\end{proof}

This closed form allows us to read off the operator norm of $C_j(s)$ directly from the spectrum of $D_j$. 
\begin{lemma}[Contractivity]\label{lem:contractivity}
    For every real $s\ge0$,
    \begin{equation}
        \|e^{sA}\|\le 1,\quad \|C_j(s)\|\le 1, \quad \|K_s\|\le 1.
    \end{equation}
\end{lemma}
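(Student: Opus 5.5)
We prove the three bounds separately; each follows from an elementary spectral or dissipativity argument.

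\emph{Bound on $e^{sA}$.} We use the standard energy estimate. Let $\ket{\phi(t)}=e^{tA}\ket{\phi_0}$ for an arbitrary vector $\ket{\phi_0}$. Then
\begin{equation}
    \frac{d}{dt}\|\ket{\phi(t)}\|^2=\bra{\phi(t)}(A+A^\dagger)\ket{\phi(t)}=-2\sum_{j=1}^{J}\bra{\phi(t)}D_j\ket{\phi(t)}\le 0,
\end{equation}
because $-iH$ is anti-Hermitian and cancels in $A+A^\dagger=-2\sum_j D_j$, and each $D_j=L_j^\dagger L_j\succeq 0$. Thus $\|e^{sA}\ket{\phi_0}\|\le\|\ket{\phi_0}\|$ for all $s\ge0$. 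Taking the supremum over unit vectors gives $\|e^{sA}\|\le 1$.

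\emph{Bound on $C_j(s)$.} By \cref{lem:cosineBlock}, $C_j(s)=\cos(\sqrt{2s}D_j^{1/2})$. Since $D_j\succeq0$ is Hermitian, $D_j^{1/2}$ is Hermitian with nonnegative eigenvalues $\lambda^{1/2}$. The functional calculus then makes $C_j(s)$ Hermitian with eigenvalues $\cos(\sqrt{2s\lambda})\in[-1,1]$, so its operator norm, the maximal eigenvalue modulus, is at most $1$. A second argument avoids the closed form: $C_j(s)$ is a compression $(\bra{0}\otimes I)U(\ket{0}\otimes I)$ of the unitary $U=e^{i\sqrt{2s}G_j}$, where $G_j$ is Hermitian. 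The isometry $\ket{0}\otimes I$ and the co-isometry $\bra{0}\otimes I$ both have norm $1$, so submultiplicativity gives $\|C_j(s)\|\le\|U\|=1$.

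\emph{Bound on $K_s$.} By definition $K_s=e^{-isH}C_J(s)\cdots C_1(s)$. Here $e^{-isH}$ is unitary with norm $1$, and each factor $C_j(s)$ has norm at most $1$ by the previous step. Submultiplicativity of the operator norm gives $\|K_s\|\le 1$.

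There is no real obstacle. The only point requiring care is the first bound: because the Hamiltonian part and the dissipative parts need not commute, one cannot split $e^{sA}$ into a product of separately bounded factors. The energy estimate sidesteps this, since it uses only that the Hermitian part of $A$ is negative semidefinite. Equivalently, one can cite the Lumer--Phillips-type bound $\|e^{sA}\|\le e^{s\mu(A)}$, where the logarithmic norm satisfies $\mu(A)=\lambda_{\max}\bigl((A+A^\dagger)/2\bigr)\le0$.
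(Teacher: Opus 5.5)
Your proof is correct and follows essentially the same route as the paper. The paper uses the energy estimate with $A+A^\dagger=-2\sum_j D_j\preceq 0$, the spectral calculus for $C_j(s)=\cos(\sqrt{2s}D_j^{1/2})$, and submultiplicativity for $K_s$, exactly as you do. Your additional argument that $C_j(s)$ is a compression of the unitary $e^{i\sqrt{2s}G_j}$ is also valid, and it has the small advantage of not needing \cref{lem:cosineBlock}.
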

\begin{proof}
    For any vector $\ket{v}$, the function $h(s)=\|e^{sA}\ket{v}\|^2$ satisfies 
    \begin{equation}
        h'(s)=(e^{sA}\ket{v})^\dagger(A+A^\dagger)(e^{sA}\ket{v}) \le 0,
    \end{equation}
    because $A+A^\dagger=-2\sum_{j=1}^{J}D_j\preceq0$ is negative semidefinite. This means that $h(s)\le h(0)$, i.e. $\|e^{sA}\ket{v}\|\le \|\ket{v}\|$. Therefore, $\|e^{sA}\|\le 1$ for all real $s\ge 0$.

    Moreover, by the normality of a positive semidefinite matrix $D_j$, the spectral theorem implies that $D_j=U\mathrm{diag}(\lambda_1,\ldots, \lambda_n)U^\dagger$ for some unitary \(U\) and real nonnegative eigenvalues \(\lambda_i\ge 0\). So,
    \begin{equation}
        C_j(s)=U\mathrm{diag}(\cos(\sqrt{2s\lambda_1}),\ldots,\cos(\sqrt{2s\lambda_n}))U^\dagger,
    \end{equation}
    and thus 
    \begin{equation}
        \|C_j(s)\|\le \max_i|\cos(\sqrt{2s\lambda_i})|\le 1.
    \end{equation}
    
    Lastly, since $e^{-isH}$ is unitary, $\|K_s\|\le \|e^{-isH}\|\cdot \prod_{j=1}^{J}\|C_j(s)\|\le 1$.
\end{proof}

Since $C_j(s)$ and $e^{-sD_j}$ are both contractions, we next bound the distance between them directly.
\begin{lemma}[Cosine Block Defect]\label{lem:cosineDefect}
    For every real $s\ge 0$, 
    \begin{equation}
        \|C_j(s)-e^{-sD_j}\|\le \frac{2}{3}s^2\|D_j^2\|.
    \end{equation}
\end{lemma}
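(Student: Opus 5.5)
By \cref{lem:cosineBlock} and the spectral theorem, $C_j(s)$ and $e^{-sD_j}$ are simultaneously diagonalized by the unitary $U$ that diagonalizes $D_j=U\,\mathrm{diag}(\lambda_1,\ldots,\lambda_n)U^\dagger$ with $\lambda_i\ge 0$. Hence
\[
C_j(s)-e^{-sD_j}=U\,\mathrm{diag}\bigl(\phi(s\lambda_1),\ldots,\phi(s\lambda_n)\bigr)U^\dagger,
\qquad
\phi(x):=\cos(\sqrt{2x})-e^{-x}.
\]
Therefore $\|C_j(s)-e^{-sD_j}\|=\max_i|\phi(s\lambda_i)|$. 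Since $\|D_j^2\|=\max_i\lambda_i^2$, it suffices to prove the scalar inequality $|\phi(x)|\le \tfrac{2}{3}x^2$ for all real $x\ge 0$. Applying it with $x=s\lambda_i$ and taking the maximum over $i$ gives the lemma.

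For the scalar inequality, expand both functions in power series. We have $\cos(\sqrt{2x})=\sum_k (-1)^k(2x)^k/(2k)!=1-x+\tfrac{x^2}{6}-\cdots$ and $e^{-x}=1-x+\tfrac{x^2}{2}-\cdots$. The constant and linear terms cancel, so $\phi(x)=-\tfrac{x^2}{3}+O(x^3)$.

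To get a global bound rather than only an asymptotic one, write each function as its degree-one Taylor polynomial $1-x$ plus a remainder and bound the two remainders separately.
\begin{itemize}
\item For the exponential, Taylor's theorem with Lagrange remainder gives $0\le e^{-x}-(1-x)\le \tfrac{x^2}{2}$ for $x\ge0$.
\item For the cosine, set $y=\sqrt{2x}$ and use the standard alternating-series bounds $1-\tfrac{y^2}{2}\le\cos y\le 1-\tfrac{y^2}{2}+\tfrac{y^4}{24}$. These hold for all real $y$ and follow by integrating $\sin y\le y$ and then $\sin y\ge y-y^3/6$ for $y\ge 0$. This yields $0\le\cos(\sqrt{2x})-(1-x)\le\tfrac{4x^2}{24}=\tfrac{x^2}{6}$.
\end{itemize}

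Subtracting, $\phi(x)=r_{\cos}(x)-r_{\exp}(x)$ is a difference of two numbers lying in $[0,\tfrac{x^2}{6}]$ and $[0,\tfrac{x^2}{2}]$ respectively. Hence
\[
|\phi(x)|\le\max\Bigl(\tfrac{x^2}{6},\,\tfrac{x^2}{2}\Bigr)=\tfrac{x^2}{2}\le\tfrac{2}{3}x^2.
\]
The crude triangle-inequality bound $\tfrac{x^2}{6}+\tfrac{x^2}{2}=\tfrac{2}{3}x^2$ already gives the stated constant, which is presumably how the authors obtained it, so the same-sign observation is not needed.

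The only delicate point is making sure the cosine remainder bound holds globally in $x$ and not just near $0$. The alternating-series lower and upper bounds for $\cos$ handle this for all real $y$, and the conversion $y^4=4x^2$ is exact, so no large-$x$ case split is required.
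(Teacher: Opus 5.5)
Your proof is correct and takes essentially the same route as the paper. Both arguments split $C_j(s)$ and $e^{-sD_j}$ into the common first-order part $I-sD_j$ plus second-order remainders bounded by $\frac{s^2}{6}\|D_j^2\|$ and $\frac{s^2}{2}\|D_j^2\|$, and then apply the triangle inequality. The paper does this at the operator level with integral-form Taylor remainders, using $\|\cos(t\sqrt{2s}D_j^{1/2})\|\le 1$ and $\|e^{-tsD_j}\|\le 1$. You diagonalize first and work with scalars, which also lets you see that both remainders are nonnegative and sharpen the constant to $\frac{1}{2}$.
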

\begin{proof}
    From Taylor expansion of $C_j(s)$ with integral remainder, 
    \begin{equation}\label{eq:realCosTaylor}
        C_j(s)=I-sD_j+R_C(s),\quad R_C(s)=\frac{2}{3}s^2D_j^2\int_{0}^{1}(1-t)^3\cos(t\sqrt{2s}D_j^{1/2})\,dt.
    \end{equation}
    Similar to the proof of \cref{lem:contractivity}, $\|\cos(t\sqrt{2s}D_j^{1/2})\|\le 1$ for $t\in[0,1], s\ge 0$, so 
    \begin{equation}
        \|R_C(s)\|\le \frac{2}{3}s^2\|D_j^2\|\int_{0}^{1}(1-t)^3\,dt=\frac{s^2}{6}\|D_j^2\|.
    \end{equation}

    Similarly, for $e^{-sD_j}$, the Taylor expansion is 
    \begin{equation}\label{eq:realExpTaylor}
        e^{-sD_j}=I-sD_j+R_E(s), \quad R_E(s)=s^2D_j^2\int_0^1(1-t)e^{-tsD_j}\,dt.
    \end{equation}
    Since $D_j\succeq0$, $\|e^{-tsD_j}\|\le 1$, and hence
    \begin{equation}
        \|R_E(s)\|\le s^2\|D_j^2\|\int_{0}^{1}(1-t)\,dt=\frac{s^2}{2}\|D_j^2\|.
    \end{equation}

    Then, by the triangle inequality,
    \begin{equation}
        \|C_j(s)-e^{-sD_j}\|\le \|R_C(s)\|+\|R_E(s)\|\le \frac{2}{3}s^2\|D_j^2\|.
    \end{equation}
\end{proof}

Combined with the contraction bounds above, this per-block defect yields a bound on the error of the full one-step map $K_s$ against $e^{sA}$.
\begin{proposition}[One-Step Defect]\label{prop:stepDefect}
    For every real $s\ge 0$, 
    \begin{equation}
        \|K_s-e^{sA}\|\le s^2e^{sS}\Gamma_{\mathrm{loc}}.
    \end{equation}
    In particular, if $sS\le \log 2$, then $\|K_s-e^{sA}\|\le 2s^2\Gamma_{\mathrm{loc}}$
\end{proposition}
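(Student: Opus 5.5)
We split the defect with the intermediate product-formula map
\begin{equation*}
    P_s:=e^{sB_0}e^{sB_J}\cdots e^{sB_1}=e^{-isH}e^{-sD_J}\cdots e^{-sD_1},
\end{equation*}
and apply the triangle inequality: $\|K_s-e^{sA}\|\le\|K_s-P_s\|+\|P_s-e^{sA}\|$. The first term is the cosine-truncation error and should be bounded by $s^2\Gamma_{\mathrm{cos}}$. The second is a standard first-order product-formula (Trotter) error and should be bounded by $\tfrac12 s^2 e^{sS}\Gamma_{\mathrm{PF}}$. Since $e^{sS}\ge1$, adding the two gives $s^2e^{sS}\Gamma_{\mathrm{loc}}$.

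For the cosine part we telescope:
\begin{equation*}
    K_s-P_s=e^{-isH}\sum_{j=1}^{J}C_J(s)\cdots C_{j+1}(s)\bigl(C_j(s)-e^{-sD_j}\bigr)e^{-sD_{j-1}}\cdots e^{-sD_1}.
\end{equation*}
By \cref{lem:contractivity}, every $C_k(s)$ is a contraction. Each $e^{-sD_k}$ is a contraction because $D_k\succeq0$, and $e^{-isH}$ is unitary. Hence each summand is bounded by $\|C_j(s)-e^{-sD_j}\|$. By \cref{lem:cosineDefect} this is at most $\tfrac23 s^2\|D_j^2\|$, and summing over $j$ gives exactly $s^2\Gamma_{\mathrm{cos}}$.

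For the product-formula part we use the standard variation-of-parameters argument, which is the main technical step. For $t\in[0,s]$ set $P(t)=e^{tB_0}e^{tB_J}\cdots e^{tB_1}$ and $E(t)=P(t)-e^{tA}$. Differentiating gives $P'(t)=AP(t)+\mathcal{R}(t)$, where
\begin{equation*}
    \mathcal{R}(t)=\sum_{\nu}\Bigl(\prod_{\mu>\nu}e^{tB_\mu}\Bigr)B_\nu\Bigl(\prod_{\mu\le\nu}e^{tB_\mu}\Bigr)-AP(t).
\end{equation*}
Duhamel's formula then gives $E(s)=\int_0^s e^{(s-t)A}\mathcal{R}(t)\,dt$, and $\|e^{(s-t)A}\|\le1$ by \cref{lem:contractivity}.

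To bound $\mathcal{R}(t)$, we move each $B_\nu$ leftward past the factors $e^{tB_\mu}$ with $\mu>\nu$. For each such pair,
\begin{equation*}
    e^{tB_\mu}B_\nu-B_\nu e^{tB_\mu}=\int_0^t e^{(t-\tau)B_\mu}[B_\mu,B_\nu]e^{\tau B_\mu}\,d\tau .
\end{equation*}
Every exponential $e^{rB_\mu}$ with $r\ge0$ has norm at most $1$, because $B_0$ is anti-Hermitian and $B_j\preceq0$. A cruder bound $\|e^{rB_\mu}\|\le e^{r\|B_\mu\|}$ also suffices, and that is where the factor $e^{sS}$ in the statement is generous enough to absorb the accumulated exponentials. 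This yields $\|\mathcal{R}(t)\|\le t\,e^{tS}\Gamma_{\mathrm{PF}}$. Integrating over $t\in[0,s]$ gives $\tfrac12 s^2e^{sS}\Gamma_{\mathrm{PF}}$.

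The delicate point is bookkeeping the nested commutator sum so that each unordered pair $\mu<\nu$ appears once with weight $\|[B_\nu,B_\mu]\|$. That is exactly the definition of $\Gamma_{\mathrm{PF}}$, so the approach is to follow the standard first-order Trotter commutator bound.

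Finally, if $sS\le\log2$ then $e^{sS}\le2$, which gives the stated corollary bound $2s^2\Gamma_{\mathrm{loc}}$.
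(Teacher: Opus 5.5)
Your proposal is correct and follows the paper's proof essentially step for step. You use the same intermediate product $e^{sB_0}e^{sB_J}\cdots e^{sB_1}$, the same telescoping argument giving $s^2\Gamma_{\mathrm{cos}}$, and the same triangle inequality with $e^{sS}\ge 1$. The only difference is that you derive the first-order Trotter bound yourself via Duhamel's formula and commutator integrals, where the paper simply cites it. Your contractive version even yields $\tfrac12 s^2\Gamma_{\mathrm{PF}}$ without the exponential factor. One small indexing fix: $\prod_{\mu>\nu}$ should mean ``the factors to the left of $e^{tB_\nu}$,'' because $e^{tB_0}$ sits leftmost in the product.
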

\begin{proof}
    Let
    \begin{equation}
        Q_s:= e^{sB_0}e^{sB_J}\cdots e^{sB_1}.
    \end{equation}
    The error bound for the first-order Lie-Trotter product-formula~\cite{ChildsSuTranEtAl2020,AnFangLin2021} gives
    \begin{equation}\label{eq:realPF}
        \|Q_s-e^{sA}\| \le \frac{s^2}{2}e^{s\sum_{\nu=0}^{J}\|B_\nu\|}\sum_{0\le \mu<\nu\le J}\|[B_\nu,B_\mu]\|= \frac{s^2}{2}e^{sS}\Gamma_{\mathrm{PF}}.
    \end{equation}

    Next, using the telescoping identity,
    \begin{equation}
        K_s-Q_s=e^{sB_0}\sum_{j=1}^{J}C_J(s)\cdots C_{j+1}(s)(C_j(s)-e^{-sD_j})e^{-sD_{j-1}}\cdots e^{-sD_1}.
    \end{equation}
    Then, by \cref{lem:contractivity} and \cref{lem:cosineDefect},
    \begin{equation}
        \|K_s-Q_s\|\le \sum_{j=1}^{J}\|C_j(s)-e^{-sD_j}\|\le\frac{2}{3}s^2\sum_{j=1}^{J}\|D_j^2\|=s^2\Gamma_{\mathrm{cos}}.
    \end{equation}

    By the triangle inequality, 
    \begin{equation}
        \|K_s-e^{sA}\| \le s^2\Gamma_{\mathrm{cos}}+\frac{s^2}{2}e^{sS}\Gamma_{\mathrm{PF}}\le s^2e^{sS}\Gamma_{\mathrm{loc}},
    \end{equation}
    since $e^{sS}\ge 1$. In particular, if $sS\le \log 2$, then $e^{sS}\le 2$, thus $\|K_s-e^{sA}\|\le 2s^2\Gamma_{\mathrm{loc}}$.
\end{proof}

Having controlled the error of a single step, we accumulate it over $R$ repetitions to reach the final time $T$.
\begin{proposition}[Global First-Order Defect]\label{prop:globalDefect}
    Let $s=T/R$, $R\in\mathbb{N}$. Then,
    \begin{equation}
        \|K_s^R-e^{TA}\|\le Rs^2e^{sS}\Gamma_{\mathrm{loc}},
    \end{equation}
    and in particular, if $sS\le \log 2$, $\|K_s^R-e^{TA}\|\le 2Ts\Gamma_{\mathrm{loc}}$.
\end{proposition}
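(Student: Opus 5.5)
We will use the standard telescoping (Lady Windermere's fan) argument, combining the one-step defect of \cref{prop:stepDefect} with the contractivity bounds of \cref{lem:contractivity}. Since $T=Rs$, we have $e^{TA}=(e^{sA})^R$. We therefore write the difference of the two $R$-fold products as
\begin{equation}
    K_s^R-(e^{sA})^R=\sum_{k=0}^{R-1}K_s^{R-1-k}\,\bigl(K_s-e^{sA}\bigr)\,(e^{sA})^{k}.
\end{equation}
This identity can be checked by induction on $R$, or by noting that the sum telescopes: the $k$-th term equals $K_s^{R-k}e^{ksA}-K_s^{R-1-k}e^{(k+1)sA}$.

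Next we take operator norms term by term. By \cref{lem:contractivity}, $\|K_s\|\le 1$ and $\|e^{sA}\|\le 1$ for real $s\ge0$. Hence every factor $K_s^{R-1-k}$ and $(e^{sA})^k$ has norm at most one, and each summand is bounded by $\|K_s-e^{sA}\|$. Inserting \cref{prop:stepDefect} then gives
\begin{equation}
    \|K_s^R-e^{TA}\|\le R\,\|K_s-e^{sA}\|\le R s^2 e^{sS}\Gamma_{\mathrm{loc}}.
\end{equation}

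For the special case, assume $sS\le\log 2$. Then $e^{sS}\le 2$, and together with $Rs=T$ the bound becomes $Rs^2\cdot 2\Gamma_{\mathrm{loc}}=2Ts\Gamma_{\mathrm{loc}}$.

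No step is really an obstacle. The only point to watch is that the contractivity of both propagators is what prevents exponential growth in $R$ of the accumulated error. Without \cref{lem:contractivity}, one would pick up factors such as $\max(\|K_s\|,\|e^{sA}\|)^{R-1}$. Since the dissipative structure $D_j\succeq0$ guarantees contractivity, the error accumulates only linearly in $R$.
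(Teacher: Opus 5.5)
Your proposal is correct and matches the paper's proof: the telescoping identity (yours is a reindexing of the paper's $\sum_{r=0}^{R-1}K_s^r(K_s-e^{sA})e^{sA(R-1-r)}$), contractivity from \cref{lem:contractivity} to bound each summand by the one-step defect of \cref{prop:stepDefect}, and then $e^{sS}\le 2$ with $Rs=T$ for the special case.
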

\begin{proof}
    The telescoping identity gives
    \begin{equation}
        K_s^R-e^{TA}=\sum_{r=0}^{R-1}K_s^r(K_s-e^{sA})e^{sA(R-1-r)}.
    \end{equation}
    By \cref{lem:contractivity} and \cref{prop:stepDefect},
    \begin{equation}
        \|K_s^R-e^{TA}\|\le \sum_{r=0}^{R-1}\|K_s-e^{sA}\|\le Rs^2e^{sS}\Gamma_{\mathrm{loc}}.
    \end{equation}
    In particular, if $sS\le \log 2$, then 
    \begin{equation}
        \|K_s^R-e^{TA}\|\le 2Rs^2\Gamma_{\mathrm{loc}}=2Ts\Gamma_{\mathrm{loc}},
    \end{equation}
    using the fact that $s=T/R$. 
\end{proof}

Achieving accuracy $\epsilon$ therefore requires step size $s=\mathcal{O}(\epsilon/(T\Gamma_{\mathrm{loc}}))$, or equivalently $R=\mathcal{O}(T^2\Gamma_{\mathrm{loc}}/\epsilon)$ time steps, giving the $\mathcal{O}(1/\epsilon)$ circuit depth that motivates the extrapolation framework developed in the remainder of the paper.

\subsection{Complex Step Error Bounds}\label{subsec:complexEB}
To bound how $g_O(s)$ varies with step size, we extend $C_j(s)$ and $K_s$ to complex step size $z$ and study the resulting holomorphic objects. This extension is required for the extrapolation bias estimates established later in this section. The constants $S$ and $\Gamma_{\mathrm{loc}}$ from \cref{subsec:localEB} reappear unchanged. However, contractivity no longer holds off the real axis, so the bounds below grow exponentially in $|z|$ rather than remaining bounded by one.

For complex $z$, define
\begin{equation}
    C_j(z):=\sum_{k=0}^{\infty}\frac{(-1)^k(2z)^k}{(2k)!}D_j^k, \quad K_z:=e^{zB_0}C_J(z)\cdots C_1(z).
\end{equation}
Then $K_z$ is entire in $z$, with $K_z=I+zA+O(z^2)$. 

We first bound the defect of each postselected block against the exponential it approximates on the complex domain.
\begin{lemma}[Complex Cosine Block Defect]\label{lem:complexCosDefect}
    There exist universal constants $c_0,C_0>0$ such that, for every complex $z$ with $|z|S\le c_0$ and each $j=1,\ldots, J$,
    \begin{equation}
        \|C_j(z)-e^{-zD_j}\|\le C_0|z|^2\|D_j^2\|.
    \end{equation}
\end{lemma}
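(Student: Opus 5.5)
We work directly with the power series. Both $C_j(z)$ and $e^{-zD_j}$ are entire functions of $z$ with operator coefficients that are powers of $D_j$, and their expansions agree through first order:
\begin{equation}
    C_j(z)=\sum_{k\ge0}\frac{(-2z)^k}{(2k)!}D_j^k,\qquad e^{-zD_j}=\sum_{k\ge0}\frac{(-z)^k}{k!}D_j^k.
\end{equation}
For $k=0$ both coefficients equal $1$, and for $k=1$ both equal $-1$, since $2/2!=1$. So the difference starts at $k=2$:
\begin{equation}
    C_j(z)-e^{-zD_j}=\sum_{k\ge2}\left(\frac{(-2)^k}{(2k)!}-\frac{(-1)^k}{k!}\right)z^kD_j^k .
\end{equation}
The plan is to factor out $z^2D_j^2$ and bound the remaining series in norm by a universal constant. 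We do this term by term with the triangle inequality and submultiplicativity, $\|D_j^k\|\le\|D_j^2\|\,\|D_j\|^{k-2}$. This avoids any spectral argument, which would be unavailable for complex $z$ anyway, although the spectral route would also work since $D_j$ is normal.

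\textbf{Coefficient bound.} For $k\ge2$ the coefficient satisfies
\begin{equation}
    \left|\frac{(-2)^k}{(2k)!}-\frac{(-1)^k}{k!}\right|\le\frac{2^k}{(2k)!}+\frac{1}{k!}\le\frac{2}{k!},
\end{equation}
because $(2k)!\ge 2^k k!$. Hence
\begin{equation}
    \|C_j(z)-e^{-zD_j}\|\le|z|^2\|D_j^2\|\sum_{k\ge2}\frac{2}{k!}\,(|z|\|D_j\|)^{k-2}.
\end{equation}
Since $\|D_j\|\le S$, we have $|z|\|D_j\|\le|z|S\le c_0$. The series is then at most $2\sum_{k\ge2}c_0^{k-2}/k!\le 2e^{c_0}$, using $1/k!\le1/(k-2)!$.

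\textbf{Choice of constants.} Any fixed $c_0$ works, for instance $c_0=1$, which gives $C_0=2e$. One can also simply note that the series defines an entire function $x\mapsto 2(e^x-1-x)/x^2$, which is bounded on $[0,c_0]$.

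\textbf{Main obstacle.} The only delicate point is the bookkeeping needed to peel off exactly $\|D_j^2\|$ rather than $\|D_j\|^2$. Submultiplicativity does this cleanly as stated above, so no step should present real difficulty. The smallness condition $|z|S\le c_0$ is needed only to make $C_0$ universal; it replaces the contractivity used in \cref{lem:cosineDefect}, which fails off the real axis.
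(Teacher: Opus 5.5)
Your argument is correct, but it takes a different route from the paper.

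The paper mirrors its real-step proof of \cref{lem:cosineDefect}. It writes $C_j(z)=I-zD_j+R_C(z)$ and $e^{-zD_j}=I-zD_j+R_E(z)$ with integral-form Taylor remainders $R_C(z)=\frac{2}{3}z^2D_j^2\int_0^1(1-t)^3\cos(t\sqrt{2z}D_j^{1/2})\,dt$ and $R_E(z)=z^2D_j^2\int_0^1(1-t)e^{-tzD_j}\,dt$. It then replaces contractivity by the growth bounds $\|\cos(t\sqrt{2z}D_j^{1/2})\|\le e^{\sqrt{2|z|S}}$ and $\|e^{-tzD_j}\|\le e^{|z|S}$, arriving at $C_0=\frac{1}{6}e^{\sqrt{2c_0}}+\frac{1}{2}e^{c_0}$.

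You instead compare the two power series coefficient by coefficient. You use the exact cancellation at $k=0,1$ and control the tail with $(2k)!\ge 2^k k!$ and $\|D_j^k\|\le\|D_j^2\|\,\|D_j\|^{k-2}$. This gives $C_0=2e^{c_0}$, or the sharper $2(e^{c_0}-1-c_0)/c_0^2$.

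Your version is more elementary and slightly more general. It never needs $D_j^{1/2}$, a branch choice for $\sqrt{2z}$, or the operator cosine. It also does not use $D_j\succeq 0$, so it applies verbatim with any bounded operator in place of $D_j$. The paper's version buys a uniform presentation: the complex lemma is literally the real one with the bounds $\|\cdot\|\le 1$ swapped for exponential bounds.

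One small remark: the ``delicate point'' you flag, peeling off $\|D_j^2\|$ rather than $\|D_j\|^2$, is immaterial here. Since $D_j$ is positive semidefinite, $\|D_j^2\|=\|D_j\|^2$, and the paper uses this identity tacitly when bounding $R_E$.
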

\begin{proof}
    From Taylor expansion with integral remainder, valid for any complex $z$,
    \begin{equation}
        C_j(z)=I-zD_j+R_C(z),\quad R_C(z)=\frac{2}{3}z^2D_j^2\int_{0}^{1}(1-t)^3\cos(t\sqrt{2z}D_j^{1/2})\,dt,
    \end{equation}
    \begin{equation}
        e^{-zD_j}=I-zD_j+R_E(z), \quad R_E(z)=z^2D_j^2\int_0^1(1-t)e^{-tzD_j}\,dt.
    \end{equation}

    Writing $\cos(W)=\frac{1}{2}(e^{iW}+e^{-iW})$ for $W=t\sqrt{2z}D_{j}^{1/2}$ and bounding each exponential via $\|e^X\|\le e^{\|X\|}$ gives $\|\cos(t\sqrt{2z}D_j^{1/2})\|\le e^{\sqrt{2|z|S}}$ for $t\in[0,1]$, since $\|D_j\|\le S$. Similarly $\|e^{-tzD_j}\|\le e^{|z|S}$. Hence,
    \begin{equation}
        \|R_C(z)\|\le \frac{2}{3}|z|^2\|D_j^2\|e^{\sqrt{2|z|S}}\int_{0}^{1}(1-t)^3\,dt\le  \frac{1}{6}|z|^2\|D_j^2\|e^{\sqrt{2|z|S}},
    \end{equation}
    \begin{equation} 
        \|R_E(z)\| \le |z|^2\|D_j\|^2e^{|z|S}\int_0^1(1-t)\,dt \le \frac{1}{2}|z|^2\|D_j^2\|e^{|z|S}.
    \end{equation}
    Since the leading order terms cancel, $C_j(z)-e^{-zD_j}=R_C(z)-R_E(z)$, so by triangle inequality and $|z|S\le c_0$,
    \begin{equation}
        \|C_j(z)-e^{-zD_j}\| \le \|R_C(z)\|+\|R_E(z)\|\le (\frac{1}{6}e^{\sqrt{2c_0}}+\frac{1}{2}e^{c_0})|z|^2\|D_j^2\|=: C_0|z|^2\|D_j^2\|.
    \end{equation}
\end{proof}

Then, we collect the exponential bounds that replace contractivity for complex step size.
\begin{lemma}[Bounded Norms for Complex Step Size]\label{lem:boundedNorms}
    There exist universal constants $c_1,C_1>0$ with $c_1\le c_0$ such that, for every complex $z$ with $|z|S\le c_1$,
    \begin{equation}
        \|e^{zB_\nu}\|\le e^{|z|\|B_\nu\|}\;(0\le \nu\le J), \quad \|C_j(z)\|\le e^{C_1|z|\|D_j\|}\;(1\le j\le J).
    \end{equation}
\end{lemma}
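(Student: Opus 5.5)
The two bounds are proved separately; only the second needs an argument.

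The bound on $e^{zB_\nu}$ is immediate. The power series gives $\|e^{X}\|\le \sum_k \|X\|^k/k! = e^{\|X\|}$, and with $X=zB_\nu$ we have $\|X\|=|z|\|B_\nu\|$. This holds for every complex $z$, with no restriction needed.

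For $C_j(z)$, we start from the defining power series
\begin{equation*}
C_j(z)=\sum_{k\ge0}\frac{(-1)^k(2z)^k}{(2k)!}D_j^k.
\end{equation*}
Since $\|D_j^k\|\le\|D_j\|^k$, the triangle inequality gives
\begin{equation*}
\|C_j(z)\|\le \sum_{k\ge 0}\frac{(2|z|\|D_j\|)^k}{(2k)!}=\cosh\bigl(\sqrt{2|z|\|D_j\|}\bigr).
\end{equation*}
The naive bound $e^{\sqrt{2x}}$, with $x=|z|\|D_j\|$, is not of the form $e^{C_1x}$ for small $x$, because $\sqrt{x}\gg x$. The key observation is that $\cosh$ is even. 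Writing $y=2x$, we have
\begin{equation*}
\cosh\sqrt{y}=\sum_{k\ge0}\frac{y^k}{(2k)!}\le \sum_{k\ge0}\frac{(y/2)^k}{k!}=e^{y/2},
\end{equation*}
using $(2k)!\ge 2^k k!$. Hence $\|C_j(z)\|\le e^{|z|\|D_j\|}$, so $C_1=1$ works for all complex $z$.

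We take $c_1=c_0$, which satisfies $c_1\le c_0$. The hypothesis $|z|S\le c_1$ is harmless but not actually needed; it is kept only so the lemma matches the domain of \cref{lem:complexCosDefect}, where the two results will be combined in a telescoping argument like that of \cref{prop:stepDefect}.

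The only potential obstacle was the square-root singularity in the cosine argument. The termwise factorial comparison removes it cleanly.
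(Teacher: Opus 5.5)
Your proof is correct, and for the $C_j(z)$ bound it takes a different route from the paper; the $e^{zB_\nu}$ bound is handled the same way in both. The paper never bounds the power series of $C_j(z)$ directly. It writes $\|C_j(z)\|\le\|e^{-zD_j}\|+\|C_j(z)-e^{-zD_j}\|$ and invokes the complex cosine-defect lemma to get $e^{|z|\|D_j\|}+C_0|z|^2\|D_j\|^2$. It then uses the smallness $|z|\|D_j\|\le c_1$ to absorb the quadratic term, via $e^{|z|\|D_j\|}(1+C_0c_1|z|\|D_j\|)\le e^{(1+C_0c_1)|z|\|D_j\|}$, so it obtains $C_1=1+C_0c_1$, and only on the small disk. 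Your termwise majorization uses $(2k)!=2^k k!\,(2k-1)!!\ge 2^k k!$ to remove the square-root growth of $\cosh\sqrt{2|z|\|D_j\|}$ directly. This gives the sharper bound $\|C_j(z)\|\le e^{|z|\|D_j\|}$, i.e.\ $C_1=1$, valid for every $z\in\mathbb{C}$ and independent of the defect lemma. The paper's route reuses an estimate it needs anyway and makes the comparison with $e^{-zD_j}$ explicit, but it is only local and carries the constant $C_0$. Yours is self-contained and global, and it would let the factor $e^{(1+C_1)|z|S}$ in the later complex one-step defect argument be replaced by $e^{2|z|S}$.
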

\begin{proof}
    For any operator $X$ and complex constant $z$, 
    \begin{equation}
        \|e^{zX}\|=\left\|\sum_{k=0}^{\infty}\frac{(zX)^k}{k!}\right\|\le \sum_{k=0}^{\infty}\frac{|z|^k\|X\|^k}{k!}=e^{|z|\|X\|}.
    \end{equation}
    So, $\|e^{zB_\nu}\|\le e^{|z|\|B_\nu\|}$ holds for all $\nu$. 

    For $C_j(z)$, write 
    \begin{equation}
        \|C_j(z)\|\le \|e^{-zD_j}\|+\|C_j(z)-e^{-zD_j}\|\le e^{|z|\|D_j\|}+C_0|z|^2\|D_j\|^2,
    \end{equation}
    using \cref{lem:complexCosDefect} and $\|D_j^2\|\le \|D_j\|^2$. Since $|z|\|D_j\|\le |z|S|\le c_1$, we have $|z|^2\|D_j\|^2\le c_1\cdot |z|\|D_j\|$, so 
    \begin{equation}
        \|C_j(z)\|\le e^{|z|\|D_j\|}(1+C_0c_1|z|\|D_j\|)\le e^{|z|\|D_j\|}e^{C_0c_1|z|\|D_j\|}=e^{(1+C_0c_1)|z|\|D_j\|},
    \end{equation}
    using $1+x\le e^x$. Take $C_1:=1+C_0c_1$ so that $\|C_j(z)\|\le e^{C_1|z|\|D_j\|}$. 
\end{proof}

The defect of $K_z$ against $e^{zA}$ now follows the same argument as \cref{prop:stepDefect}, with each contractivity bound replaced by its complex counterpart from the previous lemma.
\begin{lemma}[Complex One-Step Defect]\label{lem:complexStepDefect}
    There exist universal constants $c_2,C_2>0$ with $c_2\le c_1$ such that, whenever $|z|S\le c_2$, 
    \begin{equation}
        \|K_z-e^{zA}\|\le C_2|z|^2\Gamma_{\mathrm{loc}}.
    \end{equation}
\end{lemma}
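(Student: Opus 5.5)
We follow the two-stage splitting of \cref{prop:stepDefect}, replacing each contractivity bound by its complex counterpart. Introduce the complex product formula
\begin{equation}
    Q_z:=e^{zB_0}e^{zB_J}\cdots e^{zB_1},
\end{equation}
and split $\|K_z-e^{zA}\|\le \|K_z-Q_z\|+\|Q_z-e^{zA}\|$. We take $c_2\le c_1$, so that \cref{lem:complexCosDefect} and \cref{lem:boundedNorms} both apply whenever $|z|S\le c_2$.

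For the cosine part, use the same telescoping identity as before:
\begin{equation}
    K_z-Q_z=e^{zB_0}\sum_{j=1}^{J}C_J(z)\cdots C_{j+1}(z)\bigl(C_j(z)-e^{-zD_j}\bigr)e^{-zD_{j-1}}\cdots e^{-zD_1}.
\end{equation}
By \cref{lem:boundedNorms}, the product of all the surrounding factors has norm at most
\begin{equation}
    \exp\Bigl(|z|\|H\|+C_1|z|\sum_{j}\|D_j\|\Bigr)\le e^{C_1|z|S}\le e^{C_1c_2},
\end{equation}
using $C_1\ge1$. Each middle factor is bounded by \cref{lem:complexCosDefect}. Summing over $j$ gives
\begin{equation}
    \|K_z-Q_z\|\le e^{C_1c_2}C_0|z|^2\sum_j\|D_j^2\|=\tfrac32 e^{C_1c_2}C_0|z|^2\Gamma_{\mathrm{cos}}.
\end{equation}

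The Lie--Trotter part is the step needing care, since the cited bound \cref{eq:realPF} was stated for real $s$. We would reprove it for complex $z$ using the standard integral representation. Let $P_k(t)=e^{tB_k}\cdots e^{tB_1}$ and $F(t)=e^{tB_0}P_J(t)$ along the ray $t\mapsto tz/|z|$; equivalently, work with $\tau\in[0,1]$ and the variable $\tau z$. Then
\begin{equation}
    Q_z-e^{zA}=\int_0^1 e^{(1-\tau)zA}\bigl(\partial_\tau F-zAF\bigr)(\tau)\,d\tau .
\end{equation}
Here $\partial_\tau F-zAF$ is a sum of terms of the form $z\bigl(e^{\tau z X}B_\mu e^{-\tau zX}-B_\mu\bigr)$ times products of exponentials, and each such commutator-type defect is bounded by $|z|^2\tau\|[X,B_\mu]\|e^{2\tau|z|\|X\|}$. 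All exponential prefactors are bounded by $e^{c|z|S}$ with $c$ a small absolute constant, using $\|e^{wX}\|\le e^{|w|\|X\|}$. This yields
\begin{equation}
    \|Q_z-e^{zA}\|\le \tfrac12 e^{3|z|S}|z|^2\Gamma_{\mathrm{PF}}\le \tfrac12 e^{3c_2}|z|^2\Gamma_{\mathrm{PF}}.
\end{equation}
The exact exponent is not important, only that it is a universal multiple of $|z|S$. Alternatively, one could simply note that the cited proofs in~\cite{ChildsSuTranEtAl2020,AnFangLin2021} never use realness of $s$ beyond the bound $\|e^{sB}\|\le e^{|s|\|B\|}$.

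Combining the two parts and setting
\begin{equation}
    C_2:=\max\bigl(\tfrac32 e^{C_1c_2}C_0,\;e^{3c_2}\bigr)
\end{equation}
gives $\|K_z-e^{zA}\|\le C_2|z|^2\Gamma_{\mathrm{loc}}$, using $\Gamma_{\mathrm{loc}}=\tfrac12\Gamma_{\mathrm{PF}}+\Gamma_{\mathrm{cos}}$. Any choice such as $c_2=c_1$ works. The main obstacle is the complex Trotter bound; the rest is bookkeeping.
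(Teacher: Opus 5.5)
Your proposal is correct and follows essentially the same route as the paper: you split through the complex product formula $Q_z$, bound $K_z-Q_z$ by the same telescoping identity using \cref{lem:boundedNorms} and \cref{lem:complexCosDefect}, and add a complex Lie--Trotter bound. The only differences are minor. Your surrounding-factor bound $e^{C_1|z|S}$ (via $C_1\ge 1$) is slightly tighter than the paper's $e^{(1+C_1)|z|S}$. You also sketch the complex Trotter estimate through the variation-of-constants integral, which checks out with prefactor $e^{3|z|S}$, whereas the paper simply asserts it.
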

\begin{proof}
    Let $Q_z=e^{zB_0}e^{zB_J}\cdots e^{zB_1}$. The complex first-order Lie-Trotter product formula bound gives
    \begin{equation}
        \|Q_z-e^{zA}\|\le C|z|^2e^{C|z|S}\Gamma_{\mathrm{PF}}\le C|z|^2\Gamma_{\mathrm{PF}}
    \end{equation}
    for $|z|S\le c_2$, absorbing the bounded exponential factor into the constant. By the telescoping identity,
    \begin{equation}
        K_z-Q_z = e^{zB_0}\sum_{j=1}^{J}C_J(z)\cdots C_{j+1}(z)(C_j(z)-e^{-zD_j})e^{-zD_{j-1}}\cdots e^{-zD_1}.
    \end{equation}
    By \cref{lem:boundedNorms}, the surrounding factors satisfy
    \begin{equation}
        \|e^{zB_0}\|\cdot \|C_J(z)\cdots C_{j+1}(z)\|\cdot \|e^{-zD_{j-1}}\cdots e^{-zD_1}\|\le e^{|z|\|H\|}\cdot e^{C_1|z|\sum_{k=j+1}^{J}\|D_k\|}\cdot e^{|z|\sum_{k=1}^{j-1}\|D_k\|}\le e^{(1+C_1)|z|S}.
    \end{equation}
    since each partial sum of $\|D_k\|$ is bounded by $S$ directly, with no dependence on $J$ beyond what $S$ already carries. Combining this with \cref{lem:complexCosDefect},
    \begin{equation}
        \|K_z-Q_z\|\le e^{(1+C_1)|z|S}\sum_{j=1}^{J}C_0|z|^2\|D_j^2\|\le e^{(1+C_1)c_2}\cdot \frac{3}{2}C_0|z|^2\Gamma_{\mathrm{cos}},
    \end{equation}
    using $|z|S\le c_2$ to bound the exponential factor and $\Gamma_{\mathrm{cos}}=\frac{2}{3}\sum_j\|D_j^2\|$. Adding the two estimates and absorbing constants,
    \begin{equation}
        \|K_z-e^{zA}\|\le \|K_z-Q_z\|+\|Q_z-e^{zA}\| \le C|z|^2\Gamma_{\mathrm{cos}}+ C|z|^2\Gamma_{\mathrm{PF}}=:C_2|z|^2\Gamma_{\mathrm{loc}}.
    \end{equation}
\end{proof}

This defect bound controls $K_z$ itself. However, extrapolation requires control of the generator. We therefore show that $K_z$ admits a well-defined logarithm near $z=0$ and bound the distance of this logarithm from $A$.
\begin{lemma}[Modified Generator Bound]\label{lem:generatorBound}
    There exist universal constants $c_3,C_3>0$ with $c_3\le c_2$ such that, whenever $|z|S\le c_3$, the principal logarithm of $K_z$ is well-defined, and 
    \begin{equation}
        A_{\mathrm{eff}}(z):=z^{-1}\log K_z
    \end{equation}
    extends holomorphically to $z=0$ by setting $A_{\mathrm{eff}}(0):=A$. 
    Moreover,
    \begin{equation}\label{eq:generatorBound}
        \|A_{\mathrm{eff}}(z)-A\|\le C_3|z|\Gamma_{\mathrm{loc}}.
    \end{equation}
\end{lemma}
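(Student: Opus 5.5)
The plan is to write $K_z=I+X(z)$ and define the logarithm by its power series. We will show that $X(z)$ is small and that its linear part is $zA$; the generator bound then follows from estimates on the series.

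\emph{Smallness of $X(z)$.} Write
\[
X(z)=(K_z-e^{zA})+(e^{zA}-I).
\]
By \cref{lem:complexStepDefect}, $\|K_z-e^{zA}\|\le C_2|z|^2\Gamma_{\mathrm{loc}}$. Since $\|A\|\le S$, we also have
\[
\|e^{zA}-I\|\le e^{|z|S}-1\le 2|z|S
\]
for $|z|S$ small. Moreover, $\Gamma_{\mathrm{loc}}\lesssim S^2$, because $\|[B_\nu,B_\mu]\|\le 2\|B_\nu\|\|B_\mu\|$ and $\|D_j^2\|\le\|D_j\|^2$. Together these give $\|X(z)\|\le C|z|S$. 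Choosing $c_3\le c_2$ small enough therefore ensures $\|X(z)\|\le 1/2$. The series
\[
\log(I+X)=\sum_{k\ge1}\frac{(-1)^{k+1}}{k}X^k
\]
then converges absolutely and uniformly on the disk $|z|S\le c_3$. Since $X(z)$ is entire, $\log K_z$ is holomorphic there and equals the principal logarithm: the spectrum of $K_z$ lies in the disk $|w-1|\le 1/2$, away from the branch cut.

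\emph{Holomorphic extension to $z=0$.} We have $\log K_0=0$, so $\log K_z=zF(z)$ for a holomorphic $F$. This is the removable-singularity argument, or equivalently the fact that $X(z)=z(\cdots)$. From $K_z=I+zA+O(z^2)$ we read off $F(0)=A$. Hence $A_{\mathrm{eff}}$ extends holomorphically with $A_{\mathrm{eff}}(0)=A$.

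\emph{The bound \eqref{eq:generatorBound}.} The naive estimate $\|\log(I+X)-X\|\le C\|X\|^2\lesssim |z|^2S^2$ only yields $|z|S^2$ in place of $|z|\Gamma_{\mathrm{loc}}$. This is the main obstacle. To avoid it, we compare with $e^{zA}$ exactly. Since $\log e^{zA}=zA$ for $|z|S$ small (principal branch), we can write
\[
\log K_z-zA=\log K_z-\log e^{zA}.
\]
We bound this difference using Lipschitz continuity of the logarithm near $I$, via the integral representation
\[
\log Y=\int_0^1 (Y-I)\bigl(t(Y-I)+I\bigr)^{-1}\,dt.
\]
Alternatively, one can use the power series termwise with the bound $\|X^k-Y^k\|\le k\max(\|X\|,\|Y\|)^{k-1}\|X-Y\|$. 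Either route gives
\[
\|\log K_z-\log e^{zA}\|\le C\|K_z-e^{zA}\|
\]
whenever $\|K_z-I\|,\|e^{zA}-I\|\le 1/2$. Note that no commutativity is needed here: the resolvent identity
\[
P^{-1}-Q^{-1}=P^{-1}(Q-P)Q^{-1}
\]
handles the non-commuting case. Combining this with \cref{lem:complexStepDefect} gives
\[
\|\log K_z-zA\|\le C C_2|z|^2\Gamma_{\mathrm{loc}}.
\]
Dividing by $|z|$ yields \eqref{eq:generatorBound} for $z\ne0$ with $C_3=CC_2$. The case $z=0$ is trivial.
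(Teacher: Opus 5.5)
Your proposal is correct and follows essentially the same route as the paper: both place $K_z$ and $e^{zA}$ in the ball $\|Y-I\|\le 1/2$, and both convert the defect bound of \cref{lem:complexStepDefect} into a bound on $\|\log K_z-zA\|$ via Lipschitz continuity of the principal logarithm on that ball, while getting the holomorphic extension at $z=0$ from the Mercator series and $K_z-I=zA+O(z^2)$. Your explicit observation that $\Gamma_{\mathrm{loc}}\le 7S^2/6$ is a useful addition, since it is what justifies the paper's terse claim that $C|z|^2\Gamma_{\mathrm{loc}}$ is small on $|z|S\le c_3$ for a universal constant $c_3$.
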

\begin{proof}
    For $|z|S\le c_3$ with $c_3\le c_2$ sufficiently small, both $e^{zA}$ and $K_z$ lie in the ball $\|X-I\|\le \frac{1}{2}$. For $e^{zA}$, it satisfies
    \begin{equation}\label{eq:smallExp}
        \|e^{zA}-I\|\le e^{|z|\|A\|}-1\le e^{|z|S}-1\le e^{c_3}-1,
    \end{equation}
    which is small. For $K_z$, we can bound using \cref{lem:complexStepDefect} and \cref{eq:smallExp},
    \begin{equation}
        \|K_z-I\|\le \|K_z-e^{zA}\|+\|e^{zA}-I\|\le C|z|^2\Gamma_{\mathrm{loc}}+(e^{c_3}-1),
    \end{equation}
    and both terms vanish as $c_3\rightarrow 0^+$ because the first term vanishes by $|z|\rightarrow 0$ as $|z|\le c_3/S$ and the second term is proven small.
    
    Since $K_z$ and $e^{zA}$ both lie in this ball, the principal logarithm is well-defined and Lipschitz there, so the bound on $\|K_z-e^{zA}\|$ converts directly into a bound on $\|\log K_z-zA\|$. For $z\ne 0$, dividing it by $|z|$, we get
    \begin{equation}
        \|A_{\mathrm{eff}}(z)-A\|\le C|z|\Gamma_{\mathrm{loc}}.
    \end{equation}
    It remains to show that $A_{\mathrm{eff}}(z)=z^{-1}\log K_z$ extends holomorphically across $z=0$. Since $\|K_z-I\|\le 1/2$ on this disk, the Mercator series gives
    \begin{equation}
        \log K_z=(K_z-I)-\frac{1}{2}(K_z-I)^2+\frac{1}{3}(K_z-I)^3+\cdots,
    \end{equation}
    valid and absolutely convergent there. Using $K_z-I=zA+\mathcal{O}(z^2)$, and noting that $(K_z-I)^2$ is already $\mathcal{O}(z^2)$, every term beyond the first contributes only at order $z^2$ or higher, so
    \begin{equation}
        \log K_z=zA+\mathcal{O}(z^2).
    \end{equation}
    In particular, $\log K_z$ has no constant term, so $A_{\mathrm{eff}}=z^{-1}\log K_z$ extends holomorphically to $z=0$ by setting $A_{\mathrm{eff}}(0):=A$, matching the coefficient of $z$ above. \cref{eq:generatorBound} holds at $z=0$ as well, since both sides vanish there.
\end{proof}

\subsection{Holomorphic Adjoint Lift}\label{subsec:adjointLift}
The eventual bound on the observable signal requires both $A_{\mathrm{eff}}(z)$ and an adjoint object to be holomorphic in $z$ simultaneously. Writing $A_{\mathrm{eff}}(z)=A+\sum_{k=1}^{\infty}E_kz^k$ for its Taylor expansion about $z=0$, guaranteed by \cref{lem:generatorBound}, the operator adjoint $A^\dagger _{\mathrm{eff}}(z)$ conjugates both the coefficients $E_k$ and the scalar $z^k$, giving
\begin{equation}
    A^\dagger _{\mathrm{eff}}(z) = A^\dagger +\sum_{k\ge 1}E_k^\dagger \bar{z}^k.
\end{equation}
This expression depends on $\bar{z}$ rather than $z$, so it is anti-holomorphic, not holomorphic, off the real axis. 

To retain holomorphy, we instead conjugate only the operator coefficients, leaving the scalar $z^k$ untouched, and define the holomorphic adjoint lift
\begin{equation}
    A_{\mathrm{eff}}^\sharp (z):=A^\dagger +\sum_{k\ge 1}E_k^\dagger z^k.
\end{equation}
For real $z=s$, $\bar{s}=s$, so $A^\sharp_{\mathrm{eff}}(s)=A_{\mathrm{eff}}(s)^\dagger$ exactly. For complex $z$, $A^\sharp_{\mathrm{eff}}(z)$ remains holomorphic.

\begin{lemma}[Adjoint-Lift Bound]\label{lem:adjointBound}
    Let $\rho>0$ satisfy $\rho S\le c_3$, so that the bound in \cref{eq:generatorBound} holds for $|z|\le \rho$. Then, there exists a universal constant $C_4>0$ such that, for every $z$ with $|z|\le \rho/2$, 
    \begin{equation}
        \|A^\sharp_{\mathrm{eff}}(z)-A^\dagger\|\le C_4|z|\Gamma_{\mathrm{loc}}.
    \end{equation}
\end{lemma}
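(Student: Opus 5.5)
Write $F(z):=A_{\mathrm{eff}}(z)-A=\sum_{k\ge1}E_kz^k$, which is holomorphic on the disk $|z|\le\rho$ by \cref{lem:generatorBound} and satisfies $\|F(z)\|\le C_3\rho\,\Gamma_{\mathrm{loc}}$ there. The lifted object is $F^\sharp(z):=A^\sharp_{\mathrm{eff}}(z)-A^\dagger=\sum_{k\ge1}E_k^\dagger z^k$. The plan is to control the coefficients $E_k$ with Cauchy estimates on the circle $|z|=\rho$, transfer them to $E_k^\dagger$ using $\|E_k^\dagger\|=\|E_k\|$, and then resum on the smaller disk $|z|\le\rho/2$. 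A direct pointwise comparison between $F^\sharp(z)$ and $F(z)^\dagger$ is not available off the real axis, which is why we go through the coefficients.

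\textbf{Coefficient bounds.} For each $k\ge1$ the Cauchy integral formula gives
\begin{equation}
    E_k=\frac{1}{2\pi i}\oint_{|w|=\rho}\frac{F(w)}{w^{k+1}}\,dw,\qquad \|E_k\|\le \rho^{-k}\max_{|w|=\rho}\|F(w)\|\le C_3\rho^{1-k}\Gamma_{\mathrm{loc}}.
\end{equation}
The operator-valued integral is handled exactly as in the scalar case: either pair with arbitrary unit vectors and apply the scalar estimate, or use the norm bound on a Bochner integral. Since taking the adjoint is an isometry, $\|E_k^\dagger\|\le C_3\rho^{1-k}\Gamma_{\mathrm{loc}}$ as well. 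This also shows that the series defining $A^\sharp_{\mathrm{eff}}$ converges absolutely for $|z|<\rho$, so the lift is genuinely holomorphic there.

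\textbf{Resummation.} For $|z|\le\rho/2$,
\begin{equation}
    \|F^\sharp(z)\|\le\sum_{k\ge1}C_3\rho^{1-k}\Gamma_{\mathrm{loc}}|z|^k=C_3\Gamma_{\mathrm{loc}}|z|\sum_{k\ge1}\Bigl(\frac{|z|}{\rho}\Bigr)^{k-1}\le 2C_3|z|\Gamma_{\mathrm{loc}},
\end{equation}
because the geometric ratio is at most $1/2$. This proves the lemma with $C_4=2C_3$.

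The only delicate point is to extract the linear factor $|z|$ rather than the cruder constant $\rho\,\Gamma_{\mathrm{loc}}$. This works because the series has no constant term, so after factoring out $z$ the remaining geometric sum is bounded by $2$ on the half-radius disk. Restricting to $|z|\le\rho/2$ is exactly what makes this sum bounded.
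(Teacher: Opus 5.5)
Your proposal is correct and follows essentially the paper's argument. You bound the Taylor coefficients $E_k$ by Cauchy estimates on the radius-$\rho$ circle, carry the bounds over to $E_k^\dagger$ because the adjoint is an isometry, and resum geometrically on $|z|\le\rho/2$ to get $C_4=2C_3$. The only cosmetic difference is that the paper applies Cauchy's estimate to $(A_{\mathrm{eff}}(z)-A)/z$ rather than to $A_{\mathrm{eff}}(z)-A$; both give the same bound $\|E_k\|\le C_3\rho^{1-k}\Gamma_{\mathrm{loc}}$.
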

\begin{proof}
    Define
    \begin{equation}
        F(z):= \begin{cases}
            \frac{A_{\mathrm{eff}}(z)-A}{z} &z\ne 0\\E_1& z=0
        \end{cases}.
    \end{equation}
    Because of the removable singularity at $z=0$, $F(z)=\sum_{k\ge 1}E_kz^{k-1}$ and \cref{eq:generatorBound} yields
    \begin{equation}
        \frac{\|A_{\mathrm{eff}}(z)-A\|}{|z|}=\|F(z)\| \le C_3\Gamma_{\mathrm{loc}}.
    \end{equation}
    Applying Cauchy's estimate to the Taylor coefficients of $F$, $\|E_{k+1}\|\le C_3\Gamma_{\mathrm{loc}}\cdot \rho^{-k}$. Therefore,
    \begin{equation}
        \|A^\sharp_{\mathrm{eff}}(z)-A^\dagger\|\le \sum_{k\ge 1}\|E_k\||z|^k\le C_3\Gamma_{\mathrm{loc}}\cdot |z|\sum_{k\ge 1}\left(\frac{|z|}{\rho}\right)^{k-1},
    \end{equation}
    and at $|z|\le \rho/2$, the geometric series converges to at most $2$, which yields $\|A^\sharp_{\mathrm{eff}}(z)-A^\dagger\|\le C_4|z|\Gamma_{\mathrm{loc}}$.
\end{proof}

\subsection{Sharp Derivative Bounds for the Observable Signal}
The complex step constructions of the previous two subsections yield derivative bounds on $g_O(s)$ as a function of the real step size $s$, required by the extrapolation bias bound later. Cauchy's integral formula converts a bound on a holomorphic function over a disk into a bound on its derivatives at the center of a smaller disk. It remains to show that the relevant holomorphic lift of $g_O$ is bounded on such a disk. 

Define $Q:=\max\{S,T\Gamma_{\mathrm{loc}}\}$. This constant combines the strength of the generator with the accumulated effect of dissipation over the full evolution time.

A preliminary identity motivates the construction below. By definition, $K_s=e^{sA_{\mathrm{eff}}(s)}$, so for the realizable step size $s=T/R$,
\begin{equation}
    K_s^R=e^{RsA_{\mathrm{eff}}(s)}=e^{TA_{\mathrm{eff}}(s)}.
\end{equation}
Since $\ket{u_s(T)}=K_s^R\ket{\psi_0}$, this gives $\ket{u_s(T)}=e^{TA_{\mathrm{eff}}(s)}\ket{\psi_0}$, which means the holomorphic lift constructed below reproduces $g_O(s)$ precisely at real, realizable step sizes.

\begin{theorem}[Sharp Derivative Bounds for the Observable Signal]\label{thm:derivBounds}
    There exist universal constants $c,C,C'>0$ such that the following holds. Let $0\le s\le s_Q:=c/Q$. Then, for every integer $m\ge 0$,
    \begin{equation}
        \left|\frac{d^m}{ds^m}g_O(s)\right|\le C\|O\|(C'Q)^m m!, \quad \left|\frac{d^m}{ds^m}p(s)\right|\le C(C'Q)^m m!.
    \end{equation}
    Here $g_O(s)$ and $p(s)$ denote the real restrictions of the holomorphic lifts constructed in the proof, which agree with the circuit quantities at realizable step sizes.
\end{theorem}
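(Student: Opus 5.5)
We construct holomorphic lifts of $g_O$ and $p$ on a complex disk around the real step-size interval, bound them uniformly there, and then apply Cauchy's estimate. Concretely, fix $\rho$ with $\rho S\le c_3$ (so \cref{lem:generatorBound} and \cref{lem:adjointBound} apply on $|z|\le\rho$). Define, for $|z|\le \rho/2$,
\begin{equation}
    \tilde g_O(z):=\bra{\psi_0}e^{TA^\sharp_{\mathrm{eff}}(z)}\,O\,e^{TA_{\mathrm{eff}}(z)}\ket{\psi_0},\qquad \tilde p(z):=\bra{\psi_0}e^{TA^\sharp_{\mathrm{eff}}(z)}e^{TA_{\mathrm{eff}}(z)}\ket{\psi_0}.
\end{equation}
Both are holomorphic, since $A_{\mathrm{eff}}$ and $A^\sharp_{\mathrm{eff}}$ are holomorphic and the exponential is entire. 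For real $s$ we have $A^\sharp_{\mathrm{eff}}(s)=A_{\mathrm{eff}}(s)^\dagger$, so $e^{TA^\sharp_{\mathrm{eff}}(s)}=(e^{TA_{\mathrm{eff}}(s)})^\dagger$. The identity $K_s^R=e^{TA_{\mathrm{eff}}(s)}$ then shows that $\tilde g_O(s)=g_O(s)$ and $\tilde p(s)=p(s)$ at realizable step sizes. These real restrictions are the functions referred to in the statement.

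The key step is a uniform bound $|\tilde g_O(z)|\le C\|O\|$ on a disk of radius proportional to $1/Q$. Writing $A_{\mathrm{eff}}(z)=A+\Delta(z)$ with $\|\Delta(z)\|\le C_3|z|\Gamma_{\mathrm{loc}}$, we would not use $\|e^{TA_{\mathrm{eff}}}\|\le e^{T\|A_{\mathrm{eff}}\|}$, since that would introduce $e^{TS}$. Instead we use the perturbation bound for dissipative $A$: by Duhamel/Grönwall and $\|e^{tA}\|\le1$ for $t\ge0$ (\cref{lem:contractivity}),
\begin{equation}
    \|e^{T(A+\Delta)}\|\le e^{T\|\Delta\|}\le e^{C_3T|z|\Gamma_{\mathrm{loc}}}.
\end{equation}
The same argument handles $A^\sharp_{\mathrm{eff}}=A^\dagger+\Delta^\sharp$: $A^\dagger$ is also dissipative, since $A^\dagger+A\preceq0$ gives $\|e^{tA^\dagger}\|\le1$, and \cref{lem:adjointBound} gives $\|\Delta^\sharp\|\le C_4|z|\Gamma_{\mathrm{loc}}$. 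Hence for $|z|\le r:=\min\{\rho/2,\,1/(T\Gamma_{\mathrm{loc}})\}$ we obtain $|\tilde g_O(z)|\le \|O\|e^{C_3+C_4}$ and $|\tilde p(z)|\le e^{C_3+C_4}$. Choosing $\rho=c_3/S$ gives $r\ge c''/Q$ with $Q=\max\{S,T\Gamma_{\mathrm{loc}}\}$, since $\rho/2=c_3/(2S)\ge c_3/(2Q)$ and $1/(T\Gamma_{\mathrm{loc}})\ge 1/Q$.

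Next we apply Cauchy's estimate. For real $0\le s\le s_Q:=c/Q$ with $c\le c''/2$, the disk $|z-s|\le r/2$ lies inside $|z|\le r$. Cauchy's inequality then gives
\begin{equation}
    |\tilde g_O^{(m)}(s)|\le m!\,(r/2)^{-m}\sup_{|z|\le r}|\tilde g_O(z)|\le C\|O\|\,m!\,(C'Q)^m,
\end{equation}
with $C'=2/c''$. The bound for $p$ follows identically.

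The main obstacle is the uniform bound in the second paragraph. Off the real axis contractivity fails, and a naive estimate gives $e^{T|z|S}$, which is uncontrolled when $T$ is large. The resolution is to keep the exact dissipative generator $A$ (respectively $A^\dagger$), whose real-time semigroup is contractive, and to treat only the $\mathcal{O}(|z|\Gamma_{\mathrm{loc}})$ modification as a perturbation. This is why $T\Gamma_{\mathrm{loc}}$ rather than $TS$ enters $Q$. One point to check is that the Duhamel perturbation bound needs only contractivity of $e^{tA}$ for real $t\ge0$, with $\Delta$ a fixed complex matrix. That holds, because $T$ is real and only $\Delta$ depends on $z$.
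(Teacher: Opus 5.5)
Your proposal is correct and follows the paper's proof essentially step for step. It uses the same holomorphic lifts built from $A_{\mathrm{eff}}$ and the adjoint lift $A^\sharp_{\mathrm{eff}}$, the same uniform bound $\|e^{TA_{\mathrm{eff}}(z)}\|\le e^{T\|A_{\mathrm{eff}}(z)-A\|}$ from dissipativity of $A$ and $A^\dagger$, Cauchy's estimate on a disk of radius proportional to $1/Q$, and identification via $K_s^R=e^{TA_{\mathrm{eff}}(s)}$. The differences are cosmetic: the paper gets the perturbed-semigroup bound from the logarithmic norm, $\mu(A+\Delta)\le\mu(A)+\|\Delta\|$, rather than Duhamel--Gr\"onwall, and it fixes the radius as $\rho=4c/Q$ rather than your $\min\{\rho/2,1/(T\Gamma_{\mathrm{loc}})\}$.
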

\begin{proof}
    \textbf{Step 1 (logarithmic norm bound).} Let $X$ be a bounded operator, and define the logarithmic norm $\mu(X):=\lambda_{\mathrm{max}}(\frac{X+X^\dagger}{2})$. Then, for any vector $\ket{v}$ and $t\ge 0$, 
    \begin{equation}
        \frac{d}{dt}\|e^{tX}\ket{v}\|^2 = \bra{v}(e^{tX})^\dagger(X+X^\dagger)(e^{tX})\ket{v}\le 2\mu(X)\|e^{tX}\ket{v}\|^2,
    \end{equation}
    using $\frac{X+X^\dagger}{2}\preceq \mu(X)I$. By Gronwall's inequality, $\|e^{tX}\ket{v}\|^2\le e^{2t\mu(X)}\|\ket{v}\|^2$, so the logarithmic norm controls the growth of its exponential by $\|e^{tX}\|\le e^{t\mu(X)}$ for $t\ge 0$.

    Write $A_{\mathrm{eff}}(z)=A+E(z)$, where the bound for $E(z):=A_{\mathrm{eff}}(z)-A$ is shown in \cref{lem:generatorBound}. The logarithmic norm of $A_{\mathrm{eff}}(z)$ is bounded as follows
    \begin{equation}
        \mu(A_{\mathrm{eff}}(z)) = \lambda_{\mathrm{max}}\left(\frac{A+A^\dagger}{2}+\frac{E(z)+E^\dagger(z)}{2}\right)\le \mu(A)+\frac{1}{2}\sup_{\|\ket{v}\|=1}\bra{v}(E(z)+E^\dagger(z))\ket{v}\le \mu(A)+\|E(z)\|.
    \end{equation}
    Then, by the dissipativity of $A$ where $A+A^\dagger\preceq 0$, $\mu(A)\le 0$ gives $\mu(A_{\mathrm{eff}}(z))\le \|E(z)\|$ and hence
    \begin{equation}
        \|e^{TA_{\mathrm{eff}}(z)}\|\le e^{T\|E(z)\|}.
    \end{equation}

    Bound $\|E(z)\|\le C_3|z|\Gamma_{\mathrm{loc}}$ via \cref{lem:generatorBound} for $|z|S\le c_3$. Invoking \cref{lem:adjointBound} requires a radius $\rho$ with $\rho S\le c_3$, so set $\rho:=4c/Q$ so that this holds once $c\le c_3/4$, using $S\le Q$. On the disk $|z|\le \rho/2=2c/Q$, 
    \begin{equation}
        T\|E(z)\|\le T \cdot C_3\Gamma_{\mathrm{loc}}\cdot \frac{2c}{Q}=2C_3c\cdot \frac{T\Gamma_{\mathrm{loc}}}{Q}\le 2C_3c,
    \end{equation}
    using $T\Gamma_{\mathrm{loc}}\le Q$. Choosing $c\le \min\{\frac{c_3}{4}, \frac{1}{2C_3}\}$ gives $T\|E(z)\|\le 1$, and hence $\|e^{TA_{\mathrm{eff}}(z)}\|\le e$. The identical argument, with $A^\dagger$ in place of $A$ since $\mu(A^\dagger)=\mu(A)$, bounds $\|e^{TA^\sharp_{\mathrm{eff}}(z)}\|\le e$ using \cref{lem:adjointBound}.

    \textbf{Step 2 (bounded holomorphic lifts)}. Define 
    \begin{equation}
        G_O(z):= \bra{\psi_0}e^{TA^\sharp_{\mathrm{eff}}(z)}Oe^{TA_{\mathrm{eff}}(z)}\ket{\psi_0}, \quad P(z):= \bra{\psi_0}e^{TA^\sharp_{\mathrm{eff}}(z)}e^{TA_{\mathrm{eff}}(z)}\ket{\psi_0}.
    \end{equation}
    By Cauchy-Schwarz and Step 1,
    \begin{equation}
        |G_O(z)|\le \|e^{TA^{\sharp}_{\mathrm{eff}}(z)}\|\cdot \|O\|\cdot \|e^{TA_{\mathrm{eff}}(z)}\|\le C\|O\|,
    \end{equation}
    and similarly $|P(z)|\le C$, on $|z|\le \rho/2$.

    \textbf{Step 3 (Cauchy's formula).} For real $s\in[0,s_Q]$ with $s_Q=c/Q=\rho/4$, the disk $|z-s|\le \rho/4$ is contained in $|z|\le \rho/2$, since $|z|\le |z-s|+s\le \rho/4+\rho/4=\rho/2$. Cauchy's formula for the $m$-th derivative of a function bounded by $C\|O\|$ on a disk of radius $\rho/4$ gives
    \begin{equation}
        |G_O^{(m)}(s)|\le C\|O\|\frac{m!}{(\rho/4)^m}=C\|O\|m!\left(\frac{Q}{c}\right)^m,
    \end{equation}
    using $\rho/4=c/Q$. Setting $C':=1/c$ gives the stated form, and the bound on $P^{(m)}(s)$ follows identically:
    \begin{equation}
        \left|\frac{d^m}{ds^m}G_O(s)\right|\le C\|O\|\left(C'Q\right)^mm!, \quad  \left|\frac{d^m}{ds^m}P(s)\right|\le C\left(C'Q\right)^mm!.
    \end{equation}

    \textbf{Step 4 (identification).} By the preliminary identity above, $\ket{u_s(T)}=e^{TA_{\mathrm{eff}}(s)}\ket{\psi_0}$ at real, realizable $s$, and $A^\sharp_{\mathrm{eff}}(s)=A_{\mathrm{eff}}(s)^\dagger$ by construction of the adjoint lift. Substituting,
    \begin{equation}
        G_O(s)=\bra{\psi_0}e^{TA_{\mathrm{eff}}(s)^\dagger}Oe^{TA_{\mathrm{eff}}(s)}\ket{\psi_0}=\bra{u_s(T)}O\ket{u_s(T)}=g_O(s),
    \end{equation}
    and likewise
    \begin{equation}
        P(s)=\bra{\psi_0}e^{TA_{\mathrm{eff}}(s)^\dagger}e^{TA_{\mathrm{eff}}(s)}\ket{\psi_0}=\braket{u_s(T)|u_s(T)}=\|\ket{u_s(T)}\|^2=p(s),
    \end{equation}
    so the bounds from Step 3 transfer directly to the quantities stated.
\end{proof}

\subsection{Deterministic Extrapolation Biases}
The derivative bound established in \cref{thm:derivBounds} converts directly into a bias bound for any linear combination of $g_O(s_\ell)$ values, through the classical Lagrange interpolation remainder formula.

Let $s_1,\ldots, s_m$ be distinct realizable step sizes in $(0,s_{\mathrm{max}}]$. Define the Lagrange basis polynomials and the resulting extrapolation weights to zero
\begin{equation} \label{eq:def_a_l_Lambda_m}
    \phi_\ell(x):=\prod_{k\ne \ell}\frac{x-s_k}{s_\ell-s_k}, \quad a_\ell:=\phi_\ell(0), \quad \Lambda_m:=\sum_{\ell=1}^{m}|a_\ell|.
\end{equation}
The extrapolated estimator is 
\begin{equation}
    \hat{g}_O(0):=\sum_{\ell=1}^{m}a_\ell\hat{g}_O(s_\ell),
\end{equation}
where $\hat{g}_O(s_\ell)$ denotes a raw estimator of $g_O(s_\ell)$ obtained from circuit measurements.

\begin{theorem}[Deterministic Extrapolation Bias]\label{thm:biasBound}
    Suppose $s_\ell\in[0,s_Q]$ for every $\ell$, where $s_Q$ is as in \cref{thm:derivBounds}. Then
    \begin{equation}
        \left|\sum_{\ell=1}^{m}a_\ell g_O(s_\ell)-g_O(0)\right|\le C\|O\|(C'Q)^m\prod_{\ell=1}^{m}s_\ell.
    \end{equation}
\end{theorem}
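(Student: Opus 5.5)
We would prove the bound by viewing $\sum_\ell a_\ell g_O(s_\ell)$ as the value at $x=0$ of the Lagrange interpolating polynomial $p_{m-1}$ of degree at most $m-1$ through the nodes $(s_\ell, g_O(s_\ell))$. By definition of the basis polynomials $\phi_\ell$ in \cref{eq:def_a_l_Lambda_m},
\begin{equation}
    p_{m-1}(x)=\sum_{\ell=1}^m g_O(s_\ell)\phi_\ell(x), \qquad \text{so} \qquad p_{m-1}(0)=\sum_{\ell=1}^m a_\ell g_O(s_\ell).
\end{equation}
The quantity to bound is therefore the interpolation error $|p_{m-1}(0)-g_O(0)|$ at the point $0$. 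This point lies outside or at the edge of the node set, but it is still inside the interval $[0,s_Q]$ on which \cref{thm:derivBounds} controls the derivatives.

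Here $g_O$ means the real restriction of the holomorphic lift $G_O$ from the proof of \cref{thm:derivBounds}. It is real-analytic, hence $C^m$, on $[0,s_Q]$, and it agrees with the circuit data at the realizable nodes $s_\ell$ and at $0$. At $0$ this holds because $A_{\mathrm{eff}}(0)=A$ and $A^\sharp_{\mathrm{eff}}(0)=A^\dagger$, which gives $G_O(0)=\bra{\psi(T)}O\ket{\psi(T)}$.

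The classical Lagrange remainder formula then gives some $\xi$ in the convex hull of $\{0,s_1,\ldots,s_m\}\subseteq[0,s_Q]$ with
\begin{equation}
    g_O(0)-p_{m-1}(0)=\frac{g_O^{(m)}(\xi)}{m!}\prod_{\ell=1}^{m}(0-s_\ell).
\end{equation}
This formula is proved by applying Rolle's theorem $m$ times to $g_O-p_{m-1}-\lambda\prod_\ell(x-s_\ell)$, with $\lambda$ chosen so the function vanishes at $x=0$. If $0$ happened to be a node the error would be zero, but nodes lie in $(0,s_{\max}]$, so this case does not arise.

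Taking absolute values and inserting the $m$-th derivative bound of \cref{thm:derivBounds}, $|g_O^{(m)}(\xi)|\le C\|O\|(C'Q)^m m!$, the factor $m!$ cancels and we obtain
\begin{equation}
    \Bigl|\sum_\ell a_\ell g_O(s_\ell)-g_O(0)\Bigr|\le C\|O\|(C'Q)^m\prod_{\ell=1}^m s_\ell,
\end{equation}
using $s_\ell>0$.

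The main care point is the identification step, not the remainder formula. The remainder formula must be applied to a single smooth function on the whole interval $[0,s_Q]$, while $g_O$ is only defined by the circuit at the discrete realizable step sizes. So we must check that the analytic lift takes the correct values at each node and at $0$, and that its derivative bound holds uniformly on the whole interval containing $\xi$. Both are supplied by Step 4 and Step 3 of \cref{thm:derivBounds}, provided every $s_\ell\le s_Q$, which is exactly the hypothesis.

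An alternative route avoids Rolle's theorem and works directly with the holomorphic $G_O$ through the Hermite contour-integral remainder. That would give the same form of bound with possibly different constants, so we would use it only if a complex-analytic form were needed later.
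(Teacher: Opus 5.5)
Your proposal is correct and follows the paper's proof: you identify $\sum_\ell a_\ell g_O(s_\ell)$ with the degree-$(m-1)$ interpolant evaluated at $0$, apply the Lagrange remainder formula with $\xi\in[0,s_Q]$, and insert the derivative bound of \cref{thm:derivBounds} so that the $m!$ cancels. You also check explicitly that the holomorphic lift $G_O$ agrees with the circuit data at the nodes and at $0$ (via $A_{\mathrm{eff}}(0)=A$ and $A^\sharp_{\mathrm{eff}}(0)=A^\dagger$); the paper leaves this implicit, and it does not change the argument.
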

\begin{proof}
    Since $\sum_{\ell=1}^{m}a_\ell g_O(s_\ell)$ is, by construction, the value at $x=0$ of the unique degree at most $(m-1)$ polynomial interpolating $g_O$ at $s_1,\ldots, s_m$, the classical Lagrange interpolation remainder formula gives
    \begin{equation}
        g_O(0)-\sum_{\ell=1}^{m}a_\ell g_O(s_\ell)=\frac{g_O^{(m)}(\xi)}{m!}\prod_{\ell=1}^{m}(0-s_\ell),
    \end{equation}
    for some $\xi$ in the interval spanned by $0,s_1,\ldots,s_m$, hence $\xi\in[0,s_Q]$. Applying \cref{thm:derivBounds} at $\xi$,
    \begin{equation}
        \left|g_O^{(m)}(\xi)\right|\le C\|O\|(C'Q)^m m!,
    \end{equation}
    so
    \begin{equation}
        \left|g_O(0)-\sum_{\ell=1}^{m}a_\ell g_O(s_\ell)\right|\le \frac{C\|O\|(C'Q)^m m!}{m!}\prod_{\ell=1}^{m} |s_\ell| = C\|O\|(C'Q)^m\prod_{\ell=1}^{m}s_\ell,
    \end{equation}
    using $s_\ell\ge 0$.
\end{proof}

\section{Concentration and Quantum Resource Estimate}\label{sec:resources}
This section establishes the quantum resource estimates. We first use concentration inequalities to derive sampling-cost bounds for the zero-on-failure estimator, and then obtain maximum single-run circuit-depth and total gate-count estimates for both Richardson extrapolation and Chebyshev interpolation.

\subsection{Sampling Model and Concentration}\label{subsec:samplingModel}
\cref{thm:biasBound} bounds the deterministic error of extrapolating from exact values $g_O(s_\ell)$. In practice each $g_O(s_\ell)$ is itself only available as an estimate from a finite number of circuit executions, using the random variable $Y_s$ defined in \cref{subsec:extrapolation}. We bound the resulting statistical error and the number of executions required to control it.

\begin{lemma}[Unbiasedness]\label{lem:unbiasedness}
    $\mathbb{E}[Y_s]=g_O(s)$, and $|Y_s|\le \|O\|$.
\end{lemma}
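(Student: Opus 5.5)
The plan is to model one circuit run as a sequence of quantum instruments and compute the joint distribution of the postselection outcomes together with the final $O$-measurement. First I fix the physical setup. Each $C_j(s)$ step attaches the ancilla in $\ket{0}$, applies $e^{i\sqrt{2s}G_j}$, and measures the ancilla in the computational basis. Conditional on outcome $0$, the unnormalized post-measurement system state is $C_j(s)\ket{\phi}$, by the definition of $C_j(s)$ as the $\ket{0}$-block. The probability of this outcome is $\|C_j(s)\ket{\phi}\|^2/\|\ket{\phi}\|^2$ relative to the current normalized state. The Hamiltonian step $e^{-isH}$ is unitary and needs no measurement.

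Next I chain these steps. By induction over the $RJ$ postselections, the probability that every postselection succeeds equals $\|K_s^R\ket{\psi_0}\|^2=p(s)$. The key point is that conditional probabilities multiply, and the normalizations telescope. Conditional on success, the normalized system state is $\ket{u_s(T)}/\sqrt{p(s)}$ whenever $p(s)>0$. Measuring $O$ in its eigenbasis $O=\sum_k\lambda_k\Pi_k$ on this state returns $\lambda_k$ with probability $\bra{u_s(T)}\Pi_k\ket{u_s(T)}/p(s)$.

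With the distribution in hand, I compute the expectation:
\begin{equation}
\mathbb{E}[Y_s]=p(s)\sum_k\lambda_k\frac{\bra{u_s(T)}\Pi_k\ket{u_s(T)}}{p(s)}+(1-p(s))\cdot 0=\bra{u_s(T)}O\ket{u_s(T)}=g_O(s).
\end{equation}
The case $p(s)=0$ is trivial: then $\ket{u_s(T)}=0$, so $Y_s=0$ almost surely and $g_O(s)=0$. For the bound, $Y_s$ takes values in $\{0\}\cup\operatorname{spec}(O)$, and every eigenvalue satisfies $|\lambda_k|\le\|O\|$ because $O$ is Hermitian. Hence $|Y_s|\le\|O\|$.

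The main obstacle is making the sequential-postselection step rigorous, that is, showing that the success probability of the whole circuit is exactly $\|K_s^R\ket{\psi_0}\|^2$. I would handle this cleanly by working with unnormalized branch vectors. The unnormalized vector carried along the all-success branch is $\prod(\text{blocks})\ket{\psi_0}$, and by the Born rule its squared norm is the probability of that branch. Induction on the number of measurements then closes the argument. Contractivity (\cref{lem:contractivity}) confirms that all these probabilities are at most one.
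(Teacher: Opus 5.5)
Your proposal is correct and follows the same route as the paper. Both arguments condition on all postselections succeeding, which happens with probability $p(s)$ and leaves the state $\ket{u_s(T)}/\sqrt{p(s)}$, compute the conditional mean $g_O(s)/p(s)$, multiply by $p(s)$, and bound $|Y_s|$ using the spectrum of $O$. The extra details you supply, namely the telescoping Born-rule argument that the all-success probability is exactly $\|K_s^R\ket{\psi_0}\|^2$ and the degenerate case $p(s)=0$, are taken for granted in the paper, so they only make the argument more complete.
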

\begin{proof}
    Conditional on success, the postselected state is $\frac{\ket{u_s(T)}}{\|\ket{u_s(T)}\|}$, so the conditional expectation of the measured value is 
    \begin{equation}
        \frac{\bra{u_s(T)}O\ket{u_s(T)}}{\|\ket{u_s(T)}\|^2}=\frac{g_O(s)}{p(s)}.
    \end{equation}
    Since success occurs with probability $p(s)$,
    \begin{equation}
        \mathbb{E}[Y_s]=p(s)\cdot \frac{g_O(s)}{p(s)}+(1-p(s))\cdot 0 =g_O(s).
    \end{equation}

    The bound $|Y_s|\le \|O\|$ holds because $Y_s$ is either $0$ or an eigenvalue of $O$ realized by the measurement, both bounded by $\|O\|$.
\end{proof}

Boundedness and unbiasedness are the hypotheses required to apply Hoeffding's inequality, used below to bound the sampling error.
\begin{theorem}[Sampling Cost]\label{thm:samplingCost}
    Let $s_1,\ldots, s_m$ and $a_1,\ldots, a_m$ be as in \cref{thm:biasBound}, with $\Lambda_m:=\sum_\ell|a_\ell|$. Fix a target accuracy $\epsilon>0$ and failure probability $\delta\in(0,1)$. For each node $\ell$, draw
    \begin{equation}
        N:=\left\lceil \frac{8\|O\|^2\Lambda_m^2}{\epsilon^2}\log\frac{2m}{\delta} \right\rceil
    \end{equation}
    independent samples of $Y_{s_\ell}$, written $Y^{(1)}_{s_\ell},\ldots, Y^{(N)}_{s_\ell}$, and let 
    \begin{equation}
        \hat{g}_O(s_\ell):=\frac{1}{N}\sum_{w=1}^{N}Y^{(w)}_{s_\ell}
    \end{equation}
    be their average. Then,
    \begin{equation}
        \mathbb{P}\left(\left|\sum_{\ell=1}^{m}a_\ell \hat{g}_O(s_\ell)-\sum_{\ell=1}^{m}a_\ell g_O(s_\ell)\right|>\frac{\epsilon}{2}\right)\le \delta
    \end{equation}
\end{theorem}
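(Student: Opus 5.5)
The argument combines the unbiasedness and boundedness of \cref{lem:unbiasedness} with Hoeffding's inequality applied node by node, followed by a union bound over the $m$ nodes and the triangle inequality weighted by $|a_\ell|$. The key choice is the per-node accuracy $\eta:=\epsilon/(2\Lambda_m)$, matching \cref{alg:observableEstimation}. If every node satisfies $|\hat g_O(s_\ell)-g_O(s_\ell)|\le\eta$, then
\begin{equation}
    \left|\sum_{\ell=1}^{m}a_\ell\hat g_O(s_\ell)-\sum_{\ell=1}^{m}a_\ell g_O(s_\ell)\right|\le\sum_{\ell=1}^{m}|a_\ell|\,\eta=\Lambda_m\eta=\frac{\epsilon}{2}.
\end{equation}
Hence the bad event in the statement is contained in the union over $\ell$ of the events $\{|\hat g_O(s_\ell)-g_O(s_\ell)|>\eta\}$. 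It therefore suffices to show that each of these has probability at most $\delta/m$.

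For a fixed node $\ell$, the samples $Y^{(1)}_{s_\ell},\ldots,Y^{(N)}_{s_\ell}$ are i.i.d. By \cref{lem:unbiasedness}, they have mean $g_O(s_\ell)$ and take values in $[-\|O\|,\|O\|]$, an interval of length $2\|O\|$. Hoeffding's inequality for the sample mean then gives
\begin{equation}
    \mathbb{P}\left(|\hat g_O(s_\ell)-g_O(s_\ell)|>\eta\right)\le 2\exp\left(-\frac{2N\eta^2}{(2\|O\|)^2}\right)=2\exp\left(-\frac{N\epsilon^2}{8\|O\|^2\Lambda_m^2}\right).
\end{equation}
With the stated choice $N\ge \frac{8\|O\|^2\Lambda_m^2}{\epsilon^2}\log\frac{2m}{\delta}$, the exponent is at least $\log(2m/\delta)$, so the right-hand side is at most $\delta/m$. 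A union bound over the $m$ nodes yields total failure probability at most $\delta$, which proves the claim. Independence across nodes is not even needed.

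The only step requiring care is the constant bookkeeping in Hoeffding's inequality: using the range length $2\|O\|$ rather than $\|O\|$, and checking that $\eta^2=\epsilon^2/(4\Lambda_m^2)$ produces exactly the factor $8$. The degenerate case $\|O\|=0$ is trivial, since then $Y_s\equiv 0$ and both sums vanish.
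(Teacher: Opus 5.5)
Your proposal is correct and follows the paper's proof essentially step for step. Both arguments apply Hoeffding's inequality per node with accuracy $\epsilon/(2\Lambda_m)$ on the range $[-\|O\|,\|O\|]$, giving the same exponent $N\epsilon^2/(8\|O\|^2\Lambda_m^2)$, and then combine a union bound over the $m$ nodes with the $|a_\ell|$-weighted triangle inequality.
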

\begin{proof}
    Fix a node $\ell$. By \cref{lem:unbiasedness}, $\mathbb{E}[Y^{(w)}_{s_\ell}]=g_O(s_\ell)$ for each $w$, and $Y^{(w)}_{s_\ell}\in[-\|O\|,\|O\|]$ almost surely. Since $Y^{(1)}_{s_\ell},\cdots, Y^{(N)}_{s_\ell}$ are independent and identically distributed, Hoeffding's inequality applied to their average $\hat{g}_O(s_\ell)$ gives, for any $\varphi>0$,
    \begin{equation}
        \mathbb{P}(|\hat{g}_O(s_\ell)-g_O(s_\ell)|\ge \varphi)\le 2\exp\left(-\frac{N\varphi^2}{2\|O\|^2}\right).
    \end{equation}
    Set $\varphi:=\epsilon/(2\Lambda_m)$. Substituting the stated value of $N$,
    \begin{equation}
        \frac{N\varphi^2}{2\|O\|^2}\ge \frac{1}{2\|O\|^2}\cdot\frac{8\|O\|^2\Lambda_m^2}{\epsilon^2}\log\frac{2m}{\delta}\cdot \frac{\epsilon^2}{4\Lambda_m^2}=\log\frac{2m}{\delta},
    \end{equation}
    so
    \begin{equation}
        2\exp\left(-\frac{N\varphi^2}{2\|O\|^2}\right)\le 2\exp\left(-\log \frac{2m}{\delta}\right)=2\cdot \frac{\delta}{2m}=\frac{\delta}{m}.
    \end{equation}

    Hence $\mathbb{P}(|\hat{g}_O(s_\ell)-g_O(s_\ell)|\ge \varphi)\le \delta/m$ for each $\ell$. By the union bound over the $m$ nodes, with probability at least $1-\delta$, $|\hat{g}_O(s_\ell)-g_O(s_\ell)|< \varphi$ simultaneously for every $\ell=1,\ldots ,m$. On this event, the triangle inequality gives
    \begin{equation}
        \left|\sum_{\ell=1}^{m}a_\ell\hat{g}_O(s_\ell)-\sum_{\ell=1}^{m}a_\ell g_O(s_\ell)\right|\le \sum_{\ell=1}^{m}|a_\ell|\cdot |\hat{g}_O(s_\ell)-g_O(s_\ell)|<\Lambda_m\varphi=\frac{\epsilon}{2}.
    \end{equation}
\end{proof}

\cref{thm:biasBound} and \cref{thm:samplingCost} combine into the total error guarantee for the extrapolated estimator.
\begin{corollary}[Total Error Bound]\label{cor:totalError}
    Suppose $s_1,\ldots, s_m\in[0,s_Q]$ are chosen so that the deterministic bias from \cref{thm:biasBound} is at most $\epsilon/2$, and let $N$ be as in \cref{thm:samplingCost}. Then
    \begin{equation}
        \mathbb{P}\left(\left|\sum_{\ell=1}^{m}a_\ell \hat{g}_O(s_\ell)-g_O(0)\right|>\epsilon\right)\le \delta
    \end{equation}
\end{corollary}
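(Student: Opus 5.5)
The plan is to split the total error into deterministic bias plus statistical error and combine \cref{thm:biasBound} with \cref{thm:samplingCost} through the triangle inequality. Write
\begin{equation}
    \sum_{\ell=1}^{m}a_\ell \hat{g}_O(s_\ell)-g_O(0)=\left(\sum_{\ell=1}^{m}a_\ell \hat{g}_O(s_\ell)-\sum_{\ell=1}^{m}a_\ell g_O(s_\ell)\right)+\left(\sum_{\ell=1}^{m}a_\ell g_O(s_\ell)-g_O(0)\right).
\end{equation}
The second bracket is deterministic. Because $s_\ell\in[0,s_Q]$, \cref{thm:biasBound} applies, and the hypothesis of the corollary says this bias is at most $\epsilon/2$.

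For the first bracket, the weights $a_\ell$ are the Lagrange weights from \cref{eq:def_a_l_Lambda_m}, so $\Lambda_m$ is exactly the quantity used in \cref{thm:samplingCost}. Drawing $N$ samples per node as prescribed, \cref{thm:samplingCost} gives that the event $\mathcal{E}$, on which the first bracket has absolute value at most $\epsilon/2$, has probability at least $1-\delta$.

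On $\mathcal{E}$, the triangle inequality gives a total error of at most $\epsilon/2+\epsilon/2=\epsilon$. Hence the event that the total error exceeds $\epsilon$ is contained in the complement of $\mathcal{E}$, and its probability is at most $\delta$.

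There is no real obstacle here. The only care needed is that the bias bound holds at the actual circuit values $g_O(s_\ell)$, and this is guaranteed by the identification step of \cref{thm:derivBounds} at realizable step sizes.
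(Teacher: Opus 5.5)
Your proposal is correct and matches the paper's proof: the same triangle-inequality split, with the bias term bounded by $\epsilon/2$ by hypothesis and the statistical term controlled by \cref{thm:samplingCost}. Your closing remark about identifying $g_O(s_\ell)$ with the circuit values at realizable step sizes is a harmless extra check that the paper leaves implicit.
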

\begin{proof}
    By the triangle inequality,
    \begin{equation}
        \left|\sum_{\ell=1}^{m}a_\ell\hat{g}_O(s_\ell)-g_O(0)\right| \le \left|\sum_{\ell=1}^{m}a_\ell\hat{g}_O(s_\ell)-\sum_{\ell=1}^{m}a_\ell g_O(s_\ell)\right| + \left|\sum_{\ell=1}^{m}a_\ell{g}_O(s_\ell)-g_O(0)\right|.
    \end{equation}
    The second term is the deterministic bias, at most $\epsilon/2$ by hypothesis. By \cref{thm:samplingCost}, the first term exceeds $\epsilon/2$ with probability at most $\delta$. Hence, the sum exceeds $\epsilon$ with probability at most $\delta$.
\end{proof}

\subsection{Resource Estimate}\label{subsec:resource}
\cref{thm:samplingCost} gives the number of circuit executions $N$ required for each step size. We convert this into circuit depth and gate count. For a step size $s_\ell$, one raw circuit execution to the final time $T$ has depth
\begin{equation}
    D_{\mathrm{run}}(s_\ell):=\frac{T}{s_\ell}d_{\mathrm{step}},
\end{equation}
where $d_{\mathrm{step}}$ is the depth of a single time step. If $g_\mathrm{step}$ is the gate count of one time step and $g_{\mathrm{meas}}$ is the gate count of the final measurement of $O$, one raw execution at $s_{\ell}$ uses
\begin{equation}
    G_{\mathrm{run}}(s_\ell):= g_{\mathrm{step}}\frac{T}{s_\ell}+g_{\mathrm{meas}}
\end{equation}
gates.

\begin{proposition}[Resource Accounting]\label{prop:resourceAccounting}
    Let $N$ be as in \cref{thm:samplingCost}. The maximum single-run circuit depth is 
    \begin{equation}
        D_\mathrm{max}=\max_{1\le\ell\le m} D_\mathrm{run}(s_\ell)=d_\mathrm{step}T\max_{1\le \ell\le m}\frac{1}{s_\ell}.
    \end{equation}
    The total number of raw circuit executions is $N_\mathrm{total}=mN$. The aggregate circuit depth budget, summed over all raw executions, is 
    \begin{equation}
        D_\mathrm{total}=N\sum_{\ell=1}^{m}D_\mathrm{run}(s_\ell)=Nd_\mathrm{step}T\sum_{\ell=1}^{m}\frac{1}{s_\ell}.
    \end{equation}
    The total gate count is
    \begin{equation}
        G_\mathrm{total}=N\sum_{\ell=1}^{m}G_\mathrm{run}(s_\ell)=N\sum_{\ell=1}^{m}\left(g_\mathrm{step}\frac{T}{s_\ell}+g_\mathrm{meas}\right)
    \end{equation}
\end{proposition}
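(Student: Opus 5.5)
This is a bookkeeping statement, and I would prove it directly from the definitions of $D_{\mathrm{run}}$ and $G_{\mathrm{run}}$ together with the sampling schedule of \cref{alg:observableEstimation} and \cref{thm:samplingCost}. No analytic estimate is needed.

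\textbf{Step 1 (the per-run cost formulas).} A single raw execution at node $\ell$ consists of $R_\ell=T/s_\ell$ applications of the one-step circuit for $K_{s_\ell}$, followed by one measurement of $O$. The one-step circuit has depth $d_{\mathrm{step}}$ and gate count $g_{\mathrm{step}}$. Each step contains the local Hamiltonian simulation of $e^{-is_\ell H}$ and the $J$ dilated evolutions $e^{i\sqrt{2s_\ell}G_j}$ with their mid-circuit ancilla measurements and resets. The steps are applied sequentially on the same registers, so the depths add, giving $D_{\mathrm{run}}(s_\ell)=(T/s_\ell)d_{\mathrm{step}}$. The gate counts also add, giving $G_{\mathrm{run}}(s_\ell)=g_{\mathrm{step}}T/s_\ell+g_{\mathrm{meas}}$.

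One convention must be fixed here, and it is the only real subtlety. A run whose postselection fails early is still charged its full depth, because in \cref{alg:observableEstimation} every run is executed to the final time $T$ and simply records zero on failure. This makes the stated quantities worst-case upper counts rather than expectations.

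\textbf{Step 2 (maximum depth).} Since $D_{\mathrm{run}}(s_\ell)$ is proportional to $1/s_\ell$, taking the maximum over $\ell$ gives $D_{\max}=d_{\mathrm{step}}T\max_\ell 1/s_\ell$. This is attained at the smallest node.

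\textbf{Step 3 (totals).} By \cref{thm:samplingCost}, each of the $m$ nodes receives the same number $N$ of independent executions, so $N_{\mathrm{total}}=mN$. Summing the per-run depths over all executions gives $D_{\mathrm{total}}=\sum_\ell N\,D_{\mathrm{run}}(s_\ell)$, and factoring out $N d_{\mathrm{step}} T$ yields the stated expression. Summing $G_{\mathrm{run}}$ in the same way gives $G_{\mathrm{total}}$; the term $g_{\mathrm{meas}}$ is counted once per run, even on runs that fail, again by the charging convention of Step 1.

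I expect no real obstacle. The only care needed is to state explicitly the Step 1 convention that failed runs are charged full cost, and that $T/s_\ell=R_\ell$ is an integer because the nodes are realizable.
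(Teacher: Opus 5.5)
Your proposal is correct and follows essentially the same route as the paper: read $D_{\max}$ off the definition of $D_{\mathrm{run}}$ (maximized at the smallest node), then obtain $N_{\mathrm{total}}=mN$, $D_{\mathrm{total}}$, and $G_{\mathrm{total}}$ by summing per-run costs over the $N$ executions at each of the $m$ nodes. Your explicit convention that every run, including one whose postselection fails, is charged full depth and the measurement cost is consistent with \cref{alg:observableEstimation} and with the paper's definition of $G_{\mathrm{run}}$. The paper leaves this convention implicit.
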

\begin{proof}
    $D_\mathrm{max}$ follows directly from the definition of $D_\mathrm{run}$, since $D_\mathrm{run}(s_\ell)$ decreases as $s_\ell$ increases and the maximum is attained at smallest $s_\ell$. Each step size is executed $N$ times by \cref{thm:samplingCost}, giving $N_\mathrm{total}=mN$ raw executions in total. Since $N$ does not depend on $\ell$, summing the depth $D_\mathrm{run}(s_\ell)$ over all $N$ executions at node $\ell$ and then over all $m$ step sizes gives $D_\mathrm{total}=N\sum_\ell D_\mathrm{run}(s_\ell)$. The same argument applied to the gate count per execution gives $G_\mathrm{total}$.
\end{proof}
Note that $D_\mathrm{total}$ is an aggregate depth budget, the sum of circuit depths over all raw executions, not a gate count. 

\begin{rmk}[Independence from Success-Probability]
    The success probability $p(s_\ell)$ does not appear as a factor in $N$, $D_\mathrm{total}$, or $G_\mathrm{total}$, because failed trajectories are not discarded but contribute the value zero. Consequently, the Hoeffding bound in \cref{thm:samplingCost} depends only on $\|O\|$, not on $p(s_\ell)$. This is specific to estimating $g_O$ using every raw execution. If the target were $f_O(s)=g_O(s)/p(s)$ instead, by discarding failed runs and averaging only over successes, it would require roughly $p(s_\ell)^{-1}$ as many raw executions to obtain the same number of valid samples, reintroducing the dependence on $p(s_\ell)$. 
\end{rmk}

\subsection{Richardson Resources}\label{subsec:richardson}
Richardson extrapolation instantiates \cref{thm:biasBound} and \cref{prop:resourceAccounting} with harmonically spaced step sizes. Let $s_\mathrm{max}>0$ be a prescribed upper bound on the largest step size, and define
\begin{equation}
    R_1=\left\lceil\frac{T}{s_\mathrm{max}}\right\rceil, \quad R_\ell=\ell R_1, \quad s_\ell=\frac{T}{R_\ell}, \quad  \ell=1,\ldots, m.
\end{equation}
Since $R_\ell\in\mathbb{N}$, all of the step sizes are realizable. Moreover,
\begin{equation}
    s_\ell=\frac{s_1}{\ell}, \quad s_1=\frac{T}{\lceil T/s_{\mathrm{max}} \rceil},
\end{equation}
so the realizable nodes retain the harmonic spacing exactly. We assume throughout that $s_\mathrm{max}\le T$. Then,
\begin{equation}
    \frac{s_\mathrm{max}}{2}\le s_1\le s_\mathrm{max}.
\end{equation}
The upper bound is given by the construction. From $\frac{T}{s_\mathrm{max}} \ge \frac{1}{2}\left\lceil\frac{T}{s_\mathrm{max}}\right\rceil$, we get the lower bound $s_1=\frac{T}{\lceil T/s_\mathrm{max}\rceil}\ge \frac{s_\mathrm{max}}{2}$.

\begin{lemma}[Richardson Node Bounds]\label{lem:richardsonNodes}
    Under this condition,
    \begin{equation}
        \prod_{\ell=1}^{m}s_\ell \le \frac{(s_\mathrm{max})^m}{m!}, \quad 
        \max_{1\le \ell \le m}\frac{1}{s_\ell}\le \frac{2m}{s_\mathrm{max}}, \quad 
        \sum_{\ell=1}^{m}\frac{1}{s_\ell}\le \frac{m(m+1)}{s_\mathrm{max}}.
    \end{equation}
\end{lemma}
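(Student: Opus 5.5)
The plan is to read off all three bounds from the two facts recorded just before the lemma: the exact harmonic relation $s_\ell = s_1/\ell$, and the sandwich $s_\mathrm{max}/2 \le s_1 \le s_\mathrm{max}$, which holds under the standing assumption $s_\mathrm{max}\le T$. Each estimate is an elementary computation once $s_\ell$ is written in terms of $s_1$. After that, we apply whichever side of the sandwich is needed: the upper bound on $s_1$ for the product, and the lower bound on $s_1$ for the reciprocals.

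\textbf{Product bound.} We write
\begin{equation}
    \prod_{\ell=1}^{m}s_\ell=\prod_{\ell=1}^{m}\frac{s_1}{\ell}=\frac{s_1^m}{m!}.
\end{equation}
Then $s_1\le s_\mathrm{max}$ gives $\prod_\ell s_\ell\le s_\mathrm{max}^m/m!$.

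\textbf{Maximum reciprocal.} Since $1/s_\ell=\ell/s_1$ is increasing in $\ell$, the maximum is attained at $\ell=m$ and equals $m/s_1$. The lower bound $s_1\ge s_\mathrm{max}/2$ then yields $m/s_1\le 2m/s_\mathrm{max}$.

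\textbf{Sum of reciprocals.} Similarly,
\begin{equation}
    \sum_{\ell=1}^{m}\frac{1}{s_\ell}=\frac{1}{s_1}\sum_{\ell=1}^{m}\ell=\frac{m(m+1)}{2s_1}\le\frac{m(m+1)}{s_\mathrm{max}},
\end{equation}
again using $s_1\ge s_\mathrm{max}/2$.

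There is no real obstacle here. The only point needing care is the lower bound $s_1\ge s_\mathrm{max}/2$, which rests on $\lceil x\rceil\le 2x$ for $x=T/s_\mathrm{max}\ge 1$. This is precisely where the assumption $s_\mathrm{max}\le T$ enters, and it was already justified in the text preceding the lemma, so I will simply cite it.
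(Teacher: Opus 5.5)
Your proof is correct and follows the paper's argument essentially verbatim: write $s_\ell=s_1/\ell$, then use $s_1\le s_\mathrm{max}$ for the product bound and $s_1\ge s_\mathrm{max}/2$ for the two reciprocal bounds. The only difference is cosmetic: you compute the maximum and the sum exactly in terms of $s_1$ before applying the lower bound, whereas the paper first bounds each $1/s_\ell\le 2\ell/s_\mathrm{max}$ and then takes the maximum and the sum.
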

\begin{proof}
    The equation $s_\ell=\frac{s_1}{\ell}$ and the upper bound $s_1\le s_\mathrm{max}$ gives $\prod_\ell s_\ell= \prod_\ell \frac{s_1}{\ell}=\frac{s_1^m}{m!}\le\frac{s_\mathrm{max}^m}{m!}$. 
    
    The lower bound $s_\ell=\frac{s_1}{\ell} \ge \frac{s_\mathrm{max}}{2\ell}$ gives $\frac{1}{s_\ell}\le \frac{2\ell}{s_\mathrm{max}}$, and taking the maximum over $\ell=1,\ldots, m$ attains its largest value at $\ell=m$. This gives $\max_\ell \frac{1}{s_\ell}\le \frac{2m}{s_\mathrm{max}}$. 
    
    Summing the same bound over $\ell=1,\ldots,m$ gives $\sum_\ell\frac{1}{s_\ell} \le \frac{2}{s_\mathrm{max}}\sum_{\ell=1}^{m}\ell=\frac{m(m+1)}{s_\mathrm{max}}$.
\end{proof}

The product bound controls the deterministic bias, and the two reciprocal bounds control the maximum depth, the aggregate depth, and gate budgets.
\begin{theorem}[Richardson Resources]\label{thm:richardsonResources} 
    Let $Q:=\max\{S,T\Gamma_{\mathrm{loc}}\}$. There is a universal constant $c>0$ such that, choosing $s_\mathrm{max}:=c/Q$, the deterministic bias is at most $\epsilon/2$ provided
    \begin{equation}
        m=\mathcal{O}\left(\log\frac{\|O\|}{\epsilon}\right).
    \end{equation}
    With this choice, the maximum single-run circuit depth is 
    \begin{equation}
        D^\mathrm{Rich}_\mathrm{max}=\mathcal{O}\left(d_\mathrm{step}TQ\log\frac{\|O\|}{\epsilon}\right),
    \end{equation}
    the aggregate depth budget is 
    \begin{equation}
        D^\mathrm{Rich}_\mathrm{total}=\mathcal{O}\left(\frac{\|O\|^2\Lambda_m^2}{\epsilon^2}\log\frac{m}{\delta}\cdot d_\mathrm{step}TQ\left(\log\frac{\|O\|}{\epsilon}\right)^2\right),
    \end{equation}
    and the total gate count is 
    \begin{equation}
        G^\mathrm{Rich}_\mathrm{total}=\mathcal{O}\left(\frac{\|O\|^2\Lambda_m^2}{\epsilon^2}\log\frac{m}{\delta}\cdot g_\mathrm{step}TQ\left(\log\frac{\|O\|}{\epsilon}\right)^2\right).
    \end{equation}
\end{theorem}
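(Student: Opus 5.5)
The proof will combine \cref{thm:biasBound} with the Richardson node bounds of \cref{lem:richardsonNodes}, and then feed the resulting $m$ into \cref{prop:resourceAccounting}.

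\textbf{Step 1: admissibility of the nodes.} I first fix the constant. Take $c$ in $s_\mathrm{max}=c/Q$ to be at most the constant of \cref{thm:derivBounds}, and possibly smaller as determined in Step 2. Since $s_\ell\le s_1\le s_\mathrm{max}\le s_Q$, every node lies in $[0,s_Q]$, so \cref{thm:biasBound} applies. I keep the standing assumption $s_\mathrm{max}\le T$; if $T<c/Q$, I simply replace $s_\mathrm{max}$ by $\min\{c/Q,T\}$, which only makes the bias smaller.

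\textbf{Step 2: bias bound and choice of $m$.} Combining \cref{thm:biasBound} with the product bound of \cref{lem:richardsonNodes} gives
\begin{equation}
    \left|\sum_{\ell=1}^{m}a_\ell g_O(s_\ell)-g_O(0)\right|\le C\|O\|\frac{(C'Q s_\mathrm{max})^m}{m!}=C\|O\|\frac{(C'c)^m}{m!}.
\end{equation}
The constant in $s_\mathrm{max}$ appears here multiplied by $C'$. Since $C'=1/c_{\mathrm{deriv}}$ was defined in \cref{thm:derivBounds}, I must pick the Richardson constant $c$ with $C'c\le 1/2$, i.e.\ $c\le c_{\mathrm{deriv}}/2$. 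The bias is then at most $C\|O\|2^{-m}/m!\le C\|O\|2^{-m}$. Choosing $m=\lceil\log_2(2C\|O\|/\epsilon)\rceil=\mathcal{O}(\log(\|O\|/\epsilon))$ makes it at most $\epsilon/2$. The factor $1/m!$ would even allow $m=\mathcal{O}(\log(\|O\|/\epsilon)/\log\log(\|O\|/\epsilon))$, but the cruder bound already gives the statement.

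\textbf{Step 3: resources.} By \cref{prop:resourceAccounting} and \cref{lem:richardsonNodes}, the maximum depth satisfies
\begin{equation}
    D_\mathrm{max}\le d_\mathrm{step}T\cdot\frac{2m}{s_\mathrm{max}}=\frac{2}{c}\,d_\mathrm{step}TQm=\mathcal{O}\left(d_\mathrm{step}TQ\log\frac{\|O\|}{\epsilon}\right).
\end{equation}
The aggregate depth is $D_\mathrm{total}=Nd_\mathrm{step}T\sum_\ell 1/s_\ell\le Nd_\mathrm{step}TQm(m+1)/c$. Substituting $N=\mathcal{O}(\|O\|^2\Lambda_m^2\epsilon^{-2}\log(m/\delta))$ from \cref{thm:samplingCost} gives the stated $(\log(\|O\|/\epsilon))^2$ factor; the ceiling in $N$ is absorbed, assuming $\Lambda_m\|O\|\ge\epsilon$. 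The gate count is handled the same way. The extra term $Nm\,g_\mathrm{meas}$ is of lower order, provided the measurement cost is dominated by the step cost. I will either assume this (e.g.\ $g_\mathrm{meas}\lesssim g_\mathrm{step}TQ$) or note that it is absorbed in the $\mathcal{O}$.

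\textbf{Main obstacle.} The only delicate point is the constant bookkeeping in Step 2. The Richardson constant $c$ must be chosen small relative to $1/C'$, so that the geometric factor $(C'Qs_\mathrm{max})^m$ decays rather than grows. A second, minor issue is that the $s_\mathrm{max}\le T$ assumption must be handled. Everything else is direct substitution.
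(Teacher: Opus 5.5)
Your proposal is correct and follows the paper's proof essentially step for step: bias from \cref{thm:biasBound} combined with the product bound of \cref{lem:richardsonNodes}, then direct substitution into \cref{prop:resourceAccounting}. The only difference is cosmetic: you shrink $c$ so that $C'c\le 1/2$ and discard the $1/m!$, whereas the paper keeps $c$ and applies Stirling's bound $m!\ge (m/e)^m$ to get $2^{-m}$ decay once $m>2eC'c$.

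One caution on your side remark: if $TQ<c$, replacing $s_\mathrm{max}$ by $\min\{c/Q,T\}$ keeps the bias small, but it gives $R_1=1$ and $D_\mathrm{max}=d_\mathrm{step}m$, which is not $\mathcal{O}(d_\mathrm{step}TQm)$. So the standing assumption $s_\mathrm{max}\le T$ is genuinely needed for the depth claim as stated.
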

\begin{proof}
    The assumption that $s_\mathrm{max}\le T$, equivalently $c\le QT$, holds once $T$ is not too small relative to $S$ or to the dissipation rate $\Gamma_{\mathrm{loc}}$. Since $QT=\max\{ST,T^2\Gamma_\mathrm{loc}\}$, it suffices that $T\ge c/S$ or $T\ge \sqrt{c/\Gamma_\mathrm{loc}}$.
    By \cref{thm:biasBound} and \cref{lem:richardsonNodes}, the deterministic bias satisfies
    \begin{equation}
        \left|\sum_{\ell=1}^{m}a_\ell g_O(s_\ell)-g_O(0)\right| \le C\|O\|(C'Q)^m\prod_{\ell=1}^{m}s_\ell\le C\|O\|(C'Qs_\mathrm{max})^m\frac{1}{m!}=C\|O\|\frac{(C'c)^m}{m!},
    \end{equation}
    using $s_\mathrm{max}=c/Q$. By Stirling's bound $m!\ge (m/e)^m$, 
    \begin{equation}
        \frac{(C'c)^m}{m!}\le \left(\frac{eC'c}{m}\right)^m.
    \end{equation}
    For $m>2eC'c$, the base satisfies $eC'c/m\le 1/2$, so the right hand side is at most $2^{-m}$. The bias is then bounded by $\epsilon/2$ once $2^{-m}\le \epsilon/(2C\|O\|)$ or $m\ge \log_2(2C\|O\|/\epsilon)$. Combining both requirements, $m\ge \max\{2eC'c,\log_2(2C\|O\|/\epsilon)\}$ suffices. Since $2eC'c$ is a fixed constant, this is $m=\mathcal{O}(\log(\|O\|/\epsilon))$.

    By \cref{prop:resourceAccounting} and \cref{lem:richardsonNodes},
    \begin{equation}
        D^\mathrm{Rich}_\mathrm{max}=d_\mathrm{step}T\max_\ell\frac{1}{s_\ell}=\mathcal{O}\left(\frac{d_\mathrm{step}Tm}{s_\mathrm{max}}\right)=\mathcal{O}\left(d_\mathrm{step}TQm\right),
    \end{equation}
    and substituting $m=\mathcal{O}(\log\frac{\|O\|}{\epsilon})$ gives the stated form. Likewise, 
    \begin{equation}
        D^\mathrm{Rich}_\mathrm{total}=Nd_\mathrm{step}T\sum_\ell\frac{1}{s_\ell}=\mathcal{O}\left(\frac{Nd_\mathrm{step}Tm(m+1)}{s_\mathrm{max}}\right)=\mathcal{O}(Nd_\mathrm{step}TQm^2),
    \end{equation}
    and 
    \begin{equation}
        G^\mathrm{Rich}_\mathrm{total}=N\sum_\ell\left(g_\mathrm{step}\frac{T}{s_\ell}+g_\mathrm{meas}\right)=\mathcal{O}(Ng_\mathrm{step}TQm^2+Nmg_\mathrm{meas})=\mathcal{O}(Ng_\mathrm{step}TQm^2),
    \end{equation}
    where $N$ is as in \cref{thm:samplingCost}, and substituting $m=\mathcal{O}(\log\frac{\|O\|}{\epsilon})$ gives the stated forms. $Nmg_\mathrm{meas}$ term was absorbed as $m^2$ term eventually dominates.
\end{proof}

\subsection{Chebyshev Resources}\label{subsec:chebyshev}
Chebyshev interpolation instantiates the same two results with nodes clustered according to Chebyshev polynomial roots rather than spaced harmonically. Let $s_\mathrm{max}>0$, and suppose the nodes are realizable, $s_\ell=T/R_\ell$ for some $R_\ell\in\mathbb{N}$, and satisfy
\begin{equation}
    c_{ch}\theta_\ell \le s_\ell\le \theta_\ell, \quad \theta_\ell:=s_\mathrm{max}\cos^2\left(\frac{(2\ell-1)\pi}{4m}\right), \quad \ell=1,\ldots, m,
\end{equation}
for a universal constant $c_{ch}\in(0,1]$.

\begin{lemma}[Chebyshev Node Bounds]\label{lem:chebyshevNodes}
    Under this condition,
    \begin{equation}
        \prod_{\ell=1}^{m}s_\ell \le \frac{s_\mathrm{max}^m}{2^{2m-1}}, \quad \max_{1\le \ell \le m}\frac{1}{s_\ell}\le \mathcal{O}\left(\frac{m^2}{c_{ch}s_\mathrm{max}}\right), \quad \sum_{\ell=1}^{m}\frac{1}{s_\ell}\le \frac{2m^2}{c_{ch}s_\mathrm{max}}.
    \end{equation}
\end{lemma}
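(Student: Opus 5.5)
The plan is to reduce all three bounds to elementary facts about the angles $\alpha_\ell:=\frac{(2\ell-1)\pi}{4m}\in(0,\pi/2)$, using the sandwich $c_{ch}\theta_\ell\le s_\ell\le\theta_\ell$. The product bound needs only the upper side, $s_\ell\le\theta_\ell$. The two reciprocal bounds need only the lower side, $1/s_\ell\le 1/(c_{ch}\theta_\ell)$.

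\textbf{Product bound.} First I write
\begin{equation}
    \prod_\ell s_\ell\le s_\mathrm{max}^m\prod_{\ell=1}^m\cos^2\alpha_\ell .
\end{equation}
The half-angle identity gives $\cos^2\alpha_\ell=\frac{1+\cos(2\alpha_\ell)}{2}$, where $2\alpha_\ell=\frac{(2\ell-1)\pi}{2m}$ are exactly the roots $x_\ell=\cos(2\alpha_\ell)$ of the Chebyshev polynomial $T_m$. Since $T_m(x)=2^{m-1}\prod_\ell(x-x_\ell)$, evaluating at $x=-1$ gives
\begin{equation}
    \prod_\ell(1+x_\ell)=(-1)^m\frac{T_m(-1)}{2^{m-1}}=\frac{1}{2^{m-1}} .
\end{equation}
Therefore $\prod_\ell\cos^2\alpha_\ell=2^{-m}\cdot 2^{-(m-1)}=2^{-(2m-1)}$, which is the stated bound.

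\textbf{Reciprocal bounds.} Here I use $\cos\alpha_\ell=\sin(\pi/2-\alpha_\ell)$ with $\pi/2-\alpha_\ell=\frac{(2(m-\ell)+1)\pi}{4m}\in(0,\pi/2]$. Jordan's inequality $\sin x\ge 2x/\pi$ on $[0,\pi/2]$ then gives
\begin{equation}
    \cos^2\alpha_\ell\ge\left(\frac{2m-2\ell+1}{2m}\right)^2 .
\end{equation}
The worst case is $\ell=m$, where $\cos^2\alpha_m\ge 1/(4m^2)$. This yields $\max_\ell 1/s_\ell\le 4m^2/(c_{ch}s_\mathrm{max})$, which is the stated $\mathcal{O}$ bound.

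For the sum, the same inequality gives
\begin{equation}
    \sum_\ell\frac{1}{s_\ell}\le\frac{4m^2}{c_{ch}s_\mathrm{max}}\sum_{k=1}^{m}\frac{1}{(2k-1)^2}\le\frac{4m^2}{c_{ch}s_\mathrm{max}}\cdot\frac{\pi^2}{8}=\frac{\pi^2m^2}{2c_{ch}s_\mathrm{max}} .
\end{equation}
Since $\pi^2/2\approx4.93$, this overshoots the claimed constant $2$, so Jordan's inequality is too crude for the sum.

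\textbf{Main obstacle: the constant in the sum.} To get the constant $2$, I would compute exactly instead. The identity $\sum_{\ell=1}^m\sec^2\!\big(\frac{(2\ell-1)\pi}{4m}\big)=2m^2$ holds. I would prove it by noting that these $\alpha_\ell$ together with their reflections $\pi-\alpha_\ell$ form $2m$ angles $\beta$ with $\cos(2m\beta)=0$. For such angles one has $\sum\sec^2\beta=(2m)^2$, which follows from the standard formula $\sum_{k=0}^{n-1}\csc^2\big(x+\frac{k\pi}{n}\big)=n^2\csc^2(nx)$ after the shift $\beta\mapsto\beta+\pi/2$. Halving by symmetry gives $2m^2$. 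This yields
\begin{equation}
    \sum_\ell\frac{1}{s_\ell}\le\frac{1}{c_{ch}s_\mathrm{max}}\cdot 2m^2 ,
\end{equation}
exactly as claimed.

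I expect this trigonometric identity, and specifically getting the shift and the $n=2m$ bookkeeping right, to be the only delicate step. A fallback is to verify it for small $m$ and cite it as standard, accepting a constant of order $\pi^2/2$ if necessary. That would not affect any downstream $\mathcal{O}$ statement.
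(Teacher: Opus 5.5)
Your proof is correct. It has the same structure as the paper's: the upper side of the sandwich gives the product bound, the lower side gives the reciprocal bounds, and everything reduces to exact identities for the angles $\alpha_\ell$. The difference is in how you derive those identities.

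For the product, the paper factors $T_{2m}(x)=2^{2m-1}\prod_\ell(x^2-\cos^2\alpha_\ell)$ and evaluates at $x=0$. You instead pass to double angles and evaluate $T_m$ at $x=-1$. Both are one-line consequences of the root factorization.

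For the sum, the paper takes the second logarithmic derivative of $T_{2m}$ at $0$ and uses Chebyshev's differential equation. You invoke $\sum_{k=0}^{n-1}\csc^2(x+k\pi/n)=n^2\csc^2(nx)$, which is itself the second logarithmic derivative of the product formula for $\sin(nx)$. So the mechanism is the same in disguise. Your bookkeeping is right: with $n=2m$ and $x=\frac{(2m+1)\pi}{4m}$ you get $\csc^2\big(\frac{(2m+1)\pi}{2}\big)=1$, so the fallback of checking small $m$ is not needed.

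Your route genuinely gains something for the maximum. The paper argues from small-angle behavior and records $\theta_m=\mathcal{O}(s_\mathrm{max}/m^2)$. Bounding $1/s_m$ from above, however, requires a \emph{lower} bound on $\theta_m$. Jordan's inequality supplies exactly that, $\theta_m\ge s_\mathrm{max}/(4m^2)$, with an explicit constant.

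There is an even simpler option. Once the exact sum identity is proved, $\max_\ell 1/s_\ell\le\sum_\ell 1/s_\ell\le 2m^2/(c_{ch}s_\mathrm{max})$ because all terms are positive. This gives a better constant, so both the Jordan step and the $\pi^2/2$ detour are optional.
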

\begin{proof}
    The upper bound $s_\ell\le \theta_\ell$ gives $\prod_\ell s_\ell\le \prod_\ell \theta_\ell=s_\mathrm{max}^m\prod_\ell\cos^2(\frac{(2\ell-1)\pi}{4m})$. By \cref{lem:chebIdentities}, this equals $s_\mathrm{max}^m\cdot 2^{-2m+1}=s_\mathrm{max}^m/2^{2m-1}$. For the reciprocal sum, the lower bound $s_\ell\ge c_{ch}\theta_\ell$ gives $1/s_\ell\le \frac{1}{c_{ch}s_\mathrm{max}}\sec^2(\frac{(2\ell-1)\pi}{4m})$, and \cref{lem:chebIdentities} gives $\sum_\ell1/s_\ell\le 2m^2/(c_{ch}s_\mathrm{max})$. The smallest node occurs at $\ell=m$, where the angle $\frac{(2m-1)\pi}{4m}$ approaches $\pi/2$ as $m$ grows. The small angle behavior of cosine near $\pi/2$ gives $\theta_m=\mathcal{O}(s_\mathrm{max}/m^2)$, and hence $\max_\ell 1/s_\ell=\mathcal{O}(m^2/(c_{ch}s_\mathrm{max}))$.
\end{proof}

The reciprocal sum agrees with Richardson's order in $m$. The maximum reciprocal does not, since the smallest Chebyshev node decreases quadratically in $m$ rather than linearly, yielding a different scaling for the maximum depth.
\begin{theorem}[Chebyshev Resources]\label{thm:chebyshevResources} 
    Let $Q:=\max\{S,T\Gamma_{\mathrm{loc}}\}$. There is a universal constant $c>0$ such that, choosing $s_\mathrm{max}:=c/Q$, the deterministic bias is at most $\epsilon/2$ provided
    \begin{equation}
        m=\mathcal{O}\left(\log\frac{\|O\|}{\epsilon}\right).
    \end{equation}
    With this choice, the maximum single-run circuit depth is 
    \begin{equation}
        D^\mathrm{Cheb}_\mathrm{max}=\mathcal{O}\left(d_\mathrm{step}TQ\left(\log\frac{\|O\|}{\epsilon}\right)^2\right),
    \end{equation}
    the aggregate depth budget is 
    \begin{equation}
        D^\mathrm{Cheb}_\mathrm{total}=\mathcal{O}\left(\frac{\|O\|^2\Lambda_m^2}{\epsilon^2}\log\frac{m}{\delta}\cdot d_\mathrm{step}TQ\left(\log\frac{\|O\|}{\epsilon}\right)^2\right),
    \end{equation}
    and the total gate count is 
    \begin{equation}
        G^\mathrm{Cheb}_\mathrm{total}=\mathcal{O}\left(\frac{\|O\|^2\Lambda_m^2}{\epsilon^2}\log\frac{m}{\delta}\cdot g_\mathrm{step}TQ\left(\log\frac{\|O\|}{\epsilon}\right)^2\right).
    \end{equation}
\end{theorem}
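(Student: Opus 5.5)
The plan mirrors the proof of \cref{thm:richardsonResources}, with \cref{lem:richardsonNodes} replaced by \cref{lem:chebyshevNodes}. We first fix $s_\mathrm{max}:=c/Q$, with $c$ at most the constant of \cref{thm:derivBounds}, so that every node satisfies $s_\ell\le\theta_\ell\le s_\mathrm{max}=s_Q$. Then \cref{thm:biasBound} applies. As in the Richardson case, we assume $s_\mathrm{max}\le T$, so that realizable nodes with $c_{ch}\theta_\ell\le s_\ell\le\theta_\ell$ exist. Concretely, we take $R_\ell=\lceil T/\theta_\ell\rceil$. Since $\theta_\ell\le s_\mathrm{max}\le T$, this gives $s_\ell\ge\theta_\ell/2$, so $c_{ch}=1/2$ works.

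For the bias, combining \cref{thm:biasBound} with the product bound of \cref{lem:chebyshevNodes} gives
\begin{equation}
    \left|\sum_{\ell=1}^{m}a_\ell g_O(s_\ell)-g_O(0)\right|\le C\|O\|(C'Q)^m\frac{s_\mathrm{max}^m}{2^{2m-1}}=2C\|O\|\left(\frac{C'c}{4}\right)^m.
\end{equation}
Unlike Richardson, there is no $1/m!$ factor here, so convergence must come from the choice of $c$. We shrink $c$ so that $C'c/4\le 1/2$, i.e.\ $c\le 2/C'$. This is legitimate because $C'$ is defined as the reciprocal of the derivative-bound constant $c_{\mathrm{deriv}}$, and we are choosing $s_\mathrm{max}\le s_Q$. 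We should check that shrinking $c$ does not change $C'$. It does not: $C'$ is fixed by \cref{thm:derivBounds}, and we only restrict the nodes to a smaller subinterval of $[0,s_Q]$. It remains to ensure $C'c/4<1$, or more comfortably $C'c/4\le 1/2$. Since $C'=1/c_{\mathrm{deriv}}$ and $c\le c_{\mathrm{deriv}}$, we have $C'c\le 1$, so $C'c/4\le 1/4$ automatically. The bias is therefore at most $2C\|O\|4^{-m}$. This is $\le\epsilon/2$ once $m\ge\frac{1}{2}\log_2(4C\|O\|/\epsilon)$, which gives $m=\mathcal{O}(\log(\|O\|/\epsilon))$.

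For the resources, we apply \cref{prop:resourceAccounting} with the reciprocal bounds of \cref{lem:chebyshevNodes}. This gives
\begin{equation}
    D^\mathrm{Cheb}_\mathrm{max}=d_\mathrm{step}T\max_\ell s_\ell^{-1}=\mathcal{O}(d_\mathrm{step}TQm^2),
\end{equation}
with $c_{ch}$ treated as universal. Similarly, $D^\mathrm{Cheb}_\mathrm{total}=Nd_\mathrm{step}T\sum_\ell s_\ell^{-1}\le 2Nd_\mathrm{step}TQm^2/(c_{ch}c)$. For the gate count, $G^\mathrm{Cheb}_\mathrm{total}=\mathcal{O}(Ng_\mathrm{step}TQm^2+Nmg_\mathrm{meas})$, and the measurement term is absorbed as in the Richardson proof. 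Substituting $N$ from \cref{thm:samplingCost}, which contributes $\|O\|^2\Lambda_m^2\epsilon^{-2}\log(m/\delta)$, together with $m=\mathcal{O}(\log(\|O\|/\epsilon))$, yields the stated forms.

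The main obstacle is twofold. First, the constant bookkeeping must confirm that the geometric ratio $C'Qs_\mathrm{max}/4$ is strictly below one. Second, the existence of realizable nodes within a constant factor of $\theta_\ell$ must be justified, in particular for the smallest node $\theta_m=\Theta(s_\mathrm{max}/m^2)$. For that node, the rounding argument $R_\ell=\lceil T/\theta_\ell\rceil$ still works because $T/\theta_\ell\ge 1$.
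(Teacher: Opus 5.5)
Your proposal is correct and follows the paper's proof essentially step for step: the product bound of \cref{lem:chebyshevNodes} inside \cref{thm:biasBound} gives the geometric bias $2C\|O\|(C'c/4)^m$, and the reciprocal bounds of the same lemma, fed into \cref{prop:resourceAccounting}, supply the $m^2Q$ factor in all three resource counts. Your observation that $c\le c_{\mathrm{deriv}}=1/C'$ already forces $C'c/4\le 1/4$ is a valid small sharpening, since it makes the paper's separate requirement $c\le 2/C'$ automatic.

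One caveat concerns your optional explicit construction $R_\ell=\lceil T/\theta_\ell\rceil$. It does give the sandwich bound with $c_{ch}=1/2$ when $s_{\mathrm{max}}\le T$. However, it does not guarantee that the rounded nodes are distinct, which the Lagrange weights $a_\ell$ need in order to exist. Guaranteeing distinctness requires a resolution condition such as $T\ge b_1 s_{\mathrm{max}}m^2$ from \cref{lem:roundedChebWeights}. The paper avoids the issue by simply taking distinct realizable nodes satisfying the sandwich as a standing hypothesis.
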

\begin{proof}
    By \cref{thm:biasBound} and \cref{lem:chebyshevNodes}, the deterministic bias satisfies
    \begin{equation}
        \left|\sum_{\ell=1}^{m}a_\ell g_O(s_\ell)-g_O(0)\right| \le C\|O\|(C'Q)^m\prod_{\ell=1}^{m}s_\ell\le C\|O\|(C'Qs_\mathrm{max})^m\frac{1}{2^{2m-1}}=2C\|O\|\left(\frac{C'c}{4}\right)^m,
    \end{equation}
    using $s_\mathrm{max}=c/Q$. Let $r:=C'c/4$, and choose $c\le 2/C'$, so that $r\le 1/2$. The bias is then bounded by $2C\|O\|r^{m}\le \epsilon/2$ once $r^m\le \epsilon/(4C\|O\|)$ or $m\log (1/r)\ge \log (4C\|O\|/\epsilon)$. Since $r\le 1/2$ gives $\log(1/r)\ge \log 2$, it suffices that $m\ge \log_2(4C\|O\|/\epsilon)$, which is $m=\mathcal{O}(\log(\|O\|/\epsilon))$.

    By \cref{prop:resourceAccounting} and \cref{lem:chebyshevNodes},
    \begin{equation}
        D^\mathrm{Cheb}_\mathrm{max}=d_\mathrm{step}T\max_\ell\frac{1}{s_\ell}=\mathcal{O}\left(\frac{d_\mathrm{step}Tm^2}{s_\mathrm{max}}\right)=\mathcal{O}(d_\mathrm{step}TQm^2),
    \end{equation}
    and substituting $m=\mathcal{O}(\log\frac{\|O\|}{\epsilon})$ gives the stated form. Similarly, 
    \begin{equation}
        D^\mathrm{Cheb}_\mathrm{total}=Nd_\mathrm{step}T\sum_\ell\frac{1}{s_\ell}=\mathcal{O}\left(\frac{Nd_\mathrm{step}Tm^2}{s_\mathrm{max}}\right)=\mathcal{O}(Nd_\mathrm{step}TQm^2),
    \end{equation}
    and 
    \begin{equation}
        G^\mathrm{Cheb}_\mathrm{total}=N\sum_\ell\left(g_\mathrm{step}\frac{T}{s_\ell}+g_\mathrm{meas}\right)=\mathcal{O}(Ng_\mathrm{step}TQm^2+Nmg_\mathrm{meas})=\mathcal{O}(Ng_\mathrm{step}TQm^2),
    \end{equation}
    where $N$ is as in \cref{thm:samplingCost}, and substituting $m=\mathcal{O}(\log\frac{\|O\|}{\epsilon})$ gives the stated forms.
\end{proof}

\subsection{Geometrically Local Scaling}\label{subsec:localScaling}
Assume $H$ and each $L_j$ are geometrically local on $n$ qubits. Each acts on $\mathcal{O}(1)$ qubits arranged on a constant dimensional lattice, $J=\mathcal{O}(n)$, all local coefficients are $\mathcal{O}(1)$, and each local term fails to commute with at most $\mathcal{O}(1)$ other local terms.

\begin{lemma}[Geometrically Local Constants]\label{lem:localConst}
    Under these assumptions, $S=\mathcal{O}(n)$ and $\Gamma_\mathrm{loc}=\mathcal{O}(n)$.
\end{lemma}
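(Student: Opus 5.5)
The proof will be a direct term-by-term count using the definitions of $S$ and $\Gamma_{\mathrm{loc}}=\frac12\Gamma_{\mathrm{PF}}+\Gamma_{\mathrm{cos}}$ from \cref{subsec:localEB}. I will write $H=\sum_{a}h_a$ as a sum of $\mathcal{O}(n)$ geometrically local terms, each with $\|h_a\|=\mathcal{O}(1)$. I will also use that each $L_j$ acts on $\mathcal{O}(1)$ qubits with $\mathcal{O}(1)$ coefficients, so $\|L_j\|=\mathcal{O}(1)$. Then $\|D_j\|=\|L_j^\dagger L_j\|=\|L_j\|^2=\mathcal{O}(1)$. Also, $D_j$ is supported on the same $\mathcal{O}(1)$ qubits as $L_j$.

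For $S=\|H\|+\sum_j\|D_j\|$, the triangle inequality gives $\|H\|\le\sum_a\|h_a\|=\mathcal{O}(n)$. Since $J=\mathcal{O}(n)$, we also get $\sum_j\|D_j\|=\mathcal{O}(n)$, so $S=\mathcal{O}(n)$. For $\Gamma_{\mathrm{cos}}=\frac23\sum_j\|D_j^2\|$, the bound $\|D_j^2\|\le\|D_j\|^2=\mathcal{O}(1)$ summed over $J=\mathcal{O}(n)$ terms gives $\mathcal{O}(n)$.

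The only step needing care is $\Gamma_{\mathrm{PF}}=\sum_{\mu<\nu}\|[B_\nu,B_\mu]\|$. Naively there are $\mathcal{O}(J^2)=\mathcal{O}(n^2)$ pairs, and $B_0=-iH$ is itself a global operator, so the locality structure must be exploited. Pairs $(\mu,\nu)$ with $1\le\mu<\nu\le J$ contribute $\|[D_\nu,D_\mu]\|\le 2\|D_\nu\|\|D_\mu\|=\mathcal{O}(1)$ only when $D_\mu$ and $D_\nu$ fail to commute. By assumption each term has $\mathcal{O}(1)$ non-commuting partners, so this part of the sum is $\mathcal{O}(J)=\mathcal{O}(n)$.

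The pairs involving $B_0$ contribute $\sum_j\|[H,D_j]\|$. Here I expand $[H,D_j]=\sum_a[h_a,D_j]$. Only the $\mathcal{O}(1)$ terms $h_a$ that overlap the support of $D_j$ survive, so $\|[H,D_j]\|=\mathcal{O}(1)$. Summing over $j$ gives $\mathcal{O}(n)$.

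The main obstacle is therefore the interpretation of ``each local term fails to commute with at most $\mathcal{O}(1)$ others''. It must apply to $D_j=L_j^\dagger L_j$ as well as to $L_j$. I will note that this holds because $D_j$ inherits the $\mathcal{O}(1)$ support of $L_j$, and on a constant-dimensional lattice only $\mathcal{O}(1)$ local supports meet a given one.

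Combining the three bounds gives $\Gamma_{\mathrm{loc}}=\frac12\Gamma_{\mathrm{PF}}+\Gamma_{\mathrm{cos}}=\mathcal{O}(n)$, which completes the statement of \cref{lem:localConst}.
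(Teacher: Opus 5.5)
Your proof is correct and follows the paper's approach: a term-by-term count showing that $S$, $\Gamma_{\mathrm{cos}}$, and $\Gamma_{\mathrm{PF}}$ each involve $\mathcal{O}(n)$ nonzero contributions of norm $\mathcal{O}(1)$. Your expansion $[H,D_j]=\sum_a[h_a,D_j]$ also makes explicit a step the paper glosses over, namely that the global term $B_0=-iH$ is not itself a single local term.
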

\begin{proof}
    $S=\|H\|+\sum_j\|D_j\|$ is a sum of $\mathcal{O}(n)$ terms each $\mathcal{O}(1)$, hence $\mathcal{O}(n)$. $\Gamma_\mathrm{PF}=\sum_{\mu<\nu}\|[B_\nu,B_\mu]\|$ has $\mathcal{O}(n)$ nonzero terms, since each $B_\nu$ fails to commute with only $\mathcal{O}(1)$ others, and each nonzero commutator is $\mathcal{O}(1)$ in norm, giving $\Gamma_\mathrm{PF}=\mathcal{O}(n)$. $\Gamma_{\mathrm{cos}}=\frac{2}{3}\sum_j\|D^2_j\|$ is a sum of $\mathcal{O}(n)$ terms each $\mathcal{O}(1)$, hence $\mathcal{O}(n)$. Since $\Gamma_\mathrm{loc}=\frac{1}{2}\Gamma_{\mathrm{PF}}+\Gamma_{\mathrm{cos}}$, $\Gamma_\mathrm{loc}=\mathcal{O}(n)$.
\end{proof}

In a single-ancilla sequential implementation, one time step applies the $\mathcal{O}(n)$ Hamiltonian terms and the $J=\mathcal{O}(n)$ dissipative blocks one at a time, each requiring $\mathcal{O}(1)$ depth and gates by geometric locality, so $d_\mathrm{step}=\mathcal{O}(n)$ and $g_\mathrm{step}=\mathcal{O}(n)$. If $O$ is $k$-local, $g_\mathrm{meas}=\mathcal{O}(k)$.

If $T\Gamma_\mathrm{loc}\ge S$, then $Q=T\Gamma_\mathrm{loc}=\mathcal{O}(nT)$. 

\begin{corollary}[Geometrically Local Maximum Depth]\label{cor:maxDepth}
    In this regime, 
    \begin{equation}
        D^\mathrm{Rich}_\mathrm{max}=\mathcal{O}\left(n^2T^2\log\frac{\|O\|}{\epsilon}\right), \quad D^\mathrm{Cheb}_\mathrm{max}=\mathcal{O}\left(n^2T^2\left(\log\frac{\|O\|}{\epsilon}\right)^2\right).
    \end{equation}
\end{corollary}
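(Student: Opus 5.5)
The plan is direct substitution of the geometrically local constants into the generic depth bounds of \cref{thm:richardsonResources} and \cref{thm:chebyshevResources}. Those theorems give
\[
D^\mathrm{Rich}_\mathrm{max}=\mathcal{O}\!\left(d_\mathrm{step}TQ\log\frac{\|O\|}{\epsilon}\right),\qquad D^\mathrm{Cheb}_\mathrm{max}=\mathcal{O}\!\left(d_\mathrm{step}TQ\left(\log\frac{\|O\|}{\epsilon}\right)^2\right),
\]
with $Q=\max\{S,T\Gamma_\mathrm{loc}\}$ and $s_\mathrm{max}=c/Q$. Only two inputs need to be specialized: $d_\mathrm{step}$ and $Q$.

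\textbf{Step 1: bound $Q$.} In the stated regime $T\Gamma_\mathrm{loc}\ge S$, we have $Q=T\Gamma_\mathrm{loc}$. \cref{lem:localConst} gives $\Gamma_\mathrm{loc}=\mathcal{O}(n)$, so $Q=\mathcal{O}(nT)$.

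\textbf{Step 2: bound $d_\mathrm{step}$.} Use the sequential single-ancilla implementation described just before the corollary, which gives $d_\mathrm{step}=\mathcal{O}(n)$.

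\textbf{Step 3: substitute.} Then $d_\mathrm{step}TQ=\mathcal{O}(n)\cdot T\cdot\mathcal{O}(nT)=\mathcal{O}(n^2T^2)$. Inserting this into the two generic bounds gives the Richardson estimate with one logarithmic factor and the Chebyshev estimate with a squared logarithm. The extra logarithm in the Chebyshev case comes from the $m^2$ in \cref{lem:chebyshevNodes}, versus $m$ in \cref{lem:richardsonNodes}.

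\textbf{Hypotheses to check.} Before substituting, I will confirm that the standing assumptions of the two resource theorems still hold here.
\begin{itemize}
\item The Richardson construction needs $s_\mathrm{max}\le T$, i.e.\ $c\le QT=T^2\Gamma_\mathrm{loc}$. For $T$ not too small this holds, as discussed in the proof of \cref{thm:richardsonResources}. If it fails we can reduce $c$, which only changes constants.
\item The Chebyshev construction assumes realizable nodes satisfying $c_{ch}\theta_\ell\le s_\ell\le\theta_\ell$. This is taken as given, as in \cref{thm:chebyshevResources}.
\item The constant in $m=\mathcal{O}(\log(\|O\|/\epsilon))$ is universal, so it does not depend on $n$ or $T$.
\end{itemize}

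\textbf{Main obstacle.} The only real subtlety is making sure no hidden $n$- or $T$-dependence sits inside the universal constants $c$, $C$, and $C'$. This follows because \cref{thm:derivBounds} is stated with universal constants once $Q$ has absorbed $S$ and $T\Gamma_\mathrm{loc}$. Everything else is bookkeeping.
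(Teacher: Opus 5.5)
Your proposal is correct and matches the paper's proof. The paper also just substitutes $Q=\mathcal{O}(nT)$ from \cref{lem:localConst}, together with $d_\mathrm{step}=\mathcal{O}(n)$, into the maximum-depth bounds of \cref{thm:richardsonResources} and \cref{thm:chebyshevResources}. Your side remark that shrinking $c$ only changes constants is in fact justified in this regime: $\Gamma_\mathrm{loc}\le \frac{7}{6}S^2$ gives $QT=T^2\Gamma_\mathrm{loc}\ge S^2/\Gamma_\mathrm{loc}\ge 6/7$, so $s_\mathrm{max}\le T$ holds automatically for any universal $c\le 6/7$.
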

\begin{proof}
    Substituting $Q=\mathcal{O}(nT)$ from \cref{lem:localConst} into the maximum depth expressions of \cref{thm:richardsonResources} and \cref{thm:chebyshevResources} gives the desired expression.
\end{proof}

The remaining quantities, $N$, $D_\mathrm{total}$, and $G_\mathrm{total}$, depend additionally on $\Lambda_m$, which has not yet been bounded for either method.

\begin{lemma}[Richardson Weight Bound]\label{lem:richWeights}
    For the ideal and realizable harmonic nodes $s_\ell=\frac{s_1}{\ell}$, where $s_1=\frac{T}{\lceil T/s_\mathrm{max}\rceil}$, the exact extrapolation weights satisfy
    \begin{equation}
        |a_\ell|=\frac{\ell^{m-1}}{(\ell-1)!(m-\ell)!},
    \end{equation}
    and consequently $\Lambda_m^\mathrm{Rich}=\Or((2e)^m/\sqrt{m})$.
\end{lemma}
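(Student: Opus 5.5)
Since $s_\ell = s_1/\ell$, I substitute this directly into the Lagrange weight $a_\ell=\phi_\ell(0)=\prod_{k\ne\ell}\frac{-s_k}{s_\ell-s_k}$ from \cref{eq:def_a_l_Lambda_m}. The common factor $s_1$ cancels from every quotient, leaving
\begin{equation}
    a_\ell=\prod_{k\ne\ell}\frac{-1/k}{1/\ell-1/k}=\prod_{k\ne\ell}\frac{-\ell}{k-\ell}=\prod_{k\ne \ell}\frac{\ell}{\ell-k}.
\end{equation}
So the weights depend only on the integer ratios and not on $s_\mathrm{max}$ or $T$. This is why the ideal and the realizable harmonic nodes give identical weights: realizability only changes $s_1$.

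The numerator of this product is $\ell^{m-1}$. For the denominator, I split the index set into $k<\ell$ and $k>\ell$. The values $\ell-k$ with $k<\ell$ range over $1,\ldots,\ell-1$, and their product is $(\ell-1)!$. The values with $k>\ell$ range over $-1,\ldots,-(m-\ell)$, and their product is $(-1)^{m-\ell}(m-\ell)!$. Therefore
\begin{equation}
    a_\ell=(-1)^{m-\ell}\frac{\ell^{m-1}}{(\ell-1)!\,(m-\ell)!},
\end{equation}
which gives the stated $|a_\ell|$.

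For $\Lambda_m$, I rewrite each term as $|a_\ell|=\frac{\ell^{m}}{m!}\binom{m}{\ell}$, since $\frac{\ell^{m-1}}{(\ell-1)!(m-\ell)!}=\frac{\ell^m}{\ell!(m-\ell)!}$. Bounding $\ell^m\le m^m$ gives
\begin{equation}
    \Lambda_m\le \frac{m^m}{m!}\sum_\ell\binom{m}{\ell}\le \frac{m^m 2^m}{m!}.
\end{equation}
Stirling's lower bound $m!\ge\sqrt{2\pi m}\,(m/e)^m$ then yields $\Lambda_m\le (2e)^m/\sqrt{2\pi m}=\Or((2e)^m/\sqrt m)$.

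The only delicate point is keeping the $\sqrt m$ factor. The crude Stirling bound $m!\ge(m/e)^m$, which is the one used in \cref{thm:richardsonResources}, would lose it, so I will invoke the sharper form instead. Beyond that, the main thing to check is the sign and index bookkeeping in the denominator product.
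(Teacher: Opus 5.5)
Your proposal is correct and follows essentially the same route as the paper. Both substitute $s_k=s_1/k$, count the factorials in the denominator, bound the power of $\ell$ by the same power of $m$, sum the binomial coefficients, and apply Stirling. The only cosmetic difference is that you work with $\binom{m}{\ell}$ and the nonasymptotic bound $m!\ge\sqrt{2\pi m}\,(m/e)^m$, which gives the explicit $\Lambda_m\le(2e)^m/\sqrt{2\pi m}$ for every $m$; your intermediate bound $m^m2^m/m!$ is exactly twice the paper's $m^{m-1}2^{m-1}/(m-1)!$, whereas the paper uses $\binom{m-1}{k}$ and the asymptotic form of Stirling.
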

\begin{proof}
    Substituting $s_k=s_1/k$ into $a_\ell=\prod_{k\ne \ell}\frac{-s_k}{s_\ell-s_k}$ gives $\frac{-s_k}{s_\ell-s_k}=\frac{\ell}{\ell-k}$ for each $k\ne \ell$, so
    \begin{equation}
        |a_\ell|=\left|\prod_{k=1,k\ne \ell}^{m}\frac{\ell}{\ell-k}\right|=\ell^{m-1}\prod_{k\ne \ell}\frac{1}{|\ell-k|}=\frac{\ell^{m-1}}{(\ell-1)!(m-\ell)!},
    \end{equation}
    using $\prod_{k=1}^{\ell}(\ell-k)=(\ell-1)!$ and $\prod_{k=\ell+1}^{m}(k-\ell)=(m-\ell)!$. For the upper bound, since $\ell\le m$, we have
    \begin{equation}
        \Lambda_m^\mathrm{Rich}
        \le
        m^{m-1}\sum_{\ell=1}^{m}  \frac{1}{(\ell-1)!(m-\ell)!}
        =
        \frac{m^{m-1}}{(m-1)!} \sum_{k=0}^{m-1}\binom{m-1}{k}
        =
        m^{m-1} \frac{2^{m-1}}{(m-1)!} 
        =
        \Or\left(\frac{(2e)^m}{\sqrt{m}} \right).
    \end{equation}
    Here in the last line, we used the Stirling's formula, 
    \begin{equation}
        (m-1)!\sim \sqrt{2\pi(m-1)}\left(\frac{m-1}{e}\right)^{m-1},
    \end{equation}
    together with the fact that $(m/(m-1))^{m-1}\rightarrow e$.
\end{proof}

Standard Chebyshev interpolation theory gives a Lebesgue constant bound of $\Or(\log m)$ for the ideal nodes ~\cite[Lemma~8]{MohammadipourLi2026}. In an actual digital quantum simulation, the step size must be realizable,
$s_\ell=T/R_\ell$ with $R_\ell\in\mathbb N$. Thus, the implemented nodes are
rounded versions of the ideal nodes, i.e.,
\begin{equation}\label{eq:roundCheyNodes}
    R_\ell
    =
    \left\lceil
    \frac{T}{\theta_\ell}
    \right\rceil,
    \qquad
    s_\ell
    =
    \frac{T}{R_\ell}.
\end{equation}
The effect of this rounding on the weight norm is a purely classical stability question for perturbed Chebyshev nodes. This rounding issue has been carefully studied in the Hamiltonian and Lindblad step-size extrapolation settings~\cite{WatsonWatkins2024,MohammadipourLi2025}. A particularly useful lemma is \cite[Lemma~9]{MohammadipourLi2025}, which we adapt to the present notation below.

\begin{lemma}[Rounded Chebyshev Weight Bound, \cite{MohammadipourLi2025}]
\label{lem:roundedChebWeights}
Let the rounded Chebyshev nodes be defined by \cref{eq:roundCheyNodes}. 
Let
$\Lambda_m^\mathrm{Cheb}$ be the sum of the absolute values of the Lagrange weights,
computed from the actual rounded nodes, as in~\cref{eq:def_a_l_Lambda_m}.
There exist universal constants $b_1, b_2,b_3$ such that if
\begin{equation}
    T\ge b_1s_\mathrm{max}m^2,
\end{equation}
then the rounded nodes are distinct and
\begin{equation}
    \Lambda_m^\mathrm{Cheb}
    \le
    b_2m^{4/(\pi^2-4)}\log m.
\end{equation}
If the stronger resolution condition $T\ge b_3s_\mathrm{max}m^2\log m$ holds, then $\Lambda_m^\mathrm{Cheb}=\Or(\log m)$.
\end{lemma}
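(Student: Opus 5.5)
Since the statement is an adaptation of \cite[Lemma~9]{MohammadipourLi2025}, the plan is to reduce it to that lemma. The key point is that the question is purely about classical interpolation on perturbed nodes. The quantities $a_\ell$ and $\Lambda_m$ in \cref{eq:def_a_l_Lambda_m} depend only on the node locations. They are also invariant under a common rescaling $s_\ell\mapsto s_\ell/s_\mathrm{max}$, since each factor $\frac{-s_k}{s_\ell-s_k}$ is scale free. So I will normalize to $x_\ell:=s_\ell/s_\mathrm{max}\in(0,1]$ and compare them with the ideal nodes $\theta_\ell/s_\mathrm{max}=\cos^2\!\big(\frac{(2\ell-1)\pi}{4m}\big)$. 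The substitution $x=(1+y)/2$ sends these to the standard Chebyshev roots $y_\ell=\cos\big(\frac{(2\ell-1)\pi}{2m}\big)$ on $[-1,1]$. Evaluation at $s=0$ becomes evaluation at the endpoint $y=-1$, where the Lebesgue function of the ideal Chebyshev roots is $\Or(\log m)$ \cite[Lemma~8]{MohammadipourLi2026}.

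The first step is to quantify the rounding. With $R_\ell=\lceil T/\theta_\ell\rceil$ we get $s_\ell=T/R_\ell\le\theta_\ell$ and $s_\ell\ge T/(T/\theta_\ell+1)=\theta_\ell/(1+\theta_\ell/T)$. Hence the relative perturbation is
\begin{equation*}
\frac{\theta_\ell-s_\ell}{\theta_\ell}\le\frac{\theta_\ell}{T}\le\frac{s_\mathrm{max}}{T}.
\end{equation*}
As a by-product, the rounded nodes also satisfy the sandwich condition $c_{ch}\theta_\ell\le s_\ell\le\theta_\ell$ with $c_{ch}=1/2$, provided $T\ge s_\mathrm{max}$.

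The second step is distinctness. Adjacent ideal nodes are separated by $|\theta_\ell-\theta_{\ell+1}|\gtrsim s_\mathrm{max}\sin(\ell\pi/2m)\sin(\pi/2m)$, which at the small end is of order $s_\mathrm{max}/m^2$ relative to the node size there. Comparing this gap with the absolute perturbation $\le\theta_\ell^2/T$, the hypothesis $T\ge b_1 s_\mathrm{max}m^2$ makes each perturbation at most a small fraction, say $1/4$, of the neighboring gaps. The rounded nodes are therefore distinct and keep their order.

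The third step, which I expect to be the main obstacle, is bounding $\Lambda_m$ for the perturbed nodes. I will write each rounded weight as the ideal weight times a ratio of products,
\begin{equation*}
\frac{a_\ell}{a_\ell^{\mathrm{ideal}}}=\prod_{k\neq\ell}\frac{s_k}{\theta_k}\cdot\frac{\theta_\ell-\theta_k}{s_\ell-s_k}.
\end{equation*}
Then I will bound the logarithm of this ratio by summing the relative perturbations of the differences, $\sum_{k\ne\ell}\frac{|\delta_\ell|+|\delta_k|}{|\theta_\ell-\theta_k|}$. The difficulty is that this sum behaves like a discrete Hilbert-transform sum over the Chebyshev gaps. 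Under the weaker condition $T\gtrsim s_\mathrm{max}m^2$ it only gives a product bound of the form $\exp(\gamma\log m)$. Tracking the constants through the $\cos^2$ spacing, as done in \cite{MohammadipourLi2025}, yields the exponent $4/(\pi^2-4)$. Multiplying by the ideal $\Or(\log m)$ Lebesgue bound then gives $b_2 m^{4/(\pi^2-4)}\log m$.

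Under the stronger condition $T\ge b_3 s_\mathrm{max}m^2\log m$, the same log-sum carries an extra $1/\log m$ factor, so it is $\Or(1)$. The ratios are then uniformly bounded, and $\Lambda_m^\mathrm{Cheb}=\Or(\log m)$. For the precise constants I would defer to \cite[Lemma~9]{MohammadipourLi2025}, checking only that our $\cos^2$ parameterization and ceiling rounding match its hypotheses after the affine change of variables.
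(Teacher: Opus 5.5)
Your route is the paper's route. The paper gives no argument of its own for this lemma and simply transcribes \cite[Lemma~9]{MohammadipourLi2025} into the present notation. Your scale invariance of the weights, the affine map $x=(1+y)/2$ to the Chebyshev roots evaluated at $y=-1$, and the ceiling estimate $0\le(\theta_\ell-s_\ell)/\theta_\ell\le\theta_\ell/T$ are exactly the hypothesis checks one would make before deferring.

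Two of your supporting claims are inaccurate, though neither undermines the deferral.

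First, the adjacent gap is $\theta_\ell-\theta_{\ell+1}=s_\mathrm{max}\sin(\ell\pi/m)\sin(\pi/(2m))$, not one with $\sin(\ell\pi/(2m))$. It is of order $s_\mathrm{max}/m^2$ at \emph{both} ends. The requirement $T\gtrsim s_\mathrm{max}m^2$ for distinctness comes from the top end near $s_\mathrm{max}$, where the shifts are of size $s_\mathrm{max}^2/T$. At the small end the shifts $\theta_\ell^2/T$ are negligible.

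Second, your description of Step~3 misidentifies the mechanism. Use your own bounds $0\le\delta_k\le\theta_k^2/T$ and $\prod_{k\ne\ell}s_k/\theta_k\le 1$, together with $\sin t\ge 2t/\pi$. Then the sum $\sum_{k\ne\ell}|\delta_\ell-\delta_k|/|\theta_\ell-\theta_k|$ is $\Or(s_\mathrm{max}m^2/T)$ uniformly in $\ell$:
\begin{itemize}
\item The worst case is $\ell=1$. There $\theta_1-\theta_k\ge s_\mathrm{max}k(k-1)/m^2$, so the sum converges like $\sum_k 1/k^2$.
\item The Hilbert-type $\log m$ you anticipated appears only in the bulk, in the smaller term $\Or(s_\mathrm{max}m\log m/T)$.
\end{itemize}
The log-sum is therefore already $\Or(1/b_1)$ under the weaker condition. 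There is no $\log m$ for the stronger condition to cancel.

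Done carefully, your first-order argument thus gives $\Lambda_m^\mathrm{Cheb}\le e^{\Or(1/b_1)}\Lambda_m^{\mathrm{ideal}}=\Or(\log m)$ directly, and this implies both claims of the lemma. The $m^{4/(\pi^2-4)}$ factor and the extra $\log m$ in the resolution condition are inherited from the cited lemma, not produced by this sum.
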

Assume throughout that $T\ge b_3{s_\mathrm{max}}m^2\log m$ so that $\Lambda_m^{\mathrm{Cheb}}=\Or(\log m)$.

The growth rate in \cref{lem:richWeights} is qualitatively different from \cref{lem:roundedChebWeights}. Chebyshev's weights grow logarithmically in $m$, while Richardson's grow exponentially. 
\begin{corollary}[Weight Bounds]\label{cor:weightBounds}
    Substituting $m=\mathcal{O}(\log(\|O\|/\epsilon))$ into each gives
    \begin{equation}
       \Lambda_m^\mathrm{Rich}=\left(\frac{\|O\|}{\epsilon}\right)^{\Or(1)} ,\quad
       \Lambda_m^\mathrm{Cheb}=\mathcal{O}\left(\log\log\frac{\|O\|}{\epsilon}\right),
    \end{equation}
    The former is polynomial in $\|O\|/\epsilon$, and the latter is polylogarithmic in $\|O\|/\epsilon$.
\end{corollary}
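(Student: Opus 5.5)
The plan is to substitute the choice $m=\Theta(\log(\|O\|/\epsilon))$ from \cref{thm:richardsonResources} and \cref{thm:chebyshevResources} into the two weight bounds already available, namely \cref{lem:richWeights} and \cref{lem:roundedChebWeights}. Concretely, I would fix $m\le \kappa\log(\|O\|/\epsilon)+\kappa'$ for universal constants $\kappa,\kappa'$, which is exactly the form produced in those proofs: $m\ge\max\{2eC'c,\log_2(2C\|O\|/\epsilon)\}$ for Richardson and $m\ge\log_2(4C\|O\|/\epsilon)$ for Chebyshev. Taking the smallest admissible integer $m$ gives an upper bound of this shape.

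For Richardson, \cref{lem:richWeights} gives $\Lambda_m^{\mathrm{Rich}}=\Or((2e)^m/\sqrt m)\le \Or((2e)^m)$. With $m\le\kappa\log(\|O\|/\epsilon)+\kappa'$ this becomes
\begin{equation}
(2e)^m\le (2e)^{\kappa'}\exp\bigl(\kappa\log(2e)\log(\|O\|/\epsilon)\bigr)=(2e)^{\kappa'}\left(\frac{\|O\|}{\epsilon}\right)^{\kappa\log(2e)}.
\end{equation}
Since $\kappa\log(2e)$ is a universal constant, this is $(\|O\|/\epsilon)^{\Or(1)}$. I would note that the additive constant $\kappa'$ only contributes a multiplicative constant. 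For the regime $\|O\|/\epsilon$ bounded, I would simply restrict to $\epsilon\le\|O\|/2$, say.

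For Chebyshev, I would work under the standing assumption $T\ge b_3 s_{\max}m^2\log m$. Under it, \cref{lem:roundedChebWeights} gives $\Lambda_m^{\mathrm{Cheb}}=\Or(\log m)$, and monotonicity of $\log$ with the same bound on $m$ yields $\log m\le\log(\kappa\log(\|O\|/\epsilon)+\kappa')=\Or(\log\log(\|O\|/\epsilon))$. This holds once $\|O\|/\epsilon$ is large enough that $\log\log$ is positive; for small $m$, the bound is trivially a constant.

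The only delicate point is bookkeeping rather than analysis. The $m$ in the corollary must be the same $m$ used for the bias bound, and for Chebyshev the resolution hypothesis of \cref{lem:roundedChebWeights} must hold at that $m$. The latter becomes the requirement $T\ge b_3(c/Q)m^2\log m$ with $m=\Theta(\log(\|O\|/\epsilon))$, and I would state it explicitly as inherited from the standing assumption rather than prove it. Everything else is direct substitution.
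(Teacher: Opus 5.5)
Your proposal is correct and matches the paper's argument: the paper gives no separate proof and obtains the corollary by substituting $m=\Or(\log(\|O\|/\epsilon))$ into \cref{lem:richWeights}, so that $(2e)^m=(\|O\|/\epsilon)^{\Or(1)}$, and into \cref{lem:roundedChebWeights} under the standing resolution assumption $T\ge b_3 s_\mathrm{max}m^2\log m$, so that $\Or(\log m)=\Or(\log\log(\|O\|/\epsilon))$. Your additional bookkeeping, namely the additive constant in $m$, the restriction $\epsilon\le\|O\|/2$, and the remark that $\log m$ should be read as a constant for small $m$, only makes explicit what the paper leaves implicit.
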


These weight bounds yield bounds on $N, G_\mathrm{total}$ in terms of the given constants for both extrapolation methods below. Since $d_\mathrm{step}=\Or(n)=g_\mathrm{step}$ in this geometrically local regime, as established above, $D_\mathrm{total}$ is asymptotically equivalent to $G_\mathrm{total}$ and carries no additional information. We therefore report only $G_\mathrm{total}$, which has a direct operational meaning as a total gate count, unlike the depth budget quantity $D_\mathrm{total}$. 

\begin{corollary}[Final Richardson Resource Scaling]\label{cor:richFinal}
    In the geometrically local, large-time regime, choosing $m=\mathcal{O}(\log(\|O\|/\epsilon))$ for Richardson extrapolation, 
    \begin{equation}
        N^\mathrm{Rich}= \left(\frac{\|O\|}{\epsilon}\right)^{\Or(1)}\log\frac{\log(\|O\|/\epsilon)}{\delta},
    \end{equation}
    \begin{equation}
        G^\mathrm{Rich}_\mathrm{total}=\left(\frac{\|O\|}{\epsilon}\right)^{\Or(1)} n^2T^2\log\frac{\|O\|}{\epsilon}\log\frac{\log(\|O\|/\epsilon)}{\delta}.
    \end{equation}
\end{corollary}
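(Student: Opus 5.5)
The proof is a direct substitution of the Richardson weight bound into the sampling and gate-count formulas already established. First fix $m=\lceil\max\{2eC'c,\log_2(2C\|O\|/\epsilon)\}\rceil=\Or(\log(\|O\|/\epsilon))$, as in the proof of \cref{thm:richardsonResources}. This guarantees that the deterministic bias is at most $\epsilon/2$.

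\textbf{Sample count.} By \cref{lem:richWeights}, $\Lambda_m^{\mathrm{Rich}}=\Or((2e)^m/\sqrt m)\le (2e)^m$. With $m\le a\log(\|O\|/\epsilon)+b$ for universal constants $a,b$, this gives
\begin{equation}
\Lambda_m^{\mathrm{Rich}}\le (2e)^b\left(\frac{\|O\|}{\epsilon}\right)^{a\log(2e)}=\left(\frac{\|O\|}{\epsilon}\right)^{\Or(1)},
\end{equation}
which is \cref{cor:weightBounds}. Substituting into $N$ from \cref{thm:samplingCost},
\begin{equation}
N=\left\lceil \frac{8\|O\|^2\Lambda_m^2}{\epsilon^2}\log\frac{2m}{\delta}\right\rceil .
\end{equation}
The prefactor $\|O\|^2\Lambda_m^2/\epsilon^2$ is again a polynomial in $\|O\|/\epsilon$. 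In the logarithm, $m=\Or(\log(\|O\|/\epsilon))$, so $\log(2m/\delta)=\Or\big(\log(\log(\|O\|/\epsilon)/\delta)\big)$, where the additive constant is absorbed for $\epsilon$ small relative to $\|O\|$. This gives the stated form of $N^{\mathrm{Rich}}$.

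\textbf{Total gate count.} Start from \cref{thm:richardsonResources}, which gives $G_{\mathrm{total}}^{\mathrm{Rich}}=\Or(N g_{\mathrm{step}}TQm^2)$, with the $Nmg_{\mathrm{meas}}$ term already absorbed; this remains valid since $g_{\mathrm{meas}}=\Or(k)$. In the geometrically local regime with $T\Gamma_{\mathrm{loc}}\ge S$, \cref{lem:localConst} gives $Q=\Or(nT)$, and the sequential implementation gives $g_{\mathrm{step}}=\Or(n)$. Hence $g_{\mathrm{step}}TQ=\Or(n^2T^2)$. Multiplying by $N$ and by $m^2=\Or(\log^2(\|O\|/\epsilon))$ yields
\begin{equation}
G_{\mathrm{total}}^{\mathrm{Rich}}=\left(\frac{\|O\|}{\epsilon}\right)^{\Or(1)}n^2T^2\log^2\frac{\|O\|}{\epsilon}\,\log\frac{\log(\|O\|/\epsilon)}{\delta}.
\end{equation}
This has one more logarithmic factor than the statement. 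Since $\log(\|O\|/\epsilon)\le (\|O\|/\epsilon)^{1}$, one factor of $\log(\|O\|/\epsilon)$ can be absorbed into the polynomial prefactor. We keep the other factor explicit to match the stated form; this is purely a matter of presentation.

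\textbf{Where care is needed.} The only delicate point is bookkeeping inside the $(\|O\|/\epsilon)^{\Or(1)}$ notation. The exponent must be a universal constant, independent of $n$, $T$ and $\delta$. This holds because the constants $a$ and $b$ in $m$ depend only on the universal $C,C',c$ from \cref{thm:derivBounds} and \cref{thm:richardsonResources}. We also need $\|O\|/\epsilon\ge 2$ (say), so that the additive constants in $m$ and in $\log(2m/\delta)$ are dominated. Finally, the standing assumption $s_{\max}\le T$ from \cref{thm:richardsonResources} must hold, and it does in the large-time regime.
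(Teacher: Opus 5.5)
Your proposal is correct and follows the paper's own proof: substitute $\Lambda_m^{\mathrm{Rich}}=(\|O\|/\epsilon)^{\Or(1)}$ into \cref{thm:samplingCost}, and also into \cref{thm:richardsonResources} with $Q=\Or(nT)$ and $g_{\mathrm{step}}=\Or(n)$. You are also right that direct substitution gives $\log^2(\|O\|/\epsilon)$, so one logarithmic factor has to be absorbed into the $(\|O\|/\epsilon)^{\Or(1)}$ prefactor to reach the stated form; the paper's one-line proof does this without comment.
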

\begin{proof}
    Substitute $\Lambda_m^\mathrm{Rich}=(\|O\|/\epsilon)^{\Or(1)}$ into \cref{thm:samplingCost} to obtain $N^\mathrm{Rich}$, and into \cref{thm:richardsonResources} together with $Q=\mathcal{O}(nT)$ to obtain $G^\mathrm{Rich}_\mathrm{total}$.
\end{proof}
In particular, since $N^\mathrm{Rich}$ and $G_\mathrm{total}^\mathrm{Rich}$ both inherit the factor of $\|O\|^2/\epsilon^2$ from Hoeffding concentration (\cref{thm:samplingCost}), in addition to the $\Or(1)$ polynomial contribution from $(\Lambda_m^\mathrm{Rich})^2$ bounded in \cref{cor:weightBounds}, the overall exponent governing the dependence on $\|O\|/\epsilon$ is at least $2$. 

\begin{corollary}[Final Chebyshev Resource Scaling]\label{cor:chebFinal}
    Under the same regime, suppose in addition that the resolution condition $T\ge b_3s_\mathrm{max} m^2\log m$ of \cref{lem:roundedChebWeights} holds. Then choosing $m=\mathcal{O}(\log(\|O\|/\epsilon))$ with rounded Chebyshev nodes,
    \begin{equation}
        N^\mathrm{Cheb}=\mathcal{O}\left(\frac{\|O\|^2}{\epsilon^2}\left(\log\log\frac{\|O\|}{\epsilon}\right)^2\log\frac{\log(\|O\|/\epsilon)}{\delta}\right),
    \end{equation}
    \begin{equation}
        G^\mathrm{Cheb}_\mathrm{total}= \Or\left(\frac{\|O\|^2}{\epsilon^2}\left(\log\log\frac{\|O\|}{\epsilon}\right)^2\log\frac{\log(\|O\|/\epsilon)}{\delta}\cdot n^2T^2\left(\log\frac{\|O\|}{\epsilon}\right)^2\right).
    \end{equation}
\end{corollary}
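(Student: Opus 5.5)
This corollary follows by substitution, in the same way as \cref{cor:richFinal}. The plan is to combine the sample count from \cref{thm:samplingCost}, the rounded-node weight bound of \cref{lem:roundedChebWeights}, and the gate-count expression of \cref{thm:chebyshevResources}, then specialize the constants with \cref{lem:localConst}.

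First, fix $m=\lceil \log_2(4C\|O\|/\epsilon)\rceil$ as in the proof of \cref{thm:chebyshevResources}. This ensures the deterministic bias is at most $\epsilon/2$, provided the rounded nodes obey $c_{ch}\theta_\ell\le s_\ell\le\theta_\ell$ for a universal $c_{ch}$. I would check this node condition first, since \cref{lem:chebyshevNodes} and hence \cref{thm:chebyshevResources} use it.

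\begin{itemize}
\item The upper bound $s_\ell\le\theta_\ell$ is automatic, because $R_\ell=\lceil T/\theta_\ell\rceil\ge T/\theta_\ell$.
\item For the lower bound, $R_\ell\le T/\theta_\ell+1$, so $s_\ell\ge \theta_\ell/(1+\theta_\ell/T)$.
\item Since $\theta_\ell\le s_{\mathrm{max}}\le T/(b_3m^2\log m)\le T$ under the resolution hypothesis, this gives $s_\ell\ge\theta_\ell/2$. Hence $c_{ch}=1/2$ works.
\end{itemize}

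The same hypothesis is exactly what \cref{lem:roundedChebWeights} needs to give $\Lambda_m^{\mathrm{Cheb}}=\Or(\log m)=\Or(\log\log(\|O\|/\epsilon))$, as recorded in \cref{cor:weightBounds}.

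Next, substitute into $N=\lceil 8\|O\|^2\Lambda_m^2\epsilon^{-2}\log(2m/\delta)\rceil$. Using $\log(2m/\delta)=\Or(\log(\log(\|O\|/\epsilon)/\delta))$, this gives the stated $N^{\mathrm{Cheb}}$. I would absorb the ceiling into the $\Or$, assuming $\epsilon\le\|O\|$ so that the main term is at least a constant.

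Then take $G^{\mathrm{Cheb}}_{\mathrm{total}}=\Or(N g_{\mathrm{step}}TQm^2)$ from \cref{thm:chebyshevResources}. Insert $g_{\mathrm{step}}=\Or(n)$, and use $Q=T\Gamma_{\mathrm{loc}}=\Or(nT)$ in the large-time regime $T\Gamma_{\mathrm{loc}}\ge S$ via \cref{lem:localConst}. Finally use $m^2=\Or((\log(\|O\|/\epsilon))^2)$. The product is $n^2T^2(\log(\|O\|/\epsilon))^2$ times $N^{\mathrm{Cheb}}$, which is the claimed bound.

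The only genuine subtlety is the consistency of hypotheses, because $s_{\mathrm{max}}=c/Q$ and $m$ both depend on $\epsilon$. The resolution condition $T\ge b_3 s_{\mathrm{max}}m^2\log m$ becomes $Q T\ge b_3 c\, m^2\log m$, i.e. $T^2\Gamma_{\mathrm{loc}}\gtrsim (\log(\|O\|/\epsilon))^2\log\log(\|O\|/\epsilon)$. This is precisely the assumption stated in the corollary, so I would simply note that it is assumed rather than derived. I would also note that it implies $s_{\mathrm{max}}\le T$, as needed elsewhere.

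I would also confirm that the $Nm g_{\mathrm{meas}}=\Or(Nmk)$ term is dominated by $N n^2T^2m^2$, which holds whenever $k=\Or(n^2T^2 m)$. That is trivially true for $k\le n$ once $T$ is at least of constant order.
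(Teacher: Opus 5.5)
Your proposal is correct and follows the paper's proof, which substitutes $\Lambda_m^{\mathrm{Cheb}}=\Or(\log\log(\|O\|/\epsilon))$ into \cref{thm:samplingCost} to get $N^{\mathrm{Cheb}}$, and into the gate count of \cref{thm:chebyshevResources} with $Q=\Or(nT)$ to get $G^{\mathrm{Cheb}}_{\mathrm{total}}$. Your additional check is a useful detail the paper leaves implicit: under the resolution condition the rounded nodes satisfy $\theta_\ell/2\le s_\ell\le\theta_\ell$, so $c_{ch}=1/2$ and \cref{thm:chebyshevResources} applies (this step needs $b_3m^2\log m\ge 1$, which may be assumed by enlarging $b_3$).
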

\begin{proof}
    Substitute $\Lambda_m^\mathrm{Cheb}=\mathcal{O}(\log\log(\|O\|/\epsilon))$ into \cref{thm:samplingCost} to obtain $N^\mathrm{Cheb}$, and into \cref{thm:chebyshevResources} together with $Q=\mathcal{O}(nT)$ to obtain $G^\mathrm{Cheb}_\mathrm{total}$.
\end{proof}

\begin{rmk}[Effects of a Weaker Condition on $T$]
    Using the known bounds and equations $s_\mathrm{max}=c/Q$, $m=\Or(\log(\|O\|/\epsilon))$, $Q=\Or(nT)$, the assumed condition $T\ge b_3s_\mathrm{max} m^2\log m$ can be rewritten as
    \begin{equation}
        T\gtrsim \frac{1}{\sqrt{n}}\log\frac{\|O\|}{\epsilon}\sqrt{\log\log\frac{\|O\|}{\epsilon}}.
    \end{equation}
    If $T$ only satisfies the weaker condition $T\ge b_1s_\mathrm{max}m^2$ or $T\gtrsim\frac{1}{\sqrt{n}}\log\frac{\|O\|}{\epsilon}$, then by \cref{lem:roundedChebWeights}
    \begin{equation}
        \Lambda_m^\mathrm{Cheb} = \Or\left(\left(\log\frac{\|O\|}{\epsilon}\right)^{4/(\pi^2-4)}\log\log\frac{\|O\|}{\epsilon}\right)
    \end{equation}
    rather than $\Or(\log \log(\|O\|/\epsilon))$. Since $N^\mathrm{Cheb}$, $D_\mathrm{total}^\mathrm{Cheb}$, and $G_\mathrm{total}^\mathrm{Cheb}$ each scale with $(\Lambda_m^\mathrm{Cheb})^2$, this multiplies these three quantities by a factor of $\left(\log\frac{\|O\|}{\epsilon}\right)^{8/(\pi^2-4)}$ relative to the strong condition case described in \cref{cor:chebFinal}.
\end{rmk}

\begin{rmk}[Comparison of the Two Methods]
    Only the maximum single circuit depth $D_\mathrm{max}$ is reduced to polylogarithmic scaling by either method. Both methods inherit the $\Theta(1/\epsilon^2)$ sampling cost of Hoeffding concentration, present regardless of the extrapolation scheme used. The two methods differ in whether this baseline cost is multiplied by an additional polylogarithmic factor, as for Chebyshev, or an additional polynomial factor, as for Richardson, the latter arising from the exponential growth of the Richardson weights established in \cref{lem:richWeights}.
\end{rmk}

\subsection{Sampling Lower Bound}\label{subsec:lowerBound}
In the independent-shot sampling model considered here, the total number of raw circuit executions is necessarily $\Omega(1/\epsilon^2)$ for constant failure probability. Since each raw execution has a nonzero gate cost, the total gate count also inherits this $\Omega(1/\epsilon^2)$ sampling factor. This means that the factor of $1/\epsilon^2$ in $N$, $D_\mathrm{total}$, and $G_\mathrm{total}$ stated above is unavoidable for any estimator in the independent-shot sampling model. We note that the lower bound proof is rather standard, which we present here for completeness.

\begin{theorem}[Sampling Lower Bound] \label{thm:samplingLowerBound}
Consider the independent-shot sampling model in which each raw circuit execution produces one classical measurement outcome and all subsequent processing is classical. Any estimator that approximates the unnormalized observable $g_O(0)$ to additive error $\epsilon$ with failure probability at most $\delta$ for all problem instances must use
\begin{equation}
    N = \Omega\left( \frac{\|O\|^2}{\epsilon^2} \log\frac{1}{\delta} \right)
\end{equation}
raw circuit executions in the worst case, for $0<\epsilon\le c\|O\|$ and $0<\delta\le 1/8$.
\end{theorem}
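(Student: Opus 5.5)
We argue by a standard two-point (Le Cam) reduction: exhibit two problem instances whose targets $g_O(0)$ differ by more than $2\epsilon$ but whose single-shot outcome distributions are statistically close. Any estimator achieving error $\epsilon$ with failure probability $\delta$ on both instances yields a hypothesis test between them with error at most $\delta$. The Bretagnolle--Huber inequality (or the KL chain rule for product measures) then forces $N\,\mathrm{KL}\ge \log(1/(4\delta))$, which gives the claim once the per-shot KL divergence is shown to be $\Or(\epsilon^2/\|O\|^2)$.

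\textbf{Construction.} Take a single system qubit with $H=0$, one jump operator $L=\sqrt{\gamma}\,I$ (so $D=\gamma I$), $\ket{\psi_0}=\ket{0}$, and $O=\|O\|\,Z$. The exact solution is $e^{-\gamma T}\ket{0}$, so $g_O(0)=\|O\|e^{-2\gamma T}$. Choose $\gamma_1,\gamma_2$ so that $q_i:=e^{-2\gamma_i T}$ satisfy $q_1=1/2+2\epsilon/\|O\|$ and $q_2=1/2-2\epsilon/\|O\|$; this is valid for $\epsilon\le c\|O\|$ with $c$ small. The targets then differ by $4\epsilon>2\epsilon$.

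\textbf{Per-shot distributions.} For any step size $s$, every raw execution ends in one of three classical outcomes: rejection, or acceptance with measured value $\pm\|O\|$ (here always $+\|O\|$ since the state stays $\ket{0}$). Each outcome therefore has a law with masses $(1-p_i(s),\,p_i(s))$. The subtle point is that the estimator may use any step sizes and the circuits run are discretized, so $p_i(s)=\cos^{2R}(\sqrt{2s\gamma_i})$ rather than $q_i$. To sidestep this I would define the model more abstractly: the instance family is allowed to include any instance, so I would rather pick a scalar instance whose circuit is exact at all step sizes. The cleanest choice is to let $D$ be such that the per-shot outcome is a Bernoulli variable, and note that the lower bound only needs \emph{some} pair of instances. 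Alternatively, one can simply treat an oracle that returns Bernoulli($q_i$)-type samples with the correct mean, which is at least as informative as the real circuit; I would phrase the model that way.

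In either case each shot is a Bernoulli variable with parameter in $[1/4,3/4]$, and the two parameters differ by $\Theta(\epsilon/\|O\|)$. Hence $\mathrm{KL}\le (\Delta p)^2/(p(1-p))=\Or(\epsilon^2/\|O\|^2)$ per shot. By the chain rule, over $N$ adaptively chosen independent shots the total KL divergence is at most $N\cdot\Or(\epsilon^2/\|O\|^2)$. Bretagnolle--Huber then gives $2\delta\ge \tfrac12 e^{-\mathrm{KL}}$, so $N=\Omega(\|O\|^2\epsilon^{-2}\log(1/(4\delta)))$, which is $\Omega(\|O\|^2\epsilon^{-2}\log(1/\delta))$ for $\delta\le1/8$.

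\textbf{Main obstacle.} The difficulty is handling adaptive step-size choices and discretization bias, so that every shot, at every $s$, has an outcome distribution close between the two instances. For the cosine block, $|\cos^{2R}(\sqrt{2s\gamma_1})-\cos^{2R}(\sqrt{2s\gamma_2})|$ is bounded by $\Or(T|\gamma_1-\gamma_2|)=\Or(\epsilon/\|O\|)$ uniformly in $s$ via a mean-value argument. This keeps the per-shot KL at $\Or(\epsilon^2/\|O\|^2)$ provided the success probabilities stay bounded away from $0$ and $1$, which I would enforce by restricting to $\gamma T=\Theta(1)$ and $s\le s_Q$. I expect that verification to be the only delicate step.
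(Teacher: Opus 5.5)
Your skeleton (threshold test at $\|O\|/2$, Bretagnolle--Huber, KL tensorization, per-shot Bernoulli KL of order $\epsilon^2/\|O\|^2$) is the same as the paper's. The hard instance is different, and your direct route is correct.

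The paper takes $A=0$ and puts the $4\epsilon$ gap in the initial state, $\ket{\psi_\pm}=\sqrt{1-p_\pm}\ket{0}+\sqrt{p_\pm}\ket{1}$, with $O=\|O\|\ket{1}\bra{1}$. Every step size then gives the same law $\|O\|\cdot\mathrm{Bern}(p_\pm)$ with no postselection failure, so discretization never enters. This is the concrete version of your ``instance whose circuit is exact at all step sizes.'' Varying $\gamma$ can never give that, since $\cos x<e^{-x^2/2}$ strictly on $(0,\pi/2)$.

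You instead encode the gap in the dissipation rate, so the information sits in $p_i(s)=\cos^{2R}(\sqrt{2s\gamma_i})\neq q_i$. Your direct route closes more easily than you expect:
\begin{itemize}
\item With $x=\sqrt{2s\gamma}$ and $Rs=T$, one has $\partial_\gamma p=-2T\cos^{2R-1}(x)\,\frac{\sin x}{x}$. Hence $|p_1(s)-p_2(s)|\le 2T|\gamma_1-\gamma_2|=|\log(q_1/q_2)|=\Or(\epsilon/\|O\|)$ for every $s$, not just small ones.
\item Realizable steps have $s\le T$, and $2\gamma_iT\approx\log 2$. The bounds $1-s\gamma\le\cos x\le e^{-x^2/2}$ then give $e^{-2\gamma_iT/(1-s\gamma_i)}\le p_i(s)\le q_i$, which lies in a fixed compact subinterval of $(0,1)$.
\end{itemize}
So drop the restriction $s\le s_Q$. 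A lower bound against arbitrary estimators could not legitimately impose it anyway.

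What each route buys: yours gives a genuinely dissipative hard instance whose target is carried entirely by the zero-on-rejection mechanism, plus an explicit chain-rule treatment of adaptively chosen step sizes. The paper's instance makes adaptivity trivial and buys brevity.

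Finally, discard the oracle alternative. Bernoulli$(q_i)$ samples are not at least as informative as the circuit shot-for-shot. At $R=1$ one has $|\partial_\gamma p|\approx 1.2\,|\partial_\gamma q|$, so $|p_1(s)-p_2(s)|>|q_1-q_2|$. Since total variation cannot increase under garbling, no garbling of Bernoulli$(q_i)$ into Bernoulli$(p_i(s))$ exists. The direct bound makes this alternative unnecessary in any case.
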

\begin{proof}
It suffices to consider the trivial ODE instance $A=0$, for which the exact and finite-step evolutions are both the identity, and there is no postselection failure. Let
\begin{equation}
    O=B\ket{1}\bra{1}, \quad
    B=\|O\|.
\end{equation}
Consider the two input states
\begin{equation}
    \ket{\psi_\pm} = \sqrt{1-p_\pm}\ket{0}+\sqrt{p_\pm}\ket{1}, \quad
    p_\pm = \frac{1}{2} \pm \frac{2\epsilon}{B}.
\end{equation}
For $\epsilon\le B/8$, both $p_+$ and $p_-$ lie in $(0,1)$. For these two instances,
\begin{equation}
    g_\pm = \bra{\psi_\pm}O\ket{\psi_\pm} = Bp_\pm,
\end{equation}
and hence $|g_{+}-g_{-}|=4\epsilon$. A raw measurement of $O$ produces the value $B$ with probability $p_\pm$ and the value $0$ otherwise. Thus estimating $g_\pm$ is equivalent to estimating the mean of a Bernoulli random variable, scaled by $B$.

Suppose that an estimator using $N$ independent raw samples achieves additive error at most $\epsilon$ with probability at least $1-\delta$ for both instances. Consider the test $\phi$ that outputs $+$ if $\hat g>B/2$ and outputs $-$ otherwise. This test distinguishes the two hypotheses with error probability at most $\delta$ under each hypothesis. Indeed, under the $+$ instance, $g_+=\frac{B}{2}+2\epsilon$, so $|\hat g-g_+|\le \epsilon$ implies $\hat g>B/2$. Similarly, under the $-$ instance, $g_-=\frac{B}{2}-2\epsilon$, so $|\hat g-g_-|\le \epsilon$ implies $\hat g<B/2$.

Let $P_\pm=\mathrm{Bern}(p_\pm)$ denote the Bernoulli distribution with success probability $p_\pm$, $P_\pm^{\otimes N}$ denote the $N$-fold product distribution under $N$ independent samples, and $D_{\mathrm{KL}}(P_1\|P_2)$ denote the Kullback-Leibler divergence. Let $E$ be the event that the test $\phi$ outputs $+$. Since the estimator is correct with probability at least $1-\delta$ under each instance, the probability that $\phi$ errs under the $-$ instance plus the probability that $\phi$ errs under the $+$ instance is at most $2\delta$
\begin{equation}
    P_-^{\otimes N}(\phi=+)+P_+^{\otimes N}(\phi=-)\le 2\delta.
\end{equation}
On the other hand,
\begin{equation}
    P_{-}^{\otimes N}(\phi=+)+P_+^{\otimes N}(\phi=-)=P_-^{\otimes N}(E)+P_+^{\otimes N}(E^\complement)\ge \int\!\min\left\{dP_{-}^{\otimes N},dP_{+}^{\otimes N}\right\}.
\end{equation}
By the Bretagnolle-Huber testing inequality~\cite[Lemma~2.6]{Tsybakov2008},
\begin{equation}
    \int\min\!\left\{dP_{-}^{\otimes N},dP_{+}^{\otimes N}\right\}\ge \frac{1}{2}\exp(-D_\mathrm{KL}(P_-^{\otimes N}\,\|\,P_+^{\otimes N})).
\end{equation}
Since the KL divergence tensorizes over independent samples,
\begin{equation}
    D_{\mathrm{KL}}\left(P_-^{\otimes N}\,\|\,P_+^{\otimes N}\right)
    = N D_{\mathrm{KL}}\left(P_-\,\|\,P_+\right)
    = N D_{\mathrm{KL}}\left(\mathrm{Bern}(p_-)\,\|\,\mathrm{Bern}(p_+)\right).
\end{equation}
For $p_\pm=1/2\pm 2\epsilon/B$ and $\epsilon\le B/8$, the Bernoulli KL divergence obeys
\begin{equation}
    D_{\mathrm{KL}}\left(\mathrm{Bern}(p_-)\,\|\,\mathrm{Bern}(p_+)\right) \le C\frac{\epsilon^2}{B^2}.
\end{equation}
for a universal constant $C$. Combining the four displayed bounds above, we get
\begin{equation}
    2\delta \ge \frac{1}{2}\exp\left(-CN\frac{\epsilon^2}{B^2}\right)
\end{equation}
which rearranges to
\begin{equation}
    N \ge C'\frac{B^2}{\epsilon^2}\log\frac{1}{\delta}
\end{equation}
for a universal constant $C'>0$, using $0<\delta\le 1/8$. Since $B=\|O\|$, this proves the claim.
\end{proof}

\begin{rmk}
The lower bound in \cref{thm:samplingLowerBound} applies to the independent-shot sampling model considered in this work. It does not rule out coherent amplitude-estimation-type procedures, which may improve the dependence on $\epsilon$ at the cost of additional coherent control and more ancilla qubits. Our goal here is to keep the circuit near-term and locality-preserving, using only \textit{one} ancilla qubit. Since our circuit is not fully coherent, including a mid-circuit measurement by design, this tradeoff is not available to it. The result therefore shows that the $\epsilon^{-2}$ factor in the total number of raw circuit executions is optimal for the sampling model used here, while the maximum single-run circuit depth can be reduced through step-size postprocessing.
\end{rmk}

\section{Numerical Results}\label{sec:numerics}
In this numerical section, we provide numerical results supporting our theoretical findings. We test on two differential equation examples. One is the Hatano-Nelson model, an ordinary differential equation, and the other is the convection-diffusion equation, a partial differential equation. 

\subsection{Hatano-Nelson Model}\label{subsec:HatanoNelson}
We consider the Hatano-Nelson model, an important non-Hermitian lattice model in physics~\cite{NelsonHatanoPRL1996, NelsonHatanoPRL1997, NelsonHatanoPRL1998}. The defining feature of this model is non-reciprocal hopping, where the hopping amplitude from site $j$ to site $j+1$ differs from the amplitude in the reverse direction. This feature gives rise to characteristic non-Hermitian phenomena, most notably the non-Hermitian skin effect, in which eigenstates and dynamical profiles accumulate near one boundary under open boundary conditions.

In the numerical experiments, we use the nearest-neighbor interacting spinless fermionic Hatano-Nelson model
\begin{align}
    H_{\rm HN}
    =
    \sum_{j=1}^{N-1}
    \left[
    J(c^{\dagger}_{j+1}c_j+c^{\dagger}_{j}c_{j+1})
    +
    \gamma(c^{\dagger}_{j+1}c_j-c^{\dagger}_{j}c_{j+1})
    +
    V n_j n_{j+1}
    \right],
    \label{eq:numerics_hn_interacting}
\end{align}
where $c_j$ and $c_j^\dagger$ are fermionic annihilation and creation operators and $n_j=c_j^\dagger c_j$. The parameter $J$ gives the reciprocal hopping strength, $\gamma$ controls the non-reciprocal part, and $V$ is the nearest-neighbor interaction strength. This model is also considered in~\cite{FangGeorgeTong2025}, where the Jordan-Wigner transformation and the resulting circuit-level implementation of the one-ancilla ODE algorithm are described in detail. We do not repeat the circuit synthesis here.

In the numerical tests, we use the parameter values
\begin{align}
    N=5,
    \qquad
    J=1,
    \qquad
    \gamma=0.8,
    \qquad
    V=1.5,
    \qquad
    T=2.
    \label{eq:numerics_hn_parameters}
\end{align}
The initial state is the fixed-particle-number computational basis state
\begin{equation}
    \ket{\psi_0}=\ket{01010}.
    \label{eq:numerics_hn_initial_state}
\end{equation}
We consider the edge-imbalance observable
\begin{align}
O_{\rm edge}
=
\sum_{i=1}^N
\frac{N+1-2i}{N-1} n_i,
\qquad
n_i=\frac{I-Z_i}{2}.
\label{eq:edge_imbalance}
\end{align}
This observable measures the signed imbalance of the particle density between the two ends of the chain. The coefficient $(N+1-2i)/(N-1)$ is positive near the left boundary and negative near the right boundary, so a positive value indicates stronger occupation near the left edge, while a negative value indicates stronger occupation near the right edge. It is therefore a compact scalar diagnostic for the boundary accumulation associated with the non-Hermitian skin effect.
We note that the observable is also simple from the measurement perspective. It is a weighted sum of the single-site number operators $n_i$, and is diagonal in the computational basis after the Jordan-Wigner transformation. Thus it can be estimated from site occupation measurements without measuring multi-qubit off-diagonal Pauli strings.

\begin{figure}[pos=htbp]
    \centering
    \begin{minipage}[t]{0.48\linewidth}
        \centering
        \includegraphics[width=\linewidth]{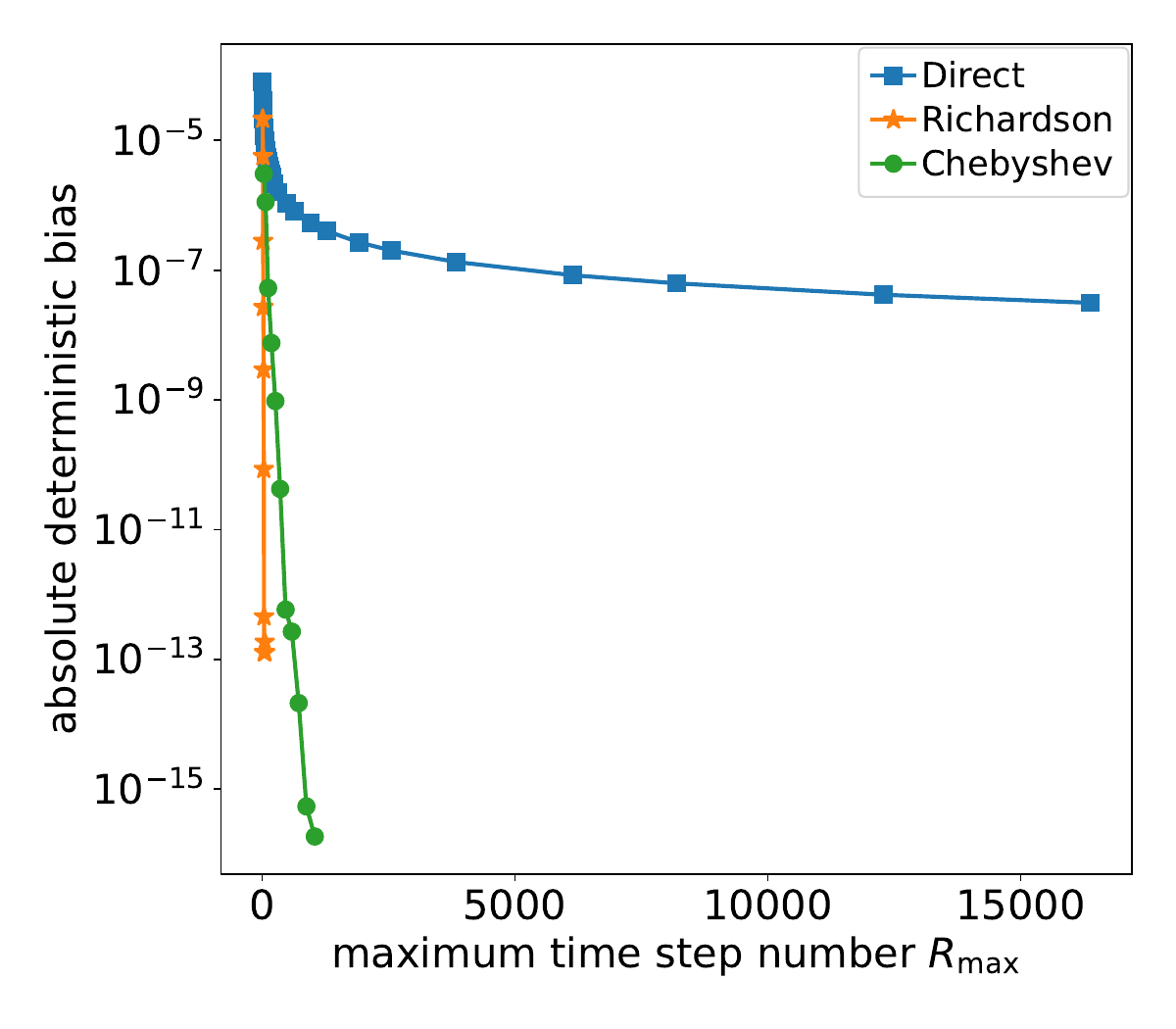}

        \smallskip
        \textbf{(a)} Bias versus $R_{\max}$.
    \end{minipage}
    \hfill
    \begin{minipage}[t]{0.48\linewidth}
        \centering
        \includegraphics[width=\linewidth]{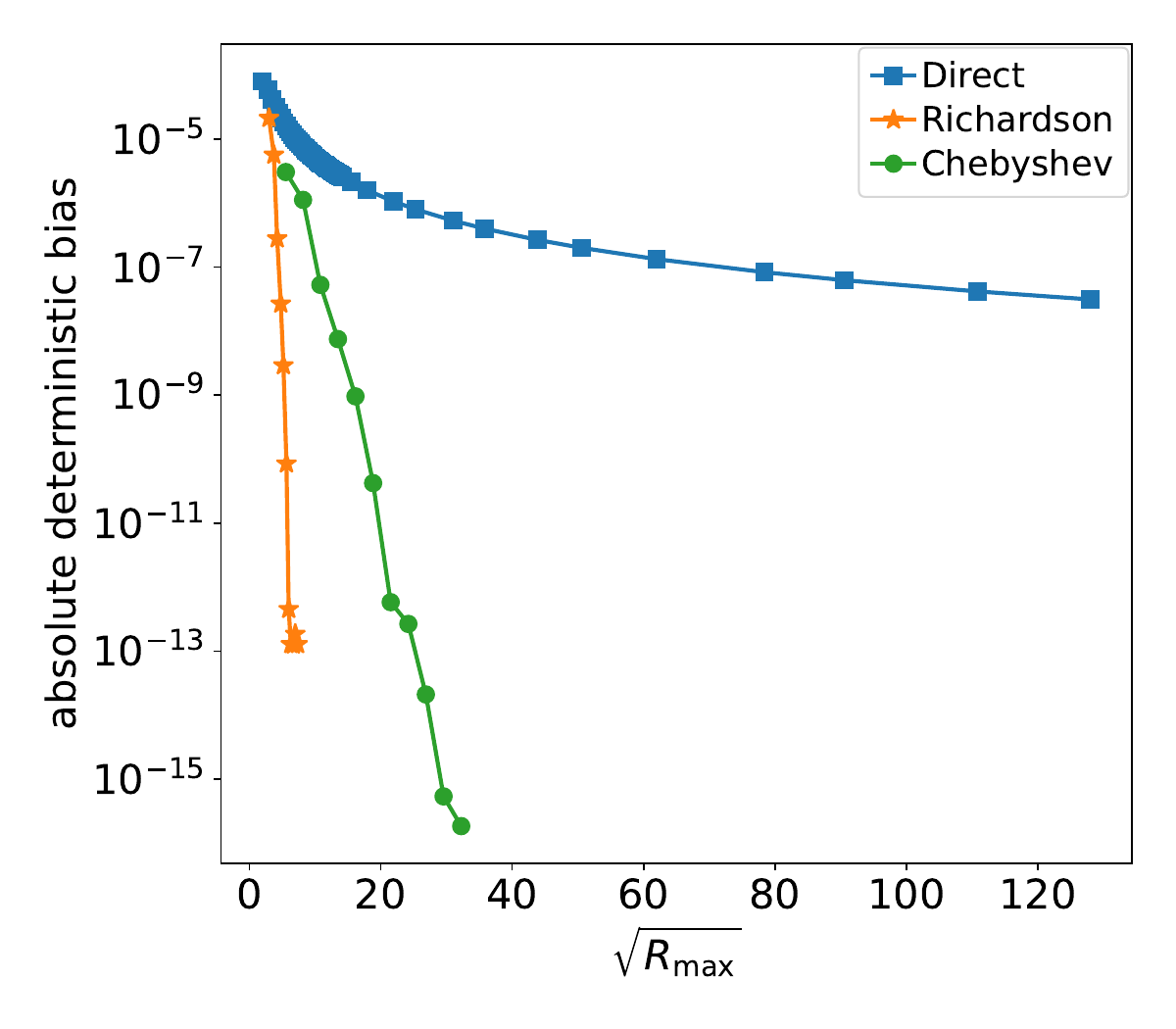}

        \smallskip
        \textbf{(b)} Bias versus $\sqrt{R_{\max}}$.
    \end{minipage}
    \caption{Hatano-Nelson model with edge observable. Both panels are semilogy plots: the y-axis is the absolute deterministic bias on a logarithmic scale, while the x-axis is linear. Panel (a) plots the data against $R_{\max}$, and panel (b) plots the same data against $\sqrt{R_{\max}}$. The direct method without the proposed postprocessing requires significantly larger circuit depth, while Richardson extrapolation and Chebyshev reduce the bias much faster.}
    \label{fig:hn-error}
\end{figure}

\cref{fig:hn-error} shows the deterministic-bias reduction at fixed data points. The $x$ axis is $R_{\max}=\max_j R_j$, the maximum number of time steps among the finite-step circuits used in the extrapolation, where $s_j=T/R_j$. The actual maximum single-run circuit depth is
\begin{equation}
D_{\max}=d_{\rm step}R_{\max},
\end{equation}
up to lower-order measurement overheads. The $y$ axis shows the absolute deterministic bias
\begin{equation}
   \left|g_O^{\rm approx}-g_O(0)\right|, 
\end{equation}
where
$g_O(0)=\bra{\psi_0}(e^{TA})^{\dagger}O(e^{TA})\ket{\psi_0}$ is the exact
zero-step reference and $g_O^{\rm approx}$ is either the direct finite-step value or the extrapolated value. 
The direct method behaves as expected for a first-order discretization: decreasing the step size improves the error slowly. Richardson and Chebyshev both reduce the bias rapidly as the number of nodes increases. This supports the derivative-bound picture that $g_O(s)$ is smooth enough near $s=0$ for polynomial extrapolation to cancel or interpolate the leading step-size error terms.

\begin{figure}[pos=htbp]
    \centering
    \includegraphics[width=0.58\linewidth]{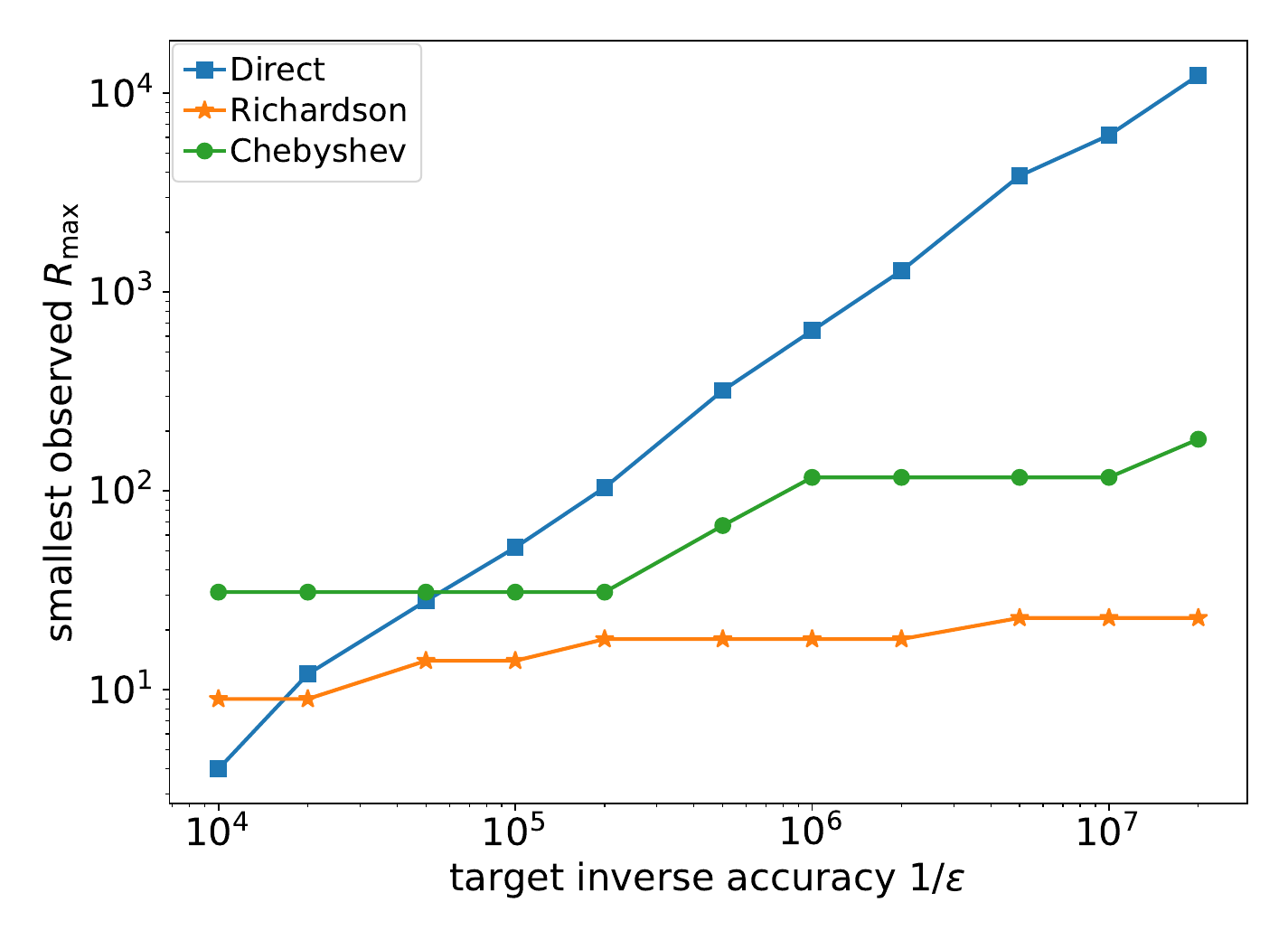}
    \caption{Hatano-Nelson edge observable. This is a log-log plot: the x-axis is the target inverse accuracy $1/\epsilon$, and the y-axis is the smallest observed maximum step number $R_{\max}$ needed to reach the target deterministic bias. The direct method follows the expected inverse-accuracy trend, while the extrapolated methods require much smaller maximum depths over the tested accuracy range. The final upward step in the Chebyshev curve is a discrete degree-selection effect.}
    \label{fig:hn-required-depth}
\end{figure}

At a higher degree, the Richardson curve can stop improving. This is expected in finite precision. If the computed scalar data are perturbed as
$
    \widetilde g(s_j)=g(s_j)+\xi_j,
$
then the extrapolated perturbation is
$
    \sum_j a_j\xi_j,
$
with bound
\begin{equation}
    \left|\sum_j a_j\xi_j\right|\le \Lambda_m\max_j |\xi_j|,
    \qquad
    \Lambda_m=\sum_j |a_j|.
\end{equation}
Richardson weights become increasingly oscillatory as $m$ grows, so a small numerical perturbation can be amplified. Chebyshev interpolation has a milder weight growth in these tests and gives smoother high-degree behavior. This is why Chebyshev is the more stable choice when finite precision, node rounding, or finite-shot noise is present.

\cref{fig:hn-required-depth} inverts the previous comparison. For each target accuracy $\epsilon$, we record the smallest observed $R_{\max}$ such that this precision is reached.
The direct curve grows quickly as $1/\epsilon$ increases, consistent with first-order accuracy. The Richardson and Chebyshev curves do not change much. In particular, they are step-like because both $m$ and $R$ are integer-valued, but they remain far below the direct curve in the high-accuracy regime. 
The last visible upward step in the Chebyshev curve is a degree-selection effect. In the Hatano-Nelson data, the $m=4$ Chebyshev interpolant has
\begin{equation}
    R_{\max}=117,
    \qquad
    E\approx 5.30\times 10^{-8}.
\end{equation}
It satisfies all targets down to $\epsilon=10^{-7}$, but it does not satisfy the stricter target $\epsilon=5\times 10^{-8}$. The selected degree then increases to $m=5$, giving
\begin{equation}
    R_{\max}=182,
    \qquad
    E\approx 7.53\times 10^{-9}.
\end{equation}
The jump $117\to 182$ is therefore not an instability. It reflects the discrete change in $m$ and the relation $R_{\max}=\Theta(m^2)$.

\subsection{Convection-Diffusion Equation}\label{subsec:convenctionDiffusion}

The second benchmark is a two-dimensional periodic convection-diffusion equation,
\begin{equation}
    \partial_t c(x,y,t)=-\nabla\cdot\left(v(x,y)c(x,y,t)\right)+\nu\Delta c(x,y,t).
\end{equation}
In the numerical example, we choose the following parameters
\begin{equation}
    v(x,y)=\alpha(\sin y,\sin x), \quad \alpha =1.2, \quad \quad \nu=\frac{1}{2}, \quad T=0.75.
\end{equation}
 The equation is discretized on a $6\times 6$ Fourier spectral grid. We note that this small-sized discretization is only meant for demonstration purposes. The initial condition is a normalized Gaussian on the torus, namely,
\begin{equation}
c_0(x,y)
=
C\exp\left(
-\frac{
d_{\rm torus}(x,\pi/2)^2+d_{\rm torus}(y,\pi/2)^2
}{2\sigma^2}
\right),
\qquad
\sigma=0.55,
\end{equation}
where $C$ is chosen so that $\|c_0\|_2=1$, and
\begin{equation}
d_{\rm torus}(a,b)
=
\min\{|a-b|,\ 2\pi-|a-b|\}.
\end{equation}
The observable is the low-Fourier-shell projector
\begin{equation}
    O_{\rm low}=\sum_{0<|k|\le 2}|k\rangle\langle k|.
\end{equation}
This observable measures the unnormalized low-frequency content of the solution.
We perform similar numerical tests to the Hatano-Nelson model for the convection-diffusion equation.

\begin{figure}[pos=htbp]
    \centering
    \begin{minipage}[t]{0.48\linewidth}
        \centering
        \includegraphics[width=\linewidth]{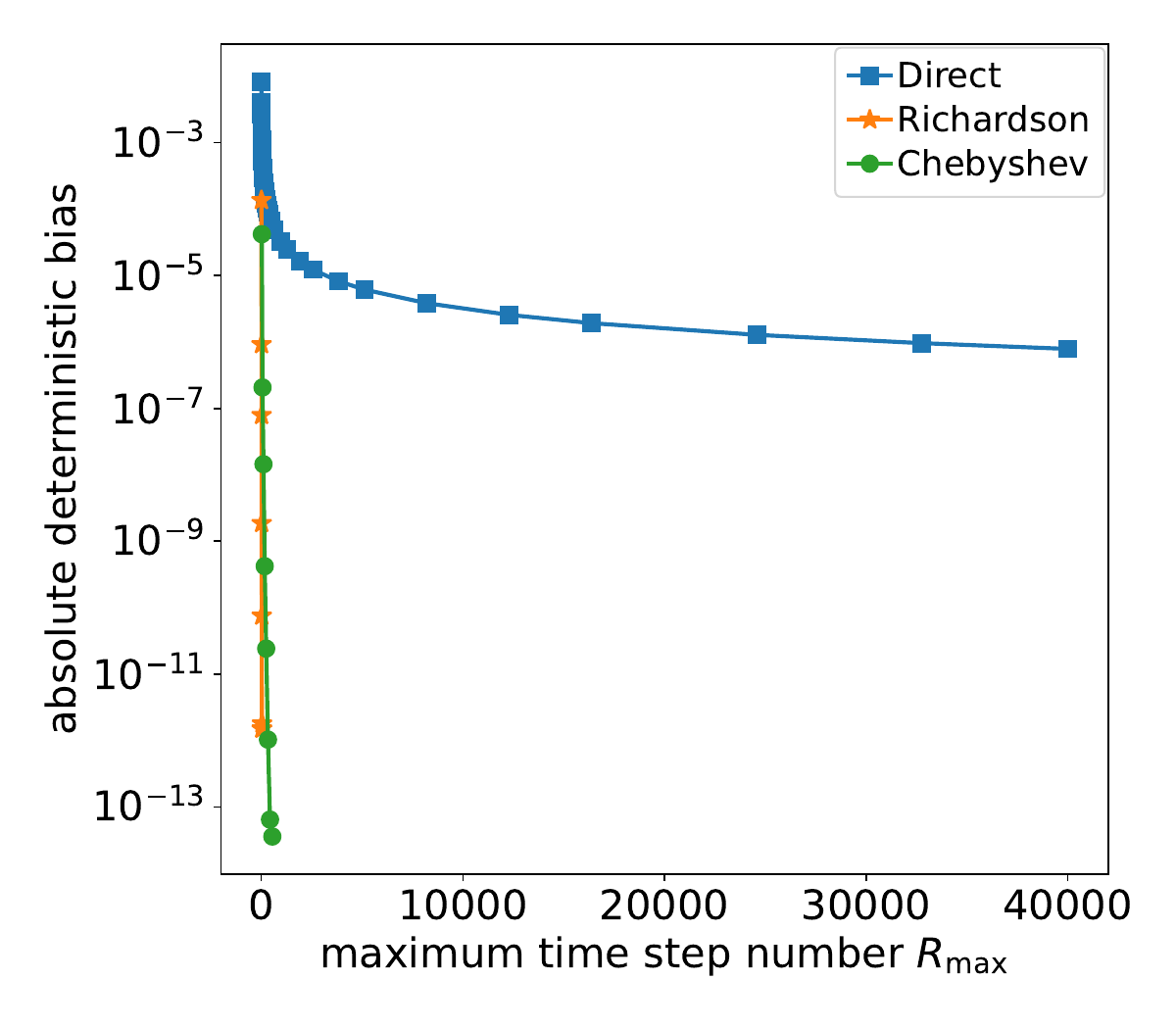}

        \smallskip
        \textbf{(a)} Bias versus $R_{\max}$.
    \end{minipage}
    \hfill
    \begin{minipage}[t]{0.48\linewidth}
        \centering
        \includegraphics[width=\linewidth]{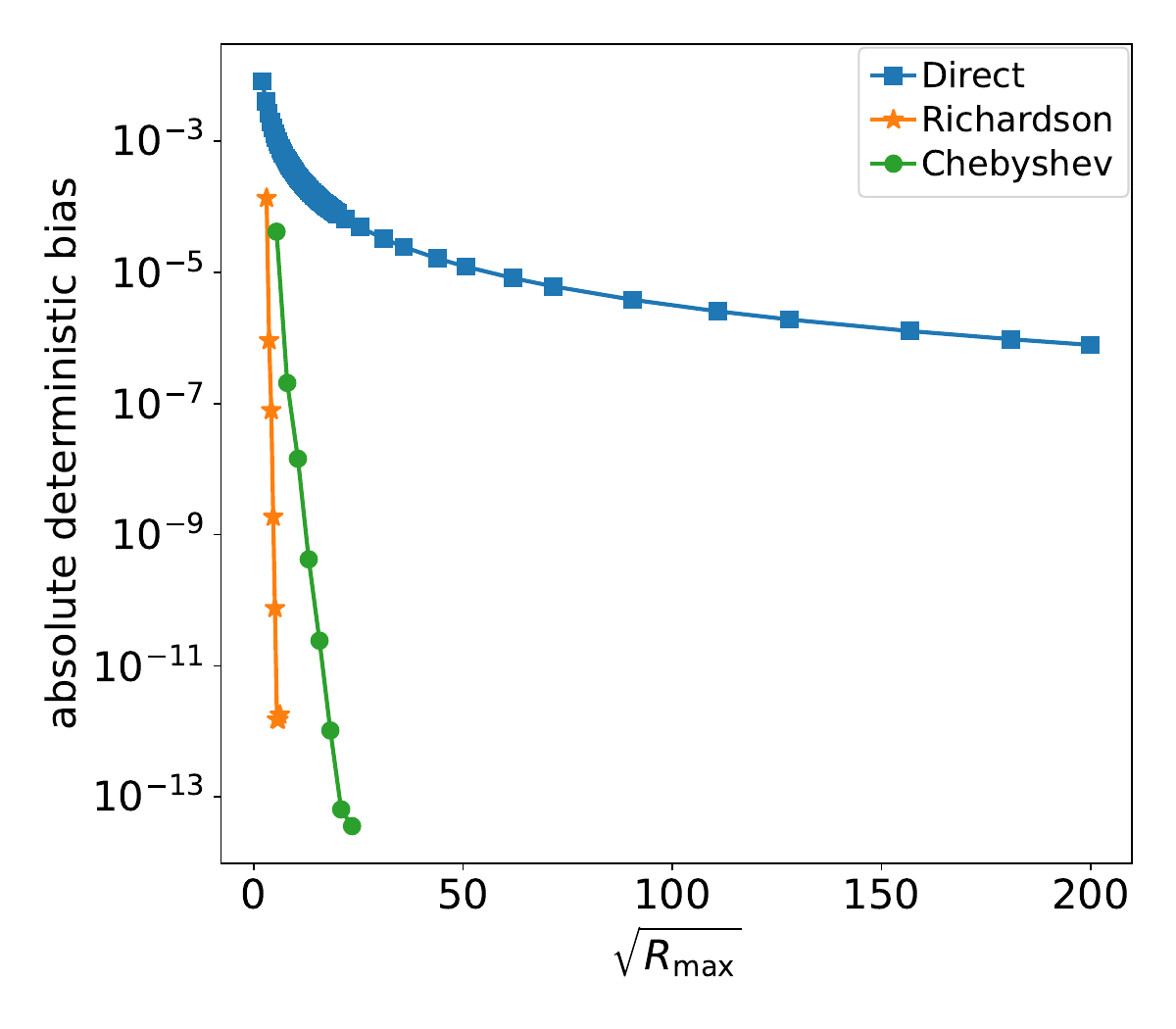}

        \smallskip
        \textbf{(b)} Bias versus $\sqrt{R_{\max}}$.
    \end{minipage}
    \caption{Convection-diffusion low-Fourier-shell observable. Both panels are semilogy plots: the y-axis is the absolute deterministic bias on a logarithmic scale, while the x-axis is linear. Panel (a) plots the data against $R_{\max}$, and panel (b) plots the same data against $\sqrt{R_{\max}}$. For Richardson and Chebyshev, the plotted degrees are $m=2,\ldots,9$. The extrapolated curves show rapid deterministic bias reduction relative to direct first-order simulation.}
    \label{fig:cd-error}
\end{figure}
\begin{figure}[pos=htbp]
    \centering
    \includegraphics[width=0.58\linewidth]{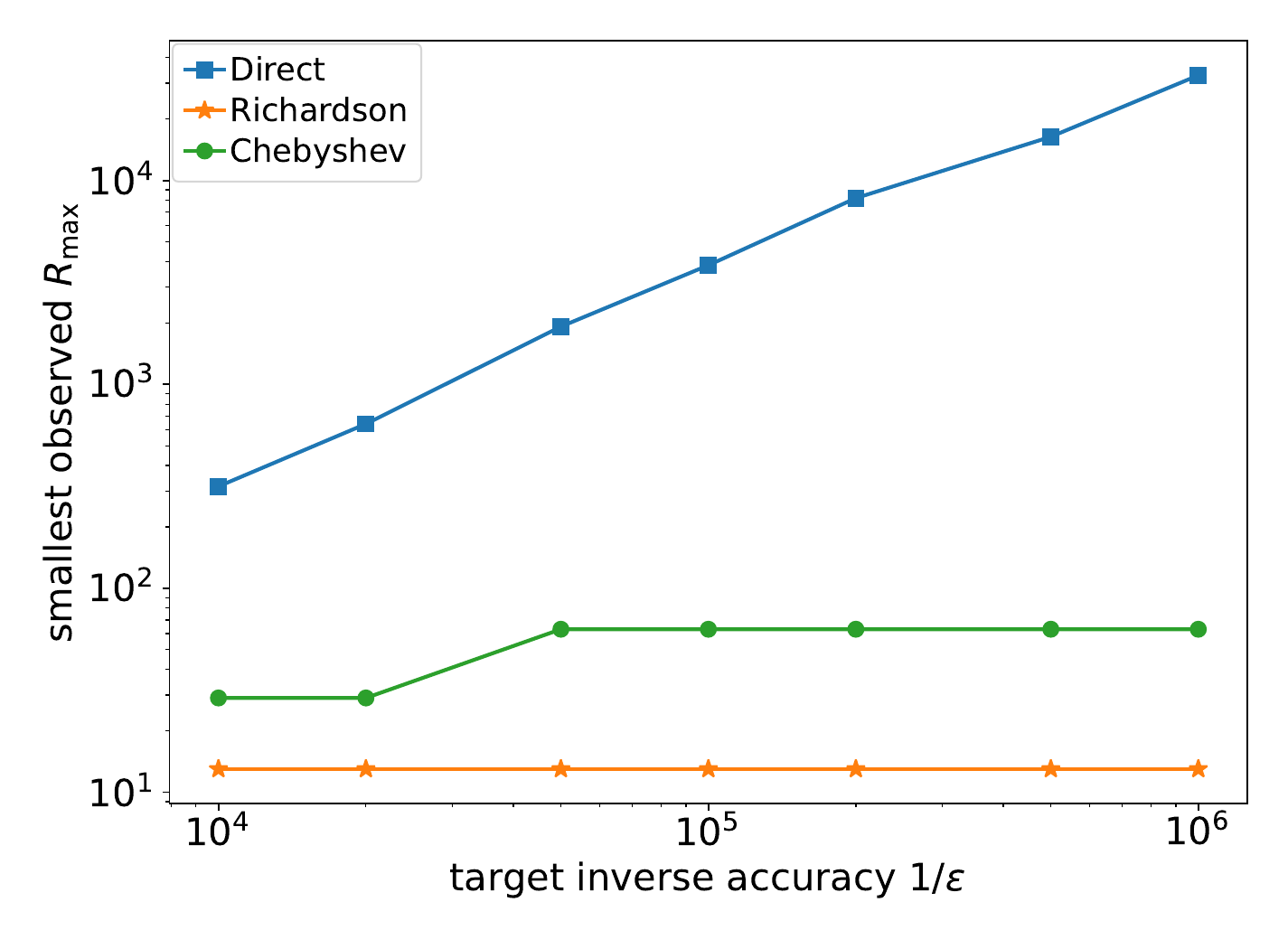}
    \caption{Convection-diffusion low-Fourier-shell observable. This is a log-log plot: the x-axis is the target inverse accuracy $1/\epsilon$, and the y-axis is the smallest observed maximum step number $R_{\max}$ needed to reach the target deterministic bias. The direct method requires rapidly increasing $R_{\max}$ as the target accuracy is tightened, while Richardson extrapolation and Chebyshev interpolation remain much lower over the tested range.}
    \label{fig:cd-required-depth}
\end{figure}

\cref{fig:cd-error} shows that the extrapolation mechanism is consistent with the theory. The convection-diffusion observable also exhibits rapid bias reduction under Richardson and Chebyshev postprocessing. The plotted extrapolated curves use $m=2,\ldots,9$. This range shows the deterministic decay regime before the high-degree numerical floor dominates the experiment. The direct method remains much larger over the same range of maximum step numbers.

\cref{fig:cd-required-depth} gives the required-depth comparison for the PDE observable. The direct method requires rapidly increasing $R_{\max}$ as the target accuracy is tightened. In the plotted range, it reaches $R_{\max}=32768$ for $\epsilon=10^{-6}$. The extrapolated methods need much smaller depth. The Chebyshev curve has a visible step because the selected degree changes discretely. In this data set, the $m=2$ Chebyshev interpolant has
\begin{equation}
    R_{\max}=29,
    \qquad
    E\approx 4.20\times 10^{-5}.
\end{equation}
It satisfies targets down to $\epsilon=5\times 10^{-5}$, but not the target $\epsilon=2\times 10^{-5}$. Increasing to $m=3$ gives
\begin{equation}
    R_{\max}=63,
    \qquad
    E\approx 2.07\times 10^{-7}.
\end{equation}
The jump $29\to 63$ is therefore a degree-selection effect. It is consistent with the Chebyshev relation $R_{\max}=\Theta(m^2)$, since the factor is close to $(3/2)^2$ up to rounding from the constraint $s_j=T/R_j$.

\section{Conclusion and Remarks}\label{sec:conclusion}

In this work, we use classical step-size postprocessing techniques to reduce the circuit depth of the one-ancilla quantum differential equation solver. We show that both Richardson extrapolation and Chebyshev interpolation can substantially reduce the maximum single-run circuit depth. Compared with directly using the finite-step solver, the maximum depth per run improves from $\Or(1/\epsilon)$ to $\Or(\polylog(1/\epsilon))$ in the target accuracy $\epsilon$, up to problem-dependent prefactors. While Richardson extrapolation exhibits a faster initial convergence rate, Chebyshev interpolation offers better numerical stability. Consequently, the preferred choice in practice depends on the target accuracy regime and the specific error and stability considerations of the application. Quantitatively, Richardson's extrapolation weights grow exponentially in the number of nodes, while Chebyshev's grow only polylogarithmically, after accounting for the discreteness of realizable step sizes. These scalings are consistent with the numerical experiments.

From a technical standpoint, a core ingredient distinguishing this work from existing literature on time-step extrapolation lies in handling the adjoint operators inherent to differential equations. 
We successfully addressed this challenge by introducing the holomorphic lift framework, bridging the gap between extrapolation techniques and general quantum ODE simulations.

This work demonstrates how classical postprocessing can be systematically combined with quantum circuits to mitigate quantum resource overheads for linear differential equations. Such a perspective may be naturally extended to other quantum differential equation solvers to achieve improved quantum estimates. The one step block underlying the present algorithm achieves only first order accuracy in the step size, and whether a post selected construction with higher order accuracy exists, compatible with the present extrapolation framework, remains open. Ultimately, validating the feasibility and robustness of this protocol on actual quantum hardware platforms remains an important and interesting future direction.

\section{Acknowledgement}
This work is supported by the U.S. Department of Energy, Office of Science, Accelerated Research in Quantum Computing Centers, Quantum Utility through Advanced Computational Quantum Algorithms, grant no. DE-SC0025572. We also acknowledge the support from the National Science Foundation CAREER Award DMS-2438074.

\appendix
\numberwithin{equation}{section}
\numberwithin{lemma}{section}
\section{Chebyshev Node Identities}
This appendix establishes the two trigonometric identities used to bound the Chebyshev node products and reciprocal sums in \cref{lem:chebyshevNodes}. Let $T_n$ denote the degree $n$ Chebyshev polynomial of the first kind, satisfying $T_n(\cos \alpha)=\cos(n\alpha)$, with leading coefficient $2^{n-1}$ for $n\ge 1$, and satisfying Chebyshev's differential equation
\begin{equation}
    (1-x^2)T_n''(x)-xT_n'(x)+n^2T_n(x)=0.
\end{equation}
Fix $m\ge 1$ and write $\theta_\ell:=\frac{(2\ell-1)\pi}{4m}$ for $\ell=1,\ldots,m$.

\begin{lemma}[Chebyshev Node Identities]\label{lem:chebIdentities}
    For every integer $m\ge 1$,
    \begin{equation}
        \prod_{\ell=1}^{m}\cos\theta_\ell=2^{-m+1/2}, \quad \sum_{\ell=1}^{m}\sec^2\theta_\ell=2m^2.
    \end{equation}
\end{lemma}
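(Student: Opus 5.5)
The plan is to treat both identities through the polynomial $T_{2m}$, whose roots are exactly $\pm\cos\theta_\ell$. Indeed, $T_{2m}(\cos\alpha)=\cos(2m\alpha)$ vanishes when $2m\alpha=(2\ell-1)\pi/2$, i.e.\ $\alpha=\theta_\ell$. As $\ell$ runs over $1,\ldots,2m$, the angles $\theta_\ell$ give $2m$ distinct roots in $(-1,1)$. By the symmetry $\cos(\pi-\theta_\ell)=-\cos\theta_\ell$, the roots for $\ell=m+1,\ldots,2m$ are the negatives of those for $\ell=1,\ldots,m$. Since $\theta_\ell\in(0,\pi/2)$ for $\ell\le m$, each $\cos\theta_\ell>0$. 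Using the leading coefficient $2^{2m-1}$, I will therefore write
\begin{equation}
    T_{2m}(x)=2^{2m-1}\prod_{\ell=1}^{m}\left(x^2-\cos^2\theta_\ell\right).
\end{equation}

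For the product identity, I evaluate this factorization at $x=0$. On one side, $T_{2m}(0)=\cos(m\pi)=(-1)^m$. On the other side, the factorization gives $2^{2m-1}(-1)^m\prod_\ell\cos^2\theta_\ell$. Hence $\prod_\ell\cos^2\theta_\ell=2^{-2m+1}$, and taking the positive square root yields $\prod_\ell\cos\theta_\ell=2^{-m+1/2}$. This is also the form actually used in \cref{lem:chebyshevNodes}.

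For the secant sum, I will work with the logarithmic derivatives of $P(y):=\prod_\ell(y-\cos^2\theta_\ell)$ at $y=0$. We have $T_{2m}(x)=2^{2m-1}P(x^2)$ and
\begin{equation}
    \sum_\ell\sec^2\theta_\ell=\sum_\ell\frac{1}{\cos^2\theta_\ell}=-\frac{P'(0)}{P(0)}.
\end{equation}
Differentiating $T_{2m}(x)=2^{2m-1}P(x^2)$ twice gives $T_{2m}''(0)=2\cdot 2^{2m-1}P'(0)$, while $T_{2m}(0)=2^{2m-1}P(0)$. So $-P'(0)/P(0)=-T_{2m}''(0)/(2T_{2m}(0))$.

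Chebyshev's differential equation with $n=2m$, evaluated at $x=0$, gives $T_{2m}''(0)=-4m^2T_{2m}(0)$. Substituting, the sum equals $4m^2/2=2m^2$. As a cross-check, $T_{2m}''(0)$ can also be computed directly by differentiating $\cos(2m\alpha)$ twice in $x=\cos\alpha$ at $\alpha=\pi/2$, where the first-derivative term vanishes because $\sin(m\pi)=0$.

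The only delicate point is the root bookkeeping in the first step. I need to confirm that the $2m$ values $\cos\theta_\ell$ for $\ell=1,\ldots,2m$ are distinct and pair up as $\pm\cos\theta_\ell$ with $\ell\le m$. This ensures the factorization in $x^2$ is exact, with no missing or repeated roots. Once that is in place, both identities reduce to evaluating $T_{2m}$ and its second derivative at zero.
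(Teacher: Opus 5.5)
Your proof is correct and matches the paper's argument: the same factorization $T_{2m}(x)=2^{2m-1}\prod_{\ell=1}^{m}(x^2-\cos^2\theta_\ell)$, evaluation at $x=0$ for the product, and Chebyshev's differential equation at $x=0$ for the secant sum. The only cosmetic difference is that you read off $\sum_\ell\sec^2\theta_\ell=-P'(0)/P(0)$ via $y=x^2$, whereas the paper differentiates $T_{2m}'/T_{2m}$ once more in $x$ and uses $T_{2m}'(0)=0$; both routes give $-T_{2m}''(0)/(2T_{2m}(0))=2m^2$.
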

\begin{proof}
    The roots of $T_{2m}$ are $\cos\alpha_k$ for $\alpha_k=\frac{(2k-1)\pi}{4m}$, since $T_{2m}(\cos\alpha)=\cos(2m\alpha)=0$ exactly at these angles. For $k=1,\ldots, m$, $\alpha_k=\theta_k\in(0,\pi/2)$. For $k=m+1,\ldots, 2m$, writing $k=2m+1-\ell$ with $\ell=1,\ldots,m$,
    \begin{equation}
        \alpha_{2m+1-\ell}=\frac{(2(2m+1-\ell)-1)\pi}{4m}=\pi-\frac{(2\ell-1)\pi}{4m}=\pi-\theta_\ell,
    \end{equation}
    so $\cos \alpha_{2m+1-\ell}=-\cos\theta_\ell$. The $2m$ roots of $T_{2m}$ therefore split into $m$ pairs $\{\cos\theta_\ell,-\cos\theta_\ell\}_{\ell=1}^{m}$, and $T_{2m}$ is even, with
    \begin{equation}
        T_{2m}(x)=2^{2m-1}\prod_{\ell=1}^{m}(x-\cos\theta_\ell)(x+\cos\theta_\ell)=2^{2m-1}\prod_{\ell=1}^{m}(x^2-\cos^2\theta_\ell).
    \end{equation}

    \textbf{Product of Nodes.} Setting $x=0$,
    \begin{equation}
        T_{2m}(0)=2^{2m-1}\prod_{\ell=1}^{m}(-\cos^2\theta_\ell)=2^{2m-1}(-1)^m\prod_{\ell=1}^{m}\cos^2\theta_\ell.
    \end{equation}
    On the other hand, $T_{2m}(0)=\cos(2m\cdot \frac{\pi}{2})=\cos(m\pi)=(-1)^m$. Equating the two expressions and canceling $(-1)^m$ on both sides gives
    \begin{equation}
        \prod_{\ell=1}^{m}\cos^2\theta_\ell=2^{-2m+1}.
    \end{equation}
    Since each $\theta_\ell\in(0,\pi/2)$, every $\cos\theta_\ell>0$, so taking the positive square root gives $\prod_\ell\cos\theta_\ell=2^{-m+1/2}$.

    \textbf{Sum of Reciprocal Squared Nodes.} Taking the logarithmic derivative of $T_{2m}$ gives
    \begin{equation}
        \frac{T_{2m}'}{T_{2m}}=\sum_{\ell=1}^{m}\frac{2x}{x^2-\cos^2\theta_\ell}.
    \end{equation}
    Differentiating once more,
    \begin{equation}
        \left(\frac{T_{2m}'}{T_{2m}}\right)'=\sum_{\ell=1}^{m}\frac{-2(x^2+\cos^2\theta_\ell)}{(x^2-\cos^2\theta_\ell)^2}.
    \end{equation}
    At $x=0$, this becomes $(\frac{T_{2m}'}{T_{2m}})'(0)=-2\sum_{\ell=1}^{m}\frac{1}{\cos^2\theta_\ell}=-2\sum_{\ell=1}^{m}\sec^2\theta_\ell$. Since $T_{2m}$ is even, $T_{2m}'(0)=0$, and therefore, 
    \begin{equation}
        \left(\frac{T_{2m}'}{T_{2m}}\right)'(0)=\frac{T_{2m}(0)T_{2m}''(0)-T_{2m}'(0)T_{2m}'(0)}{T_{2m}^2(0)}=\frac{T_{2m}''(0)}{T_{2m}(0)}.
    \end{equation}
    Evaluating Chebyshev's differential equation for $T_{2m}$ at $x=0$,
    \begin{equation}
        T_{2m}''(0)+4m^2T_{2m}(0)=0.
    \end{equation}
    This implies $-4m^2=\frac{T_{2m}''(0)}{T_{2m}(0)}=(\frac{T_{2m}'}{T_{2m}})'(0)=-2\sum_{\ell=1}^{m}\sec^2\theta_\ell$, so $\sum_{\ell=1}^{m}\sec^2\theta_\ell=2m^2$.
\end{proof}

\bibliographystyle{unsrt}

\bibliography{ref_all}

\end{document}